\documentclass{article}
\date{}

\usepackage{bbm}
\usepackage{graphicx,tikz, color}
\usepackage{amsmath,amsfonts,amssymb,amsthm,latexsym}
\usepackage{enumerate}
\usepackage{epsfig}
\usepackage[utf8]{inputenc}

\theoremstyle{definition}
\newtheorem{theorem}{Theorem}[section]
\newtheorem{definition}[theorem]{Definition}

\newtheorem{corollary}[theorem]{Corollary}
\newtheorem{remark}[theorem]{Remark}
\newtheorem{lemma}[theorem]{Lemma}
\newtheorem{claim}[theorem]{Claim}

\newtheorem{example}[theorem]{Example}
\newtheorem{proposition}[theorem]{Proposition}

\newenvironment{claimproof}[1]{{\it\noindent{Proof.}}\space#1}{\footnotesize \hfill \ensuremath{(\square)} \medskip}

\def\Zmap{\mu}
\let\tm=\times
\let\sse=\subseteq
\let\ov=\overline
\def\zd{,\ldots,}
\def\tms{\tm\dots\tm}
\def\vc#1#2{#1_1\zd #1_{#2}}
\let\sg=\sigma

\def\bA{{\textbf A}}
\def\bB{{\textbf B}}
\def\bC{{\textbf C}}
\def\cB{\mathcal{B}}

\def\bF{\textbf{F}}
\def\bG{\textbf{G}}

\def\cH{\mathcal{H}}
\def\cM{\mathcal{M}}
\def\cT{\mathcal{T}}
\def\cS{\mathcal{S}}
\def\zA{\mathbb{A}}
\def\zB{\mathbb{B}}
\def\zZ{\mathbb{Z}}
\def\zR{\mathbb{R}}

\def\zG{\mathbb{G}}
\def\rel{R}
\def\relo{Q}

\def\GGB{G(\cH^\bB)}
\def\GA{\cH^\bA}
\def\GB{\cH^\bB}

\DeclareMathOperator{\CSP}{CSP}
\DeclareMathOperator{\PCSP}{PCSP}
\DeclareMathOperator{\Pol}{Pol}
\DeclareMathOperator{\Hom}{Hom}
\DeclareMathOperator{\img}{img}
\DeclareMathOperator{\sel}{sel}

\title{Discrete Homotopy and Promise Constraint Satisfaction Problem}
\author{Arash Beikmohammadi and Andrei A. Bulatov\thanks{The second author is supported by an NSERC Discovery grant}}{ }{}{}{} 

\begin{document}

\maketitle

\begin{abstract}
The Promise Constraint Satisfaction Problem (PCSP for short) is a generalization of the well-studied Constraint Satisfaction Problem (CSP). The PCSP has its roots in such classic problems as the Approximate Graph Coloring and the $(1+\varepsilon)$-Satisfiability problems. The area received much attention recently with multiple approaches developed to design efficient algorithms for restricted versions of the PCSP, and to prove its hardness.

One such approach uses methods from Algebraic Topology to relate the complexity of the PCSP to the structure of the fundamental group of certain topological spaces. In this paper, we attempt to develop a discrete analog of this approach by replacing topological structures with combinatorial constructions and some basic group-theoretic concepts. We consider the `one-dimensional' case of the approach. We introduce and prove the basics of the framework, show how it is related to the complexity of the PCSP, and obtain several hardness results, including the existing ones, as applications of our approach. The main hope, however, is that this discrete variant of the topological approach can be generalized to the `multi-dimensional' case needed for further progress.
\end{abstract}


\section{Introduction}\label{sec:intro}

The Constraint Satisfaction Problem (CSP for short), where the goal is to decide the existence of a homomorphism between relational structures, has been a very successful research area over the last several decades. Developments in the study of the CSP prompted attention to generalizations and variations of the problem. One of such important generalizations is the Promise CSP (PCSP) introduced in \cite{Aus2017}, in which for three relational structures $\bA,\bB,\bC$ such that $\bA$ is homomorphic to $\bB$, the goal is to decide whether $\bC$ is homomorphic to $\bA$ or is \emph{not} homomorphic to $\bB$. The PCSP has received much attention in recent years.

One of the main special cases that motivated the framework of the PCSP is the \emph{Approximate Graph Coloring Problem} (AGCP): Given a graph $G$, decide whether a graph is $k$-colorable or not even $\ell$-colorable for natural numbers $k, \ell$ with $k \leq \ell$. Overall, the belief is that like the $k$-Coloring problem, the AGCP is hard for almost all combinations of $k$ and $\ell$. The AGCP has been intensively studied even before the introduction of the PCSP, see, e.g., \cite{Gar1976, Kha2000, Gur2004, Hua2013,Bra2016,Din2006, Gur2020}, although only partial results had been obtained. An instance of the AGCP can be viewed as an instance $(K_k,K_\ell,G)$ of the PCSP, and therefore can also be tackled by the novel approaches developed for the PCSP. This resulted in much improved conditions on $k,\ell$ that guarantee hardness \cite{Bul2019,Wro2020}. The AGCP can be made more general by replacing $K_k$ and $K_\ell$ in its formulation with arbitrary graphs $G$ and $H$ (provided $G$ is homomorphic to $H$). It was recently proved in \cite{Avvakumov25} that the problem is NP-hard for any pair of non-bipartite 4-colorable graphs $G,H$. The research is ongoing.

When the target structures $\bA,\bB$ are fixed, the restricted version of the PCSP is denoted $\PCSP(\bA,\bB)$. Following the success of the program to classify the complexity of restricted versions of the CSP \cite{Bul2017,Zhu2020}, the principal goal of the research in PCSP is to find a complexity classification for problems of the form $\PCSP(\bA,\bB)$.
While the PCSP research borrows many ideas from the CSP, it appears that it requires a much more diverse variety of methods. This includes algorithms. While the majority of CSP algorithms are purely algebraic, the PCSP often requires LP- and SDP-based relaxations \cite{Bra2018,Bra2021,Ciardo23:CLAP,Ciardo23:semidefinite,Brakensiek23:SDP}. Proving hardness of PCSPs has also proved to be hard. This is witnessed by the AGCP that in spite of a range of tools thrown at it remains widely open~\cite{ciardo2023approximate,Ciardo23:semidefinite}. The PCSP also has a number of interesting structural features, see, e.g., \cite{Barto19:promises}. For a recent survey of the field, see \cite{Kro2022}.

In this paper we attempt to advance the methods of proving the hardness of the PCSP.

\smallskip
\noindent\textbf{Algebraic approach.}
One of the tools widely used to study the CSP and also applied for the PCSP is the algebraic approach, \cite{Bul2005,Barto21:algebraic}. This approach is based on the concept of a \emph{polymorphism} of a pair of relational structures $(\bA,\bB)$ such that $\bA$ is homomorphic to $\bB$ (aka a \emph{PCSP-template}). A polymorphism of $(\bA,\bB)$ is a homomorphism $f:\bA^n\to\bB$, where $\bA^n$ denotes the $n$-th direct power of $\bA$. It is also often convenient to think of $f$ as a function $f(\vc xn)$ with $n$ input variables taking values from $\bA$. The set of all polymorphisms of $(\bA,\bB)$ is denoted $\Pol(\bA,\bB)$ and is a \emph{minion}, that is, a set of functions closed under renaming of variables. For two PCSP-templates $(\bA_1,\bB_1),(\bA_2,\bB_2)$ a mapping from $\Pol(\bA_1,\bB_1)$ to $\Pol(\bA_2,\bB_2)$ that preserves the arity of the polymorphisms and renaming the variables is called a \emph{minion homomorphism}. Minion homomorphisms are very important in the study of the complexity of the PCSP, because, as is proved in \cite{Barto21:algebraic}, if there exists a minion homomorphism from $\Pol(\bA_1,\bB_1)$ to $\Pol(\bA_2,\bB_2)$ then $\PCSP(\bA_2,\bB_2)$ is polynomial time reducible to $\PCSP(\bA_1,\bB_1)$.

One of the very few tools for showing the hardness of a PCSP is the theorem proved in \cite{Barto21:algebraic} stating that if $\Pol(\bA,\bB)$ has \emph{bounded essential arity}, i.e., every function from $\Pol(\bA,\bB)$ only depend on a fixed number of variables and $\Pol(\bA, \bB)$ does not have a constant function then $\PCSP(\bA,\bB)$ is NP-complete. This implies that if there is a minion homomorphism from $\Pol(\bA_1,\bB_1)$ to $\Pol(\bA_2,\bB_2)$ and $\Pol(\bA_2,\bB_2)$ has bounded essential arity and does not have a constant function, then $\PCSP(\bA_1,\bB_1)$ is also NP-complete. This method was  expanded to abstract minors in the place of $\Pol(\bA_2,\bB_2)$ \cite{Brakensiek20:power,Ciardo23:CLAP,Ciardo22:Sherali} and further strengthened in~\cite{Zin21}.


\smallskip
\noindent\textbf{Topology.}
One of the most interesting ways to apply the approach outlined above is through topology. It was first introduced in~\cite{Kro2019,Kro2020} and then further developed in~\cite{Wro2020}. As this method has been our motivation, we describe the main steps.\\[2mm]
\textit{Step 1. Simplicial complexes.}
A \emph{simplicial complex} is a downward-closed family $K$ of non-empty sets, called the \emph{faces} of $K$. The elements of the faces are \emph{vertices} of $K$, the set of all vertices is denoted by $V(K)$. A \emph{$\mathbb Z_2$-simplicial complex} is a simplicial complex endowed with a permutation $-$ of order 2 on $V(K)$ such that for every face $\sg$ the set $-\sg$ is also a face. If $G$ is a graph, the set of vertices of the corresponding simplicial complex $\Hom(K_2,G)$ (in the notation of \cite{Kro2020}) is the set of edges of $G$, and the faces are   (the edge sets of) complete bipartite subgraphs of $G$. 

Polymorphisms of a $\zZ_2$-simplicial complex $K$ to a $\zZ_2$-simplicial complex $K'$ are defined to be mappings $V(K)^n\to K'$ that map a Cartesian product of faces to a face and preserve the action of $-$. The set of all polymorphisms from $K$ to $K'$ is a minion denoted $\Pol(K,K')$. The crucial fact is that for any graphs $G,H$ there exists a minion homomorphism from $\Pol(H,G)$ to $\Pol(\Hom(K_2,H),\Hom(K_2,G))$. \\[2mm]
\textit{Step 2. Topological spaces.}
In the next step simplicial complexes are converted into topological spaces. A $\zZ_2$-topological space is a topological space $\mathcal X$ with a continuous mapping $-$ on it such that $-(-v)=v$ for every $v\in\mathcal X$. If, for a simplicial complex $K$, $V(K)=\{\vc vn,-v_1,\dots -v_n\}$, then the \emph{geometric realization} $|K|$ of $K$ is a subset of $\zR^n$ constructed as follows. Every $\pm v_i$ is represented by a vector $(0,\dots,0,\pm1,0,\dots,0)$, where the 1 is in the $i$-th position. For every face $\sg$ of $K$, the convex hull of the vectors corresponding to the vertices of $\sg$ is a set in $|K|$. Finally, $-$ acts on $|K|$ by flipping the signs of the vectors.

It is again possible to define polymorphisms from a $\zZ_2$-space $\mathcal X$ to a $\zZ_2$-space $\mathcal Y$ as a continuous mapping from $\mathcal X^n$ to $\mathcal{Y}$ preserving the action of $-$. And again the set of all polymorphisms $\Pol(\mathcal{X},\mathcal{Y})$ is a minion. This time it is not  true that for two simplicial complexes $K,K'$ there is a minion homomorphism from $\Pol(K,K')$ to $\Pol(|K|,|K'|)$, but a similar property holds: polymorphisms from $\Pol(|K|,|K'|)$ have to be treated up to homotopy. \\[2mm]
%
%
\textit{Step 3. Fundamental groups.}
Fundamental groups of topological spaces are an indispensable tool in algebraic topology, \cite{All2001}. The fundamental groups of some spaces such as the 1-dimensional sphere $\cS^1$, that is, a circle, are known. The fundamental group $\pi_1(\cS^1)$ is isomorphic to $\zZ$ that can also be viewed as a free group with one generator. 

For groups $G_1,G_2$ a polymorphism $f$ from $G_1$ to $G_2$ is a mapping $f:G_1^n\to G_2$ that commutes with the group operation, that is, $f(a_1\cdot b_1,\ldots, a_n\cdot b_n)=f(\vc an)\cdot f(\vc bn)$. The set $\Pol(G_1,G_2)$ of all polymorphisms from $G_1$ to $G_2$ is a minion, and for any spaces $\mathcal{X},\mathcal{Y}$ there exists a minion homomorphism from $\Pol(\mathcal{X},\mathcal{Y})$ to $\Pol(\pi_1(\mathcal{X}),\pi_1(\mathcal{Y}))$.\\[2mm]
\indent
Starting from graphs $H,G$ by composing all the minion homomorphisms in the three steps one obtains a homomorphism from $\Pol(H,G)$ to \linebreak $\Pol(\pi_1(|\Hom(K_2,H)|),\pi_1(|\Hom(K_2,G)|))$. If the graphs $H,G$ satisfy some extra conditions, the minion of group polymorphisms is isomorphic to $\Pol(\zZ,\zZ)$. The latter minion satisfies the condition of bounded essential arity, and therefore one concludes that $\PCSP(H,G)$ is NP-complete.

\smallskip
\noindent\textbf{Our results.}
In this paper we attempt to develop a discrete analog of the topological approach outlined above by replacing topological structures with combinatorial constructions and some basic group-theoretic concepts. Firstly, we try to isolate the essential components of the approach in order to make as general as possible, in particular by extending the technique to relational structures that have more than one predicate and the predicates are not necessarily binary, making the connection between relational structures and simplicial complexes less rigid, and by relaxing the requirements on $\zZ_2$-complexes. Some elements of these have occurred in the literature, see, e.g.\ \cite{Filakovsky23:hardness}, however, in a fairly rudimentary way. Secondly, we try to make the approach more elementary by removing several stages from it, and effectively minimizing the use of topology so that it is more accessible for those without a background in topology (ourselves included).

First, let  $\cH$ a simplicial complex. We follow \cite[Chapter 4]{Singer1967} to introduce edge-path groups of $\cH$. The set of vertices of $\cH$ is converted into a graph $G(\cH)$, in which two tuples are connected with an edge if they belong to a same face of $\cH$. Fixing a tuple $t_1$ we consider \emph{cycles} in $G(\cH)$ starting at $t$ that are essentially closed walks in that graph. Let $C=\vc tm,t_1$, $t_1=t$, be a cycle and consider the following transformations of $C$: removing the vertex $t_i$ if $t_{i-1},t_i,t_{i+1}$ all belong to the same face, and its reverse transformation. Two cycles are \emph{homotopic} if they can be obtained from each other by a sequence of transformations. The set of all cycles homotopic to the cycle $C$ is denoted by $[C]$ and called the \emph{homotopy class} of $C$. The set of all homotopy classes with the operation of concatenation of cycles forms a group denoted $\pi_1(\cH, t_1)$ and called the \emph{edge-path group of $\cH$ at $t_1$}. 

Another group related to $\cH$ is constructed as follows: Let $\cT$ be a spanning tree of the connected component $D$ of $G(\cH)$ containing $t_1$. By $\zG^{\cH,\cT}$ we denote the finitely presented group with the edges from $D$ as generators and a set of relations defined by the faces of $\cH$, see Section~\ref{sec:homotopic}. Then by \cite[p.\ 134, Theorem 16]{spanier1989algebraic}, \cite[Chapter 4]{Singer1967} $\zG^{\cH,\cT}$ and $\pi_1(\cH, t_1)$ are isomorphic, and thus $\pi_1(\cH, t_1)$ is a finitely presented group.

In the next step we replace the last step of  the approach from \cite{Kro2020} with the following statement\footnote{While all the current application of Theorem~\ref{the:free-group-intro} use the case $m=1$, we prove it in as general form as possible. In Theorem~\ref{free group bounded essential arity} we prove a more general statement.}.

\begin{theorem} \label{the:free-group-intro}
Let $(\bA,\bB)$ be a PCSP-template and $\mathcal{F}$ a free group with finitely many generators. If there exists a minion homomorphism $\varphi: \Pol(\bA,\bB) \rightarrow \Pol(\mathbb Z^m, \mathcal{F})$, $m\in\mathbb N$, then there exists a number $N \in \mathbb{N}$ such that for any $f \in \Pol(\bA,\bB)$, the essential arity of $\varphi(f)$ is bounded by $N$.
\end{theorem}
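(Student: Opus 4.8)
The plan is to analyze the structure of polymorphisms from $\zZ^m$ to a free group $\cF$ and show that every such polymorphism depends on only boundedly many coordinates, where the bound depends only on $\cF$ (not on the polymorphism). Since $\Pol(\bA,\bB)$ itself may contain polymorphisms of unbounded arity, the key point is \emph{not} that the domain minion is small, but that the codomain minion $\Pol(\zZ^m,\cF)$ has bounded essential arity; then $\varphi(f)$, being an element of this minion, automatically inherits the bound $N$, and the minion-homomorphism hypothesis is only used to land us in the right place. So the heart of the matter is: \emph{prove that $\Pol(\zZ^m,\cF)$ has bounded essential arity} for $\cF$ a finitely generated free group.

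First I would unpack what an $n$-ary polymorphism $g\colon (\zZ^m)^n\to\cF$ is: it is a group homomorphism from the free abelian group $(\zZ^m)^n\cong\zZ^{mn}$ to $\cF$. Since $\zZ^{mn}$ is generated by the $mn$ standard basis vectors, $g$ is determined by the images of these generators; moreover, because the domain is abelian, the image $g(\zZ^{mn})$ is an abelian subgroup of $\cF$. Here is where the structure of free groups enters decisively: \emph{every abelian subgroup of a free group is infinite cyclic} (or trivial) — this is a classical consequence of the Nielsen–Schreier theorem, since any subgroup of a free group is free, and a free group is abelian only if it has rank at most $1$. Hence $g$ factors as $\zZ^{mn}\twoheadrightarrow\zZ^{mn}\xrightarrow{\ \bar g\ }\zZ\hookrightarrow\cF$, where the middle map $\bar g$ is an integer linear functional $(a_1,\dots,a_{mn})\mapsto\sum c_i a_i$ for some integers $c_i$, followed by inclusion of a cyclic subgroup $\langle w\rangle\le\cF$.

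The next step is to bound the essential arity. A coordinate (one of the $n$ blocks of $m$ variables) is inessential for $g$ precisely when $\bar g$ vanishes on the corresponding $\zZ^m$-block, i.e.\ when the $m$ integers $c_i$ in that block are all zero. So the essential arity of $g$ equals the number of blocks on which the linear functional $\bar g$ is nonzero. A priori this can be as large as $n$, so a naive argument does \emph{not} give a uniform bound — and this is the main obstacle. The resolution must come from the fact that $g$ is not an arbitrary linear functional but one arising from a minion homomorphism out of $\Pol(\bA,\bB)$: the minion $\Pol(\bA,\bB)$ satisfies certain identities (it is closed under identification of variables, and for a \emph{fixed} template there are only finitely many "shapes" of small polymorphisms, or one exploits that $\bA$ has a finite domain so that, e.g., the $|A|$-ary "projection-like" constraints limit which linear functionals can appear), and a minion homomorphism must respect all of these. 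Concretely, I expect to use that in $\Pol(\bA,\bB)$ any $n$-ary polymorphism can be obtained by identifying variables of polymorphisms of arity at most some function of $|A|$, or that repeated diagonal substitutions force the coefficients $c_i$ to be drawn from a bounded set and to have bounded support; transporting such an identity through $\varphi$ constrains $\bar{g}$. The technical work — and the step I expect to be hardest — is pinning down exactly which minion identities survive the passage to $\Pol(\zZ^m,\cF)$ and extracting from them the uniform bound $N$; once that is done, the general case $m>1$ and the more refined Theorem~\ref{free group bounded essential arity} should follow by the same bookkeeping with $m$-fold blocks in place of single coordinates.
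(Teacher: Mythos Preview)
Your setup is correct and matches the paper: an $n$-ary polymorphism $g\in\Pol(\zZ^m,\cF)$ is a homomorphism $\zZ^{mn}\to\cF$, its image is abelian, and by Nielsen--Schreier every abelian subgroup of $\cF$ is cyclic, so $g$ factors through a linear functional $\zZ^{mn}\to\zZ$ followed by $k\mapsto w^k$ for some $w\in\cF$. You also correctly spot the obstacle: the functional $(x_1,\dots,x_{mn})\mapsto \sum c_ix_i$ can have arbitrarily many nonzero $c_i$, so $\Pol(\zZ^m,\cF)$ does \emph{not} have bounded essential arity as a minion---your opening sentence claiming it does is false, and you later acknowledge this.

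The gap is in your proposed resolution. You speculate that ``any $n$-ary polymorphism can be obtained by identifying variables of polymorphisms of arity at most some function of $|A|$'', but this points in the wrong direction. The paper's argument instead exploits finiteness \emph{downward}: since $A,B$ are finite, the set $\Pol(\bA,\bB)^{(1)}$ of unary polymorphisms is finite. For each unary $h$, the image $\varphi(h)(\zZ^m)$ lies in one of finitely many maximal cyclic subgroups of $\cF$; collecting these gives a finite family $\cB$ of cyclic subgroups with fixed generators. The key claim is that for \emph{every} $f$ (of any arity), the image of $\varphi(f)$ lands in some member of $\cB$: indeed, the unary minor $g$ of $f$ has $\varphi(g)(\zZ^m)\subseteq\varphi(f)(\zZ^{mn})$, so any maximal abelian subgroup containing the latter contains the former and hence lies in $\cB$. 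This pins down the generator $w$ up to finitely many choices.

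What remains is to bound $\sum|c_i|$, and this is the second idea you are missing. The paper defines a map $\pi:[n]\to[2^{m}]$ (for the case of a single generator; $2^{rk}$ in general) that groups coordinates by the \emph{sign pattern} of their coefficient vectors. Because coordinates in the same group have coefficients of the same sign, the coefficients of the minor $\varphi(f)^\pi=\varphi(f^\pi)$ satisfy $|c'_j|=\sum_{\pi(i)=j}|c_i|$, i.e.\ the $\ell^1$-norm is preserved. But $f^\pi$ has arity at most $2^m$, and there are only finitely many polymorphisms of that arity, so $\sum|c'_j|$ is bounded by some $N$ depending only on the template; hence $\sum|c_i|\le N$ as well. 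Neither the finiteness-of-unaries step nor the sign-grouping reduction appears in your sketch, and without them the argument does not close.
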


In order to apply Theorem~\ref{the:free-group-intro} we find a free group $\mathcal{F}$ and identify sufficient conditions for the existence of a nontrivial minion homomorphism $\varphi$. Let $(\bA,\bB)$ be a PCSP-template, $\rel^\bA,\rel^\bB$ a pair of relations primitive positive definable in $\bA,\bB$ by the same primitive positive formula, and $\cH^\bA,\cH^\bB$ a pair of simplicial complexes with vertices from $\rel^\bA,\rel^\bB$, whose faces are preserved by polymorphisms from $\Pol(\bA,\bB)$. The following statement in some cases gives us a free group.


\begin{lemma} \label{lem: conditions on A to be a free group-intro}
The group $\pi_1(\cH^\bB, t_1)$ is free if any two maximal faces of $\cH^\bB$ have at most one common element.
\end{lemma}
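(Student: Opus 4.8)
The plan is to show that the edge-path group $\pi_1(\cH^\bB, t_1)$ is isomorphic to the finitely presented group $\zG^{\cH^\bB,\cT}$ (by the cited isomorphism), and then to argue that, under the stated hypothesis, this presentation has no relations that identify distinct generators in a way that could create torsion or other non-free structure---in fact, that $\cH^\bB$ is homotopy-equivalent (in the edge-path sense) to a wedge of cycles, whose edge-path group is free. The key observation is that the faces of $\cH^\bB$ contribute the relators of $\zG^{\cH^\bB,\cT}$, and a face on vertices $s_1,\dots,s_k$ forces all the edges inside it to be expressible in terms of a spanning tree of that face (equivalently, it ``fills in'' all triangles of that face). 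So the condition that any two maximal faces share at most one vertex controls how these fillings overlap.

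First I would recall the presentation: fix the spanning tree $\cT$ of the component $D$; the generators of $\zG^{\cH^\bB,\cT}$ are the edges of $D$, with the tree edges set to the identity, and the relators come from the faces---for each face and each choice of three of its vertices $a,b,c$, the product of the edges $ab$, $bc$, $ca$ (read appropriately) is trivial. Equivalently, inside each maximal face $F$, picking a spanning tree $\cT_F$ of $F$, every non-tree edge of $F$ becomes equal (in the group) to the unique tree path, so $F$ collapses to a point and contributes nothing new except possibly one loop per non-tree edge that is ``pinned down''. Next I would set up the combinatorial deformation retraction: process the maximal faces one at a time; because distinct maximal faces meet in at most one vertex, collapsing the edges interior to one maximal face never interferes with the edges interior to another---there is no shared edge whose status is constrained simultaneously by two face-relators. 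This is the crucial place the hypothesis is used: it guarantees that the ``face relators'' act on pairwise almost-disjoint edge sets, so the quotient can be taken face-by-face without introducing hidden identifications among generators.

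After collapsing all maximal faces, what remains is a graph (a one-dimensional complex) in which the only 2-cells that were present have been used up; a graph's edge-path group is free, generated by the edges outside a spanning tree. More carefully, I would argue that $\pi_1(\cH^\bB,t_1)$ is generated by the edges of $D$ not in $\cT$, with relators coming only from the maximal faces, and that after eliminating, within each maximal face $F$, the non-tree-of-$F$ edges (a Tietze transformation: each such edge is set equal to a word in other generators and removed together with its relator), no relators remain. The pairwise-intersection-at-most-one-vertex hypothesis ensures that an edge interior to one maximal face is not also interior to another, so these Tietze eliminations are independent and exhaust all relators; a presentation with generators and no relators is a free group. Invoking the isomorphism $\pi_1(\cH^\bB,t_1)\cong\zG^{\cH^\bB,\cT}$ from \cite[p.\ 134, Theorem 16]{spanier1989algebraic} and \cite[Chapter 4]{Singer1967} then finishes the proof.

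The main obstacle I anticipate is making the ``independent Tietze eliminations'' argument fully rigorous: one must check that an edge $\{s,s'\}$ of $D$ lies in the interior of at most one maximal face (which follows from the hypothesis, since two maximal faces containing both $s$ and $s'$ would share two vertices), that the spanning tree $\cT$ can be chosen compatibly with spanning trees of the individual maximal faces, and that after all eliminations the remaining generators are genuinely free---i.e., that no relator survives and no generator was eliminated twice. A careful bookkeeping of which edges are ``tree edges of $\cT$,'' ``non-tree edges interior to some maximal face,'' and ``non-tree edges in no maximal face'' (the last being exactly the free generators) is where the real work lies; the topological intuition (collapsing each maximal simplex to a point leaves a graph) is clear, but the edge-path formalism requires one to verify that the homotopy transformations of cycles---deleting $t_i$ when $t_{i-1},t_i,t_{i+1}$ lie in a common face---suffice to realize these collapses.
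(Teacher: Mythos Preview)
Your approach is essentially the paper's: pass to the presentation $\zG^{\cH^\bB,\cT}$, use the hypothesis to see that every edge of $G(\cH^\bB)$ lies in exactly one maximal face, and then process the face relators independently to reduce to a relator-free presentation. The paper phrases the final step as an explicit isomorphism to a free group on $E^{\sigma_1}\cup\cdots\cup E^{\sigma_n}$ (where $E^{\sigma_i}$ extends $\cT\cap\sigma_i$ to a spanning tree of $\sigma_i$) rather than as a sequence of Tietze moves, but the content is identical.

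One concrete slip to fix in your bookkeeping: your trichotomy ``tree edges of $\cT$ / non-tree edges interior to some maximal face / non-tree edges in no maximal face'' is not the right one, and the last class is empty---by definition of $G(\cH^\bB)$ every edge lies in some face, hence in some maximal face. The correct picture (which your earlier paragraph in fact describes) is: within each maximal face $F$, extend $\cT\cap F$ to a spanning tree $\cT_F$ of $F$; the edges of $F\setminus\cT_F$ are eliminated by the face relators, the edges of $\cT_F\cap\cT$ are already trivial, and the edges of $\cT_F\setminus\cT$ are the surviving free generators. The free generators thus come from \emph{inside} the faces, not from edges ``outside all faces''; the freeness is due to $\cT$ failing to span each face, not to inter-face edges. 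Once you correct this, the Tietze argument goes through exactly as you outline, and the hypothesis is used precisely where you say---to ensure no edge is constrained by relators from two different maximal faces.
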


For the existence of a minion homomorphism we introduce the property of \emph{non-degeneracy} formulated in terms of cycles in $\cH^\bA,\cH^\bB$, see Section~\ref{sec:hardness}. Combining the two conditions we obtain the main result of the paper.

\begin{theorem}\label{Main Theorem-intro}
Let $(\textbf A, \textbf B)$ be a non-degenerate template. If $\pi_1(\cH^\bB, t)$ is a free group, then $\PCSP(\textbf A, \textbf B)$ is NP-hard. 
\end{theorem}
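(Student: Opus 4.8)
The plan is to chain together the minion homomorphisms supplied by the hypotheses and then invoke Theorem~\ref{the:free-group-intro} together with the hardness criterion of \cite{Barto21:algebraic}. Concretely, the non-degeneracy hypothesis is designed precisely to produce a minion homomorphism $\psi_1 \colon \Pol(\bA,\bB) \to \Pol(\pi_1(\cH^\bA,t),\pi_1(\cH^\bB,t))$: a polymorphism $f$ of $(\bA,\bB)$, being a homomorphism $\bA^n\to\bB$, preserves the primitive positive definitions of $\rel^\bA,\rel^\bB$ and the faces of $\cH^\bA,\cH^\bB$, hence induces a map on vertices $G(\cH^\bA)^n\to V(G(\cH^\bB))$ that sends products of faces to faces; this in turn acts on cycles, respects the homotopy transformations (adding/removing a vertex inside a common face), and so descends to a homomorphism on homotopy classes. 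Non-degeneracy guarantees the basepoint is preserved (up to the relevant identifications) and that the induced map commutes with concatenation, so $\psi_1(f)$ is a genuine group polymorphism, and the assignment $f\mapsto\psi_1(f)$ preserves arities and variable identifications — i.e.\ it is a minion homomorphism.

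Next I would use Lemma~\ref{lem: conditions on A to be a free group-intro}: since every two maximal faces of $\cH^\bB$ meet in at most one element, $\mathcal F := \pi_1(\cH^\bB,t)$ is a free group, and by the remark in the excerpt it is finitely presented, hence free of finite rank. On the source side $\pi_1(\cH^\bA,t)$ need not be free, but Theorem~\ref{the:free-group-intro} only needs the source to be of the form $\mathbb Z^m$; so I would compose $\psi_1$ with a further minion homomorphism $\psi_2 \colon \Pol(\mathbb Z^m,\mathcal F) \to \Pol(\pi_1(\cH^\bA,t),\mathcal F)$ obtained from a surjective group homomorphism $\mathbb Z^m \twoheadrightarrow \pi_1(\cH^\bA,t)^{\mathrm{ab}}$ or, more simply, read off a single nontrivial cycle in $\cH^\bA$ that non-degeneracy furnishes, giving a homomorphism $\mathbb Z \to \pi_1(\cH^\bA,t)$ along which polymorphisms restrict; precomposition with a fixed group homomorphism always yields a minion homomorphism on the corresponding polymorphism minions. (If non-degeneracy is phrased so that $\psi_1$ already lands in $\Pol(\mathbb Z^m,\mathcal F)$ directly, this intermediate step is unnecessary and I would cite that formulation instead.) Either way, composing gives a minion homomorphism $\varphi \colon \Pol(\bA,\bB) \to \Pol(\mathbb Z^m,\mathcal F)$ with $\mathcal F$ free of finite rank.

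Now Theorem~\ref{the:free-group-intro} applies and yields a uniform bound $N$ such that $\varphi(f)$ has essential arity at most $N$ for every $f\in\Pol(\bA,\bB)$. The image minion $\varphi(\Pol(\bA,\bB))$ therefore has bounded essential arity. To conclude NP-hardness via the \cite{Barto21:algebraic} criterion (bounded essential arity plus no constant polymorphism), I would check that this image minion contains no constant operation: a constant group polymorphism $\mathbb Z^m\to\mathcal F$ would have to be the homomorphism sending everything to the identity, which corresponds (after unwinding $\varphi$) to every cycle of $\cH^\bA$ being mapped to a nullhomotopic cycle of $\cH^\bB$ — and this is exactly what the non-degeneracy hypothesis forbids, since it asserts the existence of a cycle in $\cH^\bA$ whose image under the structure maps is non-nullhomotopic in $\cH^\bB$. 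Hence the image minion is nonconstant with bounded essential arity, so $\PCSP$ of the corresponding template is NP-hard, and by the reduction theorem of \cite{Barto21:algebraic} (minion homomorphism $\Rightarrow$ reduction) $\PCSP(\bA,\bB)$ is NP-hard as well.

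The main obstacle I expect is the middle step: verifying carefully that the combinatorial data of a polymorphism really does induce a well-defined group homomorphism on edge-path groups with the correct basepoint behaviour, and that non-degeneracy is exactly strong enough to (i) make $\psi_1$ land in a minion over $\mathbb Z^m$ (or be composable with something that does) and (ii) rule out the collapse to a constant. The delicate point is that a polymorphism maps the basepoint $n$-tuple $(t,\dots,t)$ somewhere, but not necessarily to $t$ itself, so one must either build basepoint-change isomorphisms into the argument or have non-degeneracy hypothesise that the relevant cycle survives regardless of where the basepoint goes; getting these bookkeeping details to line up with the precise statement of non-degeneracy from Section~\ref{sec:hardness} is where the real work lies, whereas the topological/algebraic inputs are then essentially plug-and-play.
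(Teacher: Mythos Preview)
Your high-level strategy---build a minion homomorphism from $\Pol(\bA,\bB)$ into $\Pol(\mathbb Z,\mathcal F)$ with $\mathcal F=\pi_1(\cH^\bB,t)$ free, use Theorem~\ref{the:free-group-intro} to bound essential arity, use non-degeneracy to rule out constants, and finish with the hardness criterion---is exactly the paper's route. However, several of your intermediate steps are either unnecessary or reflect a misreading of the hypotheses.

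First, the paper never constructs your $\psi_1\colon\Pol(\bA,\bB)\to\Pol(\pi_1(\cH^\bA),\pi_1(\cH^\bB))$ and never needs $\pi_1(\cH^\bA)$ at all. Instead it fixes the single cycle $C$ in $G(\cH^\bA)$ supplied by the non-degeneracy hypothesis and builds $\xi_C\colon\Pol(\bA,\bB)\to\Pol(\mathbb Z,\pi_1(\cH^\bB))$ directly: for $n$-ary $f$, the $i$th generator of $\mathbb Z^n$ is sent to the class of the cycle $C_i$ obtained by walking $C$ in the $i$th coordinate while the other coordinates stay at $t_1$, conjugated by a path $P$ from $t$ to $f(t_1,\dots,t_1)$. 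The commutation $[C_i][C_j]=[C_j][C_i]$ and the minion-homomorphism property are the content of Lemma~\ref{lem:cycles-body} and Theorem~\ref{the:minion homomorphism from polymorphism}; none of this uses non-degeneracy. So your anticipated ``delicate point'' about basepoints is handled by the $P,\ldots,P^{-1}$ conjugation, and the detour through $\pi_1(\cH^\bA)$ (with the attendant direction mix-up in your $\psi_2$) is avoided entirely.

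Second, you mislocate the role of non-degeneracy. It is \emph{not} used to construct the minion homomorphism or fix basepoints; $\xi_C$ exists for any cycle $C$. Non-degeneracy is by definition the statement that for every homomorphism $g\colon\bA\to\bB$ the cycle $P,g(C),P^{-1}$ is not null-homotopic. It is invoked exactly once, in Lemma~\ref{non-constant minion homomorphism}: for any $f$ one looks at the unary minor $g(x)=f(x,\dots,x)$, observes $\xi_C(g)(1)=[P,g(C),P^{-1}]\ne\mathbbm 1$, and concludes $\xi_C(f)$ is non-constant. This is the argument you sketch in your penultimate paragraph, but it is the \emph{only} place non-degeneracy enters.

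Third, you do not need Lemma~\ref{lem: conditions on A to be a free group-intro}: freeness of $\pi_1(\cH^\bB,t)$ is a \emph{hypothesis} of the theorem, not something to be derived from a condition on maximal faces. (Finite generation is automatic since $\bB$ is finite.) With $\xi_C$ in hand, Lemma~\ref{non-constant minion homomorphism} for non-constancy and Theorem~\ref{free group bounded essential arity} for bounded essential arity, the proof finishes immediately via Corollary~\ref{col: Minion Homomorphism}.
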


The condition of non-degeneracy is not very natural, so we also offer a weaker, but more natural condition that is similar to $\zZ_2$-structures from \cite{Kro2020}, but applies to a much wider class of structures, see Theorem~\ref{Main Theorem 2}. Finally, in Section~\ref{sec:applications} we give some applications of the main theorems.

We believe that the results of this paper are only the first step in exploring the potential of groups of this kind. One possible direction is to consider groups similar to homotopy groups of higher order. 

\section{Preliminaries}\label{sec:prelims}


We use the notation $[n] = \{ 1, 2, \ldots, n\}$ throughout the paper.

\vspace*{3mm}

\noindent\textbf{Relational structures and homomorphism.}
A \textit{relational structure} is a tuple $\textbf{A}=(A;R_1^{\textbf{A}}, R_2^{\textbf{A}}, \ldots, R_w^{\textbf{A}})$ where each $R_i^{\textbf{A}} \subseteq A ^ {ar(R_i)}$ is a relation of arity $ar(R_i) \geq 1$. In this article, all relational structures are finite, that is, their domains are finite. Note that we denote the domain (universe) of $\textbf A$ by $A$. Two relational structures $\textbf{A}=(A;R_1^{\textbf{A}}, R_2^{\textbf{A}}, \ldots, R_w^{\textbf{A}})$ and $\textbf{B}=(B;R_1^{\textbf{B}}, R_2^{\textbf{B}}, \ldots, R_w^{\textbf{B}})$ are similar if they have the same number of relations and $ar(R_i ^ {\textbf{A}}) = ar(R_i ^ {\textbf{B}})$ for each $i \in [w]$.

For two similar relational structures $\textbf A$ and $\textbf B$, a \textit{homomorphism} from $\textbf A$ to $\textbf B$ is a mapping $h: A \rightarrow B$ such that, for each $i$ if $(a_1, \ldots, a_{ar(R_i)}) \in R_i ^ {\textbf{A}}$ then $(h(a_1), \ldots, h(a_{ar(R_i)})) \in R_i ^ {\textbf{B}}$. We write $\textbf {A} \rightarrow \textbf{B}$ to denote that a homomorphism from $\textbf A$ to $\textbf B$ exists and $\textbf {A} \not\rightarrow \textbf{B}$ otherwise.

\vspace*{3mm}

\noindent\textbf{CSP and PCSP.}
For a structure $\bA$, $\CSP(\textbf{A})$ is defined to be the following problem:
\begin{itemize}
\item \textbf{Input:}
A relational structure $\textbf I$ similar to $\textbf A$.
\item \textbf{Output:}
The output is YES if there is a homomorphism from $\textbf {I}$ to $\textbf{A}$ and NO if there is no homomorphism from $\textbf {I}$ to $\textbf{A}$.
\end{itemize}

The Promise CSP is defined in a similar way.

\begin{definition}
Let $(\textbf A, \textbf B)$ be a pair of similar fixed relational structures such that $\textbf {A} \rightarrow \textbf{B}$ (such a pair is called a \emph{PCSP template}). The problem $\PCSP(\textbf{A},\textbf{B})$ is defined as follows:
\begin{itemize}
\item \textbf{Input:}
A structure $\textbf I$ similar to both $\textbf A$ and $\textbf{B}$ such that either $\textbf {I} \rightarrow \textbf{A}$ or $\textbf {I} \not\rightarrow \textbf{B}$.
\item \textbf{Output:}
The output is YES if there is a homomorphism from $\textbf {I}$ to $\textbf{A}$ and NO if there is no homomorphism from $\textbf {I}$ to $\textbf{B}$.
\end{itemize}
\end{definition}

Many combinatorial problems can be expressed as a PCSP. In particular, $\CSP(\bA)=\PCSP(\bA,\bA)$.

\begin{example}
For fixed natural numbers $c \geq k \geq 1$, $\PCSP(\textbf K_k, \textbf K_c)$ is the problem of deciding whether a given graph $\textbf G$ is $k$-colorable or not even $c$-colorable (note that we are promised that the input graph $\textbf G$ is either $k$-colorable or not even $c$-colorable). This problem is known as the \textit{Approximate Graph Coloring Problem} (AGCP).
The AGCP has been conjectured to be NP-hard for any $c \geq k \geq 3$, but this is still open in many cases. The current state-of-the-art for NP-hardness is $c=2k-1$ for $k =3,4,5$~\cite{Bul2019} and $c= {k \choose {\lfloor \frac{k}{2} \rfloor}} -1$ for $k > 5$~\cite{Wro2020}. 
\end{example}

\begin{example}
$\PCSP(\textbf G, \textbf H)$ for graphs $\textbf G$ and $\textbf H$ with $\textbf G\to\textbf H$ is called the \textit{Approximate Graph Homomorphism Problem} (AGHP). This problem has been conjectured to be NP-hard for any non-bipartite graphs $\textbf G$ and $\textbf H$~\cite{Bra2021}. The case $\textbf G= \textbf C_{2k+1}$ (odd cycles), $\textbf H=\textbf K_3$ has been shown to be NP-hard~\cite{Kro2019, Kro2020}. The authors in~\cite{Kro2020} used some topological techniques to prove the NP-hardness of $\PCSP(\textbf C_{2k+1}, \textbf K_3)$. It was also recently proved in \cite{Avvakumov25} that $\PCSP(\textbf G, \textbf H)$ is NP-hard for any pair of non-bipartite 4-colorable graphs $G,H$.
\end{example}


\noindent\textbf{Polymorphisms, minions, and primitive-positive definitions.}
A mapping $f:A^n \rightarrow B$ is called a \textit{polymorphism} of $\PCSP(\textbf{A},\textbf {B})$ of arity $n$, denoted by $ar(f)=n$, if for any $i \in [w]$ and any sequence of tuples
\[
(r^1_1,\ldots,r^1_{ar(R_i ^ {\textbf{A}})}),\ldots,(r^n_1,\ldots,r^n_{ar(R_i ^ {\textbf{A}})})
\]
in $R_i^{\textbf{A}}$, we have
$(f(r^1_1,\ldots,r^n_1),\ldots,f(r^1_{ar(R_i ^ {\textbf{A}})},\ldots,r^n_{ar(R_i ^ {\textbf{A}})})) \in R_i^{\textbf{B}}$.
The set of all polymorphisms of $\PCSP(\textbf{A},\textbf {B})$ is denoted by $\Pol(\textbf{A},\textbf {B})$ and the set of all polymorphisms of arity $n$ is denoted by $\Pol(\textbf A, \textbf B)^{(n)}$. We define $\Pol(\textbf A)$ to be $\Pol(\textbf A, \textbf A)$.

It has been proved that the complexity of $\PCSP(\textbf{A},\textbf {B})$ depends only on the set of all polymorphism of $\PCSP(\textbf{A},\textbf {B})$~\cite{Bra2018,Bra2021}, that is, if $\Pol(\textbf A, \textbf B) \subseteq \Pol(\textbf C, \textbf D)$ then $\PCSP(\textbf C, \textbf D)$ is reducible to $\PCSP(\textbf A, \textbf B)$.


An $m$-ary function $g:A^m \rightarrow B$ is called a \textit{minor} of an $n$-ary function $f:A^n \rightarrow B$ if there exists a mapping $\pi:[n] \rightarrow [m]$ such that
$g(x_1,\ldots,x_m)=f(x_{\pi(1)},\ldots,x_{\pi(n)})$
for any $x_1,\ldots,x_m \in A$. In such a case, we write $g = f^{\pi}$ or $f \xrightarrow{\pi} g$.

For two sets $A$ and $B$, let $\mathcal{O}(A,B)=\{f: A^n \rightarrow B \,| \, n \geq 1 \}$ be the set of all functions from $A^n$ to $B$ for $n \geq 1$. A (\textit{function}) \textit{minion} $\mathcal{M}$ on a pair of sets $(A, B)$ is a non-empty subset of $\mathcal{O}(A, B)$ that is closed under taking minors. That is, if $f$ is an $n$-ary function in $\mathcal{M}$, then for any $m \in \mathbb{N}$ and any map $\pi:[n] \rightarrow [m]$, $f^{\pi}$ is also a function in $\mathcal{M}$. Let $\mathcal{M}^{(n)}$ denote the set of $n$-ary functions in $\mathcal{M}$ for a fixed $n \geq 1$.

Our principal example of a minion is the set of polymorphisms of a PCSP template. It is not hard to see that for a template $(\bA,\bB)$ the set $\Pol(\textbf A, \textbf B)$ is a minion. Indeed, if $f$ is a polymorphism, then $f^\pi$ is also a polymorphism for any map $\pi: [n] \rightarrow [m]$.

Another example of a minion is given by the set of all group polymorphisms. Let $\textbf G$, $\textbf H$ be two groups. Then a function $f: G^n \rightarrow H$ is called a \emph{polymorphism} if
$f(a_1 \cdot b_1,\ldots,a_n \cdot b_n)=f(a_1,\ldots,a_n) \cdot f(b_1,\ldots,b_n)$ for all $\forall a_i,b_i \in \bold G$.
In other words $f$ is a group homomorphism from $G^n$ to $H$. The set of all polymorphisms is a minion denoted by $\Pol(\textbf G, \textbf H)$.  

Let $\mathcal{M}_1, \mathcal{M}_2$ be two function minions (not necessarily on the same pair of sets). Then a mapping $\xi: \mathcal{M}_1 \rightarrow \mathcal{M}_2$ is called a \textit{minion homomorphism} if the following conditions are satisfied.
\begin{enumerate}
\item 
For any $n \in \mathbb{N}$, it maps an $n$-ary function $f \in \mathcal{M}_1$ to an $n$-ary function in $\mathcal{M}_2$, i.e., $ar(f)=ar(\xi(f))$.
\item 
It preserves minors, i.e., for any $f \in \mathcal{M}_1^{(n)}$ and $\pi: [n] \rightarrow [m]$ we have $\xi(f)^{\pi}=\xi(f^{\pi})$, i.e.,
$$\xi(f)(x_{\pi(1)},\ldots,x_{\pi(n)})=\xi(f^\pi)(x_1,\ldots,x_m).$$
\end{enumerate}
The notions of a minion and minion homomorphism can be extended to \emph{abstract minions}, in which $\mathcal M$ is an arbitrary set partitioned into subsets $\mathcal M^{(n)}$, $n\in\mathbb N$, with properly defined minor operation. In this paper we almost exclusively use function minions.


A homomorphism between minions of polymorphisms of two PCSP templates gives rise to a reduction between PCSPs.

\begin{theorem}[\cite{Bul2019}, Theorem 3.1]\label {thr: reduction, minion homomorphism}
Let $(\textbf A, \textbf B)$ and $(\textbf C, \textbf D)$ be $\PCSP$ templates. If there is a minion homomorphism $\xi: \Pol(\textbf A, \textbf B) \rightarrow \Pol(\textbf C, \textbf D)$, then $\PCSP(\textbf C, \textbf D)$ is log-space reducible to $\PCSP(\textbf A, \textbf B)$.
\end{theorem}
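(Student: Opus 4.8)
The goal is a logarithmic-space many-one reduction sending an instance $\textbf{I}$ of $\PCSP(\textbf{C},\textbf{D})$ to an instance $\textbf{I}^{\star}$ of $\PCSP(\textbf{A},\textbf{B})$ with $\textbf{I}\to\textbf{C}\Rightarrow\textbf{I}^{\star}\to\textbf{A}$ and $\textbf{I}\not\to\textbf{D}\Rightarrow\textbf{I}^{\star}\not\to\textbf{B}$; since $\textbf{C}\to\textbf{D}$, these two implications already guarantee that $\textbf{I}^{\star}$ satisfies the promise of $\PCSP(\textbf{A},\textbf{B})$ whenever $\textbf{I}$ satisfies that of $\PCSP(\textbf{C},\textbf{D})$, and that the yes/no answer is preserved. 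My plan is to prove this in two stages: first convert the algebraic hypothesis --- the minion homomorphism $\xi$ --- into concrete combinatorial data by showing that $(\textbf{C},\textbf{D})$ is \emph{pp-constructible} from $(\textbf{A},\textbf{B})$, and then observe that each elementary step of a pp-construction is itself a logarithmic-space reduction and compose them.

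For the first stage I would appeal to the promise-CSP version of the Galois correspondence between polymorphisms and primitive-positive definability: the existence of a minion homomorphism $\xi:\Pol(\textbf{A},\textbf{B})\to\Pol(\textbf{C},\textbf{D})$ is equivalent to the existence of a finite chain of templates $(\textbf{A},\textbf{B})=(\textbf{A}_0,\textbf{B}_0),(\textbf{A}_1,\textbf{B}_1),\dots,(\textbf{A}_\ell,\textbf{B}_\ell)$, with $\ell$ bounded in terms of the (fixed) structures, $(\textbf{A}_\ell,\textbf{B}_\ell)$ homomorphically equivalent to $(\textbf{C},\textbf{D})$, and each step either a \emph{pp-power} --- choose $k\ge1$ and, for every relation symbol, a primitive-positive formula over the common signature, and read it off over $\textbf{A}_i$ to obtain $\textbf{A}_{i+1}$ on $A_i^k$ and over $\textbf{B}_i$, using the same formula, to obtain $\textbf{B}_{i+1}$ on $B_i^k$ --- or a \emph{homomorphic relaxation} --- pass from $\textbf{A}_i$ to some $\textbf{A}_{i+1}$ with $\textbf{A}_{i+1}\to\textbf{A}_i$ and from $\textbf{B}_i$ to some $\textbf{B}_{i+1}$ with $\textbf{B}_i\to\textbf{B}_{i+1}$. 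The direction from the chain to a minion homomorphism is routine (pp-definitions and relaxations act functorially on polymorphisms); the converse --- extracting such a chain from an abstract $\xi$ by reading off the polymorphisms that witness the required pp-definitions, exactly as in the classical CSP theorem --- is the substantive part, and I expect it to be the main obstacle.

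The second stage is bookkeeping. A pp-power step becomes a reduction by the usual gadget replacement: replace each variable of the instance by its $k$ coordinate copies, and replace each constraint by the atoms of the associated primitive-positive formula, its existential variables being introduced as fresh variables; this blow-up is exactly what makes the reduction ``non-trivial'' in general. A homomorphic relaxation step with $\textbf{A}'\to\textbf{A}$ and $\textbf{B}\to\textbf{B}'$ is realized by the identity map on instances, since $\textbf{I}\to\textbf{A}'$ forces $\textbf{I}\to\textbf{A}$ and $\textbf{I}\not\to\textbf{B}'$ forces $\textbf{I}\not\to\textbf{B}$; homomorphic equivalence is handled the same way. For each step one verifies completeness (a homomorphism into the larger structure restricts to the coordinate variables and supplies witnesses for the existential quantifiers of every constraint's formula) and soundness (a homomorphism out of the smaller structure reassembles coordinatewise into one satisfying the pp-defined relations). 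Every such map is a uniform local rewriting of linear size, hence computable in logarithmic space; composing the finitely many steps --- and using that a composition of boundedly many logarithmic-space reductions is again one --- yields the desired reduction from $\PCSP(\textbf{C},\textbf{D})$ to $\PCSP(\textbf{A},\textbf{B})$.
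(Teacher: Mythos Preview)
The paper does not prove this theorem; it merely quotes it from~\cite{Bul2019} (their Theorem~3.1) as background, so there is no ``paper's own proof'' to compare against. Your outline is the standard argument from that reference (and from~\cite{Barto21:algebraic}): translate the minion homomorphism into a pp-construction of $(\textbf{C},\textbf{D})$ from $(\textbf{A},\textbf{B})$, then observe that pp-powers and homomorphic relaxations each yield log-space gadget reductions, and compose. That is correct in spirit and matches the cited source.

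One small caution on presentation: the equivalence ``minion homomorphism $\Leftrightarrow$ pp-constructible'' is itself the nontrivial theorem you are invoking, and in the PCSP setting it is proved by realising $(\textbf{C},\textbf{D})$ inside the \emph{free structure} of $\Pol(\textbf{A},\textbf{B})$ generated by $\textbf{C}$ (i.e.\ a single pp-power followed by a single homomorphic relaxation suffices, not an arbitrary chain). Your sketch of ``reading off the polymorphisms that witness the required pp-definitions'' is vague at exactly the point where the real work happens; if you were writing this out in full you would need to construct that free structure explicitly and use $\xi$ to produce the relaxing homomorphism into $\textbf{D}$. But as a proposal for how the cited proof goes, you have the right architecture.
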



Let $\bA$ be a relational structure with the base set $A$. A \emph{primitive-positive (pp-)formula} $\Phi(\vc xr)$ is a first-order formula that only uses existential quantifiers, conjunction, predicates of $\bA$, and the equality relation. The $r$-ary relation $\rel$ consisting of all tuples $(\vc ar)$ such that $\Phi(\vc ar)$ is true is said to be \emph{pp-definable} in $\bA$. For a PCSP template a pp-formula $\Phi$ defines a pair of relations $\rel^\bA,\rel^\bB$ in each of the structures $\bA,\bB$. The property of pp-definable relations we need (\cite[Theorem~2.26]{Barto21:algebraic}, see also \cite{Bra2021,Pippenger02:Galois}) is that every polymorphism of $(\bA,\bB)$ is also a polymorphism of $((A;\rel^\bA),(B;\rel^\bB))$.

\vspace*{3mm}

\noindent\textbf{Hardness condition.} 
A coordinate $i$ of a function $f: A^n \rightarrow B$ is called \textit{essential} if there exist $a_1,\ldots,a_n \in A$ and $b_i \in A$ such that
$$
f(a_1,\ldots, a_{i-1},a_i,a_{i+1},\ldots,a_n) \neq f(a_1,\ldots, a_{i-1},b_i,a_{i+1},\ldots,a_n).
$$
A minion $\mathcal{M}$ is said to have a \textit{bounded essential arity} if there exists a constant number $k$ such that any function $f$ in $\mathcal{M}$ has at most $k$ essential coordinates.

\begin{theorem}{\rm \cite[Proposition 5.14]{Bul2019}}\label{Minion Theorem}
Let $(\bA, \bB)$ be a template. Assume that there exists a minion homomorphism $\xi: \Pol(\bA, \bB) \rightarrow \mathcal{M}$ for some minion $\mathcal{M}$ on a pair of (possibly infinite) sets such that $\mathcal{M}$ has bounded essential arity and does not contain a constant function (i.e., a function without essential coordinates). Then $\PCSP(\bA, \bB)$ is NP-hard.
\end{theorem}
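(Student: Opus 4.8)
The plan is to reduce an NP-hard Label Cover problem to $\PCSP(\bA,\bB)$, exploiting the fact that a minion homomorphism preserves arities and commutes with minors, so the structural poverty of $\mathcal{M}$ (bounded essential arity, no constant) transfers to $\Pol(\bA,\bB)$ in a form one can round against. First I would record the only consequence of the hypothesis that is needed: a bounded-range \emph{selection function}. Let $k$ bound the essential arity of $\mathcal{M}$. For $g\in\mathcal{M}^{(n)}$ let $\sel(g)\subseteq[n]$ be the set of essential coordinates of $g$; then $|\sel(g)|\le k$, and $\sel(g)\ne\emptyset$ since $\mathcal{M}$ has no constant. A function with at most $k$ essential coordinates factors through the projection onto them, so for any minor $g^{\pi}$, $\pi\colon[n]\to[m]$, we get $\sel(g^{\pi})\subseteq\pi(\sel(g))$. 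Composing with the given $\varphi\colon\Pol(\bA,\bB)\to\mathcal{M}$, the assignment $f\mapsto \sel(\varphi(f))$ gives to each $n$-ary $f\in\Pol(\bA,\bB)$ a nonempty subset of $[n]$ of size $\le k$, compatibly with minors.

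Next, the reduction. Take a two-sided Label Cover instance $\Lambda$ with vertex classes $U,V$ and constraint edges $e=(u,v)$ carrying projections $\pi_e\colon[R]\to[L]$; by the PCP theorem with parallel repetition it is NP-hard, for the fixed constant $k$, to distinguish the case that $\Lambda$ is satisfiable from the case that no labelling satisfies a $\tfrac1{k+1}$-fraction of the edges. From $\Lambda$ I would build a $\PCSP(\bA,\bB)$-instance $\textbf{I}=\textbf{I}(\Lambda)$ of polynomial size by the standard gadget (pp-power) construction: replace vertex $w$ by a copy of the appropriate power of $\bA$ — arity $R$ for $w\in V$, arity $L$ for $w\in U$ — and glue the gadgets of an edge $e=(u,v)$ according to $\pi_e$, so that (i) every labelling of $\Lambda$ induces a homomorphism $\textbf{I}\to\bA$ by projecting each gadget to the chosen coordinate, and (ii) every homomorphism $h\colon\textbf{I}\to\bB$ restricts on the gadget of $w$ to a polymorphism $g_w\in\Pol(\bA,\bB)$ with $g_u=g_v^{\pi_e}$ on each edge. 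Completeness is exactly (i).

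For soundness, suppose $h\colon\textbf{I}\to\bB$ and let the $g_w$ be as in (ii). Set $S_w:=\sel(\varphi(g_w))$: each $S_w$ is nonempty of size $\le k$, and $S_u=\sel(\varphi(g_v^{\pi_e}))=\sel(\varphi(g_v)^{\pi_e})\subseteq\pi_e(S_v)$ for every edge. Pick $\ell(w)$ uniformly at random in $S_w$, independently over $w$. For a fixed edge $e=(u,v)$, since $\ell(u)\in S_u\subseteq\pi_e(S_v)$ some $a\in S_v$ has $\pi_e(a)=\ell(u)$, and with probability $\ge 1/|S_v|\ge 1/k$ we have $\ell(v)=a$, which satisfies $e$. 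Hence the expected fraction of satisfied edges is $\ge 1/k>\tfrac1{k+1}$, so some labelling beats the gap threshold, contradicting the promise; therefore $\textbf{I}\not\to\bB$. Thus the promise of $\PCSP(\bA,\bB)$ is met (satisfiable $\Lambda$ yields $\textbf{I}\to\bA$, gap-$\Lambda$ yields $\textbf{I}\not\to\bB$), $\Lambda\mapsto\textbf{I}(\Lambda)$ is a correct polynomial-time reduction, and so $\PCSP(\bA,\bB)$ is NP-hard. One may alternatively repackage the last two steps through Theorem~\ref{thr: reduction, minion homomorphism}, by isolating the gadget map as a minion homomorphism into the abstract ``width-$k$ selection'' minion.

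The main obstacle is the gadget construction: arranging that $\bB$-homomorphisms of $\textbf{I}(\Lambda)$ genuinely decompose into polymorphisms of $(\bA,\bB)$ glued along the $\pi_e$, and verifying the polynomial size and completeness claims. This is where the PCP hardness is actually consumed, and where the pp-formula/face bookkeeping used elsewhere in this paper would be replaced by working directly with the template $(\bA,\bB)$. The selection-function bookkeeping and the probabilistic rounding are routine once the reduction is in place; the alternative route via ``no Taylor term $\Rightarrow$ NP-hard'' would settle only the non-promise special case $\bA=\bB$ and not the promise or abstract-minion generality, which is why I would go through Label Cover.
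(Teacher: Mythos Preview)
The paper does not give its own proof of this theorem: it is quoted as \cite[Proposition~5.14]{Bul2019} and used as a black box, with no argument in the text beyond the statement and an illustrative example. So there is nothing in the paper to compare your proposal against line by line.

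That said, your sketch is the standard route by which this result is established in the cited literature. The two ingredients you isolate --- (a) turning ``bounded essential arity and no constant'' into a minor-compatible selection function $f\mapsto\sel(\varphi(f))\subseteq[n]$ of bounded size, and (b) the Label Cover gadget reduction with randomised rounding over the selected sets --- are exactly what underlies the original proof. Your compatibility claim $\sel(g^{\pi})\subseteq\pi(\sel(g))$ is correct and is precisely what makes the rounding go through. The ``main obstacle'' you flag, namely building $\textbf{I}(\Lambda)$ so that $\bB$-homomorphisms decompose into polymorphisms glued along the $\pi_e$, is handled in the source by the free-structure / pp-power machinery; once that is in place, completeness and soundness are as you wrote them. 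In short: your proof is correct in outline and matches the literature proof the paper cites, but the paper itself supplies no proof to compare with.
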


As an example of an application of Theorem~\ref{Minion Theorem} consider the problem $\PCSP(\textbf K_3, \textbf K_4)$. It is possible to construct a minion homomorphism $\xi: \Pol(\textbf K_3, \textbf K_4) \rightarrow \mathcal{P}_{\{0,1\}}$, where $\mathcal{P}_{\{0,1\}}$ is the minion of projections on $\{0,1\}$. Any function $f \in \mathcal{P}_{\{0,1\}}$ has exactly $1$ essential coordinate (which also implies $f$ is not constant), so by Theorem~\ref{Minion Theorem} $\PCSP(\textbf K_3, \textbf K_4)$ is NP-hard.

If we have a minion homomorphism $\xi: \Pol
(\bA, \bB) \rightarrow \mathcal{M}$, then the image of $\xi$, denoted by $\img_{\xi}(\mathcal{M})$, is also a minion and $\xi^\prime: \Pol(\bA, \bB) \rightarrow \img_{\xi}(\mathcal{M})$ where $\xi^\prime(f):= \xi(f)$ is a minion homomorphism. This implies the following corollary.

\begin{corollary} \label {col: Minion Homomorphism}
Let $(\bA, \bB)$ be a template. Assume that there exists a minion homomorphism $\xi: \Pol(\bA, \bB) \rightarrow \mathcal{M}$ for some minion $\mathcal{M}$ on a pair of (possibly infinite) sets such that $\img_\xi(\mathcal{M})$ has bounded essential arity and does not contain a constant function (i.e., a function without essential coordinates). Then $\PCSP(\bA, \bB)$ is NP-hard.
\end{corollary}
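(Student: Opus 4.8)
The plan is to reduce Corollary~\ref{col: Minion Homomorphism} directly to Theorem~\ref{Minion Theorem} by replacing the target minion $\mathcal{M}$ with the smaller minion $\img_\xi(\mathcal{M})$, to which the hypotheses of Theorem~\ref{Minion Theorem} do apply. Note that one cannot simply invoke Theorem~\ref{Minion Theorem} with $\mathcal{M}$ itself: the minion $\mathcal{M}$ may well contain a constant function, or functions of arbitrarily large essential arity, as long as these do not lie in the image of $\xi$; so the passage to $\img_\xi(\mathcal{M})$ is genuinely needed.

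First I would verify that $\img_\xi(\mathcal{M}) := \{\, \xi(f) : f \in \Pol(\bA,\bB)\,\}$ is itself a function minion, i.e.\ that it is non-empty and closed under taking minors. Non-emptiness is immediate since $\Pol(\bA,\bB)$ is non-empty. For closure under minors, take $g \in \img_\xi(\mathcal{M})^{(n)}$, say $g = \xi(f)$ with $f \in \Pol(\bA,\bB)^{(n)}$, and let $\pi : [n] \to [m]$ be any map. Since $\Pol(\bA,\bB)$ is a minion, $f^\pi \in \Pol(\bA,\bB)$; and since $\xi$ is a minion homomorphism, its minor-preservation property gives $\xi(f)^\pi = \xi(f^\pi)$. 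Hence $g^\pi = \xi(f^\pi) \in \img_\xi(\mathcal{M})$, as required. This is the only step that requires any care, and it is precisely where condition (2) in the definition of a minion homomorphism is used.

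Next I would observe that the corestriction $\xi' : \Pol(\bA,\bB) \to \img_\xi(\mathcal{M})$ defined by $\xi'(f) := \xi(f)$ is again a minion homomorphism: it is arity-preserving and minor-preserving simply because $\xi$ is, and shrinking the codomain from $\mathcal{M}$ to $\img_\xi(\mathcal{M})$ (which by the previous paragraph is a well-defined minion containing the image of $\xi'$) does not affect either property. Now $\img_\xi(\mathcal{M})$ has bounded essential arity and contains no constant function by hypothesis, so applying Theorem~\ref{Minion Theorem} to the template $(\bA,\bB)$ with the minion $\img_\xi(\mathcal{M})$ and the minion homomorphism $\xi'$ yields that $\PCSP(\bA,\bB)$ is NP-hard. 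There is no substantive obstacle; the whole content is the bookkeeping observation that the hardness hypotheses must be checked on the actual image of $\xi$ rather than on the possibly larger codomain.
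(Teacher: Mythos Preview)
Your proposal is correct and follows exactly the approach indicated in the paper: the sentence immediately preceding the corollary already notes that $\img_\xi(\mathcal{M})$ is a minion and that the corestriction $\xi'$ is a minion homomorphism, after which Theorem~\ref{Minion Theorem} applies. You have simply spelled out the closure-under-minors verification in more detail than the paper does.
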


\vspace*{3mm}

\noindent\textbf{Finitely presented groups.}
We will also use some basic concepts of finitely presented groups \cite{Rotman95:groups}. A \emph{presentation} of a group $\bG$ consists of a set $S$ of generators, and a set $R$ of relations among those generators. A presentation is finite if both $S$ and $R$ are finite. Then elements of $\bG$ are represented by \emph{group words} that is formal products of generators and their inverses. A relation is an equality of the form $w_1=w_2$, where $w_1,w_2$ are group words. Relations can be used to transform group words by replacing some occurrences of $w_1$ or $w_1^{-1}$ with $w_2,w_2^{-1}$, respectively, or the other way round. Words that can be transformed to one another represent the same element of $\bG$. A \emph{free group} is a group presented by $S$ and $R$, where $R=\emptyset$. A word in a free group $\bF$ is reduced if it does not contain a subword $aa^{-1}$ or $a^{-1}a$ for a generator $a$. Every element in $\bF$ is equal to a reduced word and all reduced words are different.
A free abelian group is an example of a free group with relations $ab=ba$ for any pair of generators $a,b \in S$.

\section{Groups and PCSP templates}\label{sec:homotopic}
In this section we study how analogues of homotopy groups can be introduced for a relational structure. While \cite{Kro2020} first construct a simplicial complex from a relational structure and then study its properties and how it gives rise to a topological space, we start with an arbitrary simplicial complex (or, equivalently, a hypergraph) on a relation pp-definable in a relational structure and try to extract the conditions essential for our purposes. 

\subsection{Edge-path groups}\label{sec:edge-path}

First, we remind the notion of the \emph{edge-path group} of a simplicial complex, see \cite{Singer1967}. Let $\cH=(V,E)$ be a simplicial complex (a hypergraph). The graph $G(\cH)$ of $\cH$ is defined as follows: the vertex set of $G(\cH)$ is $V$ and $u,v\in V$ are connected with an edge if and only if $s,t\in A$ for some $A\in E$. We consider paths and cycles in $G(\cH)$. A cycle $t_1, \ldots, t_k$, $t_1=t_k$,  is called a \textit{null cycle} if $k=2$. Two paths $P_1= t_1,\ldots, t_k$ and $P_2= t^\prime_1,\ldots, t^\prime_r$ are said to be \textit{homotopic}, denoted $P_1 \sim P_2$, if it is possible to transform $P_1$ to $P_2$ by a sequence of transformations of the following two types (called \textit{expansions} and \textit{contractions}): (a) If an element $t\in V$ belongs to the same face as $t_i$, $t_{i+1}$ for some $i \in [k-1]$, then replace $P_1$ by $t_1,\ldots,t_i,t,t_{i+1}, \ldots, t_k$; and (b) if $t_i$, $t_{i+1}$, and $t_{i+2}$ belong to the same face for some $i \in [k-2]$, then replace $P_1$ by $t_1,\ldots,t_i,t_{i+2}, \ldots, t_k$.

Note that if two paths $P_1$ and $P_2$ are homotopic, they have the same initial and terminal elements, i.e., $t_1=t^\prime_1$ and $t_k=t^\prime_r$ since expansions and contractions do not change the initial and terminal elements. A cycle $C$ is said to be \textit{null-homotopic} if it is homotopic to a null cycle.

Given two cycles $C_1=t_1,\ldots,t_k,t_1$ and $C_2=t^\prime_1, \ldots, t^\prime_r,t^\prime_1$ starting at the same element $t_1=t^\prime_1$, we define $C_1 \cdot C_2$ to be the concatenation of them, i.e., $C_1 \cdot C_2=t_1,\ldots,t_k,t_1,t^\prime_2, \ldots, t^\prime_r,t_1$. The \textit{homotopy class} of a cycle $C$ is the set of cycles that are homotopic to $C$. We denote the homotopy class of a cycle $C$ by $[C]$. 
Multiplication on the set of homotopy classes is defined in the natural way: If $C_1=t_1,\ldots,t_k,t_1$ and $C_2=t^\prime_1, \ldots, t^\prime_r,t^\prime_1$ with $t_1=t^\prime_1$, then $[C_1] \cdot [C_2] := [C_1 \cdot C_2]$. It is not hard to see (cf. \cite[Chapter 4]{Singer1967}) that the set of homotopy classes along with the operation $\cdot$ is a group denoted $\pi_1(\cH, t_1)$. Note that for any $t'$ from the same connected component of $G(\cH)$ as $t_1$, the groups $\pi_1(\cH, t_1)$ and $\pi_1(\cH, t')$ are isomorphic. So, we will often use $\pi_1(\cH)$ rather than $\pi_1(\cH, t_1)$, at least when $G(\cH)$ is connected.
(Note that the definitions of homotopic, null-homotopic and $\pi_1$ are similar to the ones in algebraic topology~\cite{All2001}.)

%
%

Next, we introduce another group related to $\cH$. This group is finitely presented and we will then show that the two groups are isomorphic. Fix a vertex $t_1$ of $G(\cH)$ and let $\mathcal{T}$ be a spanning tree of the connected component of $G(\cH)$ containing $t_1$. We call such a tree a \textit{maximal tree} containing $t_1$ (if $G(\cH)$ is connected, then any maximal tree would be a spanning tree of $G(\cH)$). By $V(\cT)$ and $E(\cT)$ we denote the vertex set and the edge set of $\cT$, respectively. Consider the finitely presented group, denoted by $\mathbb{G}^{\cH, \mathcal{T}}$, whose generators are the (oriented) edges of the connected component of $G(\cH)$ that contains $t_1$, the identity element is denoted by $\mathbbm{1}$, and the set of relations is: 
\begin{itemize}
\item[(G1)] 
If $e \in E(\mathcal{T})$, then $e=e^{-1}=\mathbbm{1}$;
\item[(G2)] 
if $t \in V(\mathcal{T})$, then $(tt) = \mathbbm{1}$;
\item[(G3)] 
for elements $\{t_1, t_2, t_3 \} \in V(\mathcal{T})$ that belong to the same face, it holds that $(t_1t_2)\cdot (t_2t_3)\cdot(t_3t_1)= \mathbbm{1}$. In particular, $(t_1t_2)\cdot (t_2t_3)=(t_3t_1)^{-1}$.
\end{itemize}
Note that relation (G3) also implies that $(t_1 t_2) \cdot (t_2 t_1) = \mathbbm{1}$. Indeed, if $t_3=t_1$, then 
$(t_1 t_2) \cdot (t_2 t_1)=(t_1 t_2)\cdot (t_2 t_1)\cdot (t_1t_1)= \mathbbm{1}$ by (G3).

The mapping $\rho^{t_1,\mathcal{T}}$ from $\pi_1(\cH, t_1)$ to $\mathbb{G}^{\cH, \mathcal{T}}$ is defined as follows. Let $C= t_1,t_2, \ldots, t_k, t_1$ be a cycle. Then we define $\rho^{t_1,\mathcal{T}}([C]):=(t_1t_2)\cdot (t_2t_3)\ldots (t_{k-1}t_k)\cdot (t_kt_1)$. The following lemma can be found in \cite[Section 4, Theorem 5]{Singer1967}, for the sake of completeness we also give a proof.

\begin{lemma}[\cite{Singer1967}] \label{isomorphism lemma}
Let $\cH=(V,E)$ be a simplicial complex. For any element $t_1\in V$ and any maximal tree $\mathcal{T}$ containing $t_1$, the mapping $\rho^{t_1,\mathcal{T}}: \pi_1(\cH, t_1) \rightarrow \mathbb{G}^{\cH, \mathcal{T}}$ is a group isomorphism.
\end{lemma}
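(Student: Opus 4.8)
The plan is to show that $\rho:=\rho^{t_1,\mathcal{T}}$ is a well-defined group homomorphism and then exhibit a two-sided inverse $\psi:\mathbb{G}^{\cH,\mathcal{T}}\to\pi_1(\cH,t_1)$. The first task is well-definedness: if $C\sim C'$ are homotopic cycles (both based at $t_1$), I need $\rho([C])=\rho([C'])$ in $\mathbb{G}^{\cH,\mathcal{T}}$. Since homotopy is generated by single expansions and contractions, it suffices to check one elementary move. If we insert a vertex $t$ between $t_i$ and $t_{i+1}$ (both in a common face with $t$), then the factor $(t_it_{i+1})$ in $\rho([C])$ is replaced by $(t_it)\cdot(t t_{i+1})$; but relation (G3) applied to $\{t_i,t,t_{i+1}\}$ (all in one face) gives $(t_it)\cdot(t t_{i+1})\cdot(t_{i+1}t_i)=\mathbbm 1$, i.e.\ $(t_it)\cdot(t t_{i+1})=(t_{i+1}t_i)^{-1}=(t_it_{i+1})$, using the remark just before the lemma. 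The contraction move is the reverse. So $\rho$ is well-defined on homotopy classes, and it is clearly a homomorphism since concatenation of cycles maps to the product of the corresponding words.

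Next I would construct $\psi$. A generator of $\mathbb{G}^{\cH,\mathcal{T}}$ is an oriented edge $(uv)$ of the connected component $D$ of $G(\cH)$ containing $t_1$. For each vertex $u\in V(\cT)=V(D)$ fix the unique path $P_u$ in $\cT$ from $t_1$ to $u$; set $P_{t_1}$ to be the null path. Define $\psi((uv))$ to be the homotopy class of the cycle $P_u$ followed by the edge $u,v$ followed by the reverse of $P_v$ — a closed walk based at $t_1$. To see this descends to a homomorphism on $\mathbb{G}^{\cH,\mathcal{T}}$, I must check that the three families of relations (G1)--(G3) are sent to the identity of $\pi_1(\cH,t_1)$. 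For (G1): if $e=(uv)\in E(\cT)$, then $P_u\cdot(u,v)\cdot P_v^{-1}$ is a closed walk entirely inside the tree $\cT$, hence it backtracks and is null-homotopic by repeated contractions (any backtrack $s,t,s$ has $s,t,s$ in a common face, so can be contracted); the same for $e^{-1}$. For (G2): $(tt)$ maps to $P_t\cdot(t,t)\cdot P_t^{-1}$, again a walk in $\cT$, hence null-homotopic. For (G3): for $\{a,b,c\}$ in a common face, the image of $(ab)(bc)(ca)$ is $P_a\cdot(a,b)\cdot P_b^{-1}\cdot P_b\cdot(b,c)\cdot P_c^{-1}\cdot P_c\cdot(c,a)\cdot P_a^{-1}$; the $P_b^{-1}P_b$ and $P_c^{-1}P_c$ cancel by contractions, leaving $P_a\cdot(a,b,c,a)\cdot P_a^{-1}$, and the triangle $a,b,c,a$ lies in one face so is null-homotopic (contract $b$, then contract $c$, then the remaining backtrack). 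Thus $\psi$ is a well-defined homomorphism.

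Finally I would verify $\psi\circ\rho=\mathrm{id}$ and $\rho\circ\psi=\mathrm{id}$. For $\rho\circ\psi$: evaluate $\rho$ on the walk $P_u\cdot(u,v)\cdot P_v^{-1}$; every edge of $P_u$ and $P_v$ lies in $\cT$, so by (G1) all those factors are $\mathbbm 1$, and we are left with exactly $(uv)$; since these generate $\mathbb{G}^{\cH,\mathcal{T}}$, $\rho\circ\psi=\mathrm{id}$. For $\psi\circ\rho$: take a cycle $C=t_1,t_2,\dots,t_k,t_1$; then $\psi(\rho([C]))$ is the class of the concatenation $\big(P_{t_1}(t_1,t_2)P_{t_2}^{-1}\big)\big(P_{t_2}(t_2,t_3)P_{t_3}^{-1}\big)\cdots\big(P_{t_k}(t_k,t_1)P_{t_1}^{-1}\big)$; the internal $P_{t_i}^{-1}P_{t_i}$ pairs cancel by contractions, and $P_{t_1}$ is the null path, leaving precisely the walk $t_1,t_2,\dots,t_k,t_1$, i.e.\ $[C]$. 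Hence $\psi\circ\rho=\mathrm{id}$, and $\rho$ is an isomorphism.

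The main obstacle I anticipate is not any single step but marshalling the bookkeeping cleanly: one needs the basic sublemma that \emph{any closed walk confined to the tree $\cT$ is null-homotopic} (proved by induction on length via contraction of a leaf-most backtrack, using that an edge $s,t$ of a simplicial complex has $\{s,t\}$ contained in a face so $s,t,s$ is contractible), and one must be slightly careful that expansions/contractions as defined require the relevant two or three consecutive vertices to lie in a \emph{common face} — which is exactly what is available in each cancellation above because tree edges and the triangle $\{a,b,c\}$ come from faces. Everything else is the routine verification that the two maps are mutually inverse homomorphisms.
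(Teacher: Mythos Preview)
Your proof is correct and follows essentially the same strategy as the paper: verify well-definedness of $\rho$ via a single expansion/contraction using (G3), then build an inverse by sending each edge $(uv)$ to the tree-path cycle $P_u\cdot(u,v)\cdot P_v^{-1}$, check the relations (G1)--(G3) die, and confirm the two compositions are the identity. The only cosmetic difference is that the paper defines its inverse $\gamma^{t_1,\mathcal T}$ directly on arbitrary words by inserting shortest $\mathcal T$-paths between consecutive generators (and proves the sublemma that any two $\mathcal T$-paths with the same endpoints are homotopic), whereas you define $\psi$ on generators and extend multiplicatively (using the equivalent sublemma that closed walks in $\mathcal T$ are null-homotopic); the resulting maps agree.
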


\begin{proof}
First of all, $\rho^{t_1,\mathcal{T}}$ is a well-defined mapping. In order to demonstrate this let $C= t_1,t_2, \ldots, t_k, t_1$ be a cycle, and a cycle $C^\prime$ is produced from $C$ by expansion or contraction. We need to prove that $\rho^{t_1,\mathcal{T}}([C]) = \rho^{t_1,\mathcal{T}}([C^\prime])$.\\[1mm] 
{\bf Expansion.} If $t_i$, $t_{i+1}$, and $t$ belong to the same face, then 
\[
C^\prime=t_1, \ldots, t_i, t, t_{i+1}, \ldots, t_k.
\]
So we have 
\[
\rho^{t_1,\mathcal{T}}([C^\prime])=(t_1t_2)\cdot\ldots\cdot(t_{i-1}t_i)\cdot (t_i t)\cdot(t t_{i+1})\cdot(t_{i+1} t_{i+2})\cdot\ldots\cdot(t_kt_1),
\] 
which is equal to $\rho^{t_1,\mathcal{T}}([C])$ since $(t_i t_{i+1}) = (t_i t) \cdot (t t_{i+1})$ by (G3).\\[1mm] 
{\bf Contraction.} This case is similar.\\[1mm]

It is straightforward that $\rho^{t_1,\mathcal{T}}$ is a group homomorphism since 
$$
\rho^{t_1,\mathcal{T}}([C_1] \cdot [C_2]) = \rho^{t_1,\mathcal{T}}([C_1 \cdot C_2]) = \rho^{t_1,\mathcal{T}}([C_1]) \cdot \rho^{t_1,\mathcal{T}}([C_2]).
$$

We now find a homomorphism $\gamma^{t_1,\mathcal{T}}: \mathbb{G}^{\cH, \mathcal{T}} \rightarrow \pi_1(\cH, t_1)$ and prove that $\rho^{t_1,\mathcal{T}} \circ \gamma^{t_1,\mathcal{T}}$ and $\gamma^{t_1,\mathcal{T}} \circ \rho^{t_1,\mathcal{T}}$ are identity maps which will complete the proof of the lemma. Let $a= (v_1v_2)\cdot(v_3v_4)\cdot\ldots\cdot(v_{2k-1}v_{2k})$ be an element of $\mathbb{G}^{\cH, \mathcal{T}}$. Then for any $i \in [k-1]$, there exists a unique shortest path $P_i$ from $v_{2i}$ to $v_{2i+1}$ in $\mathcal{T}$. Let also $P_0$ and $P_k$ be unique shortest paths in $\mathcal{T}$ from $t_1$ to $v_1$ and from $v_{2k}$ to $t_1$, respectively. Then
$$
C = P_0, v_1, v_2, P_1, v_3, v_4 \ldots, P_i, v_{2i+1}, v_{2i+2}, \ldots, 
P_{k-1}, v_{2k-1}, v_{2k}, P_{k}
$$
is a cycle in $\GA$ starting at $t_1$. We set $\gamma^{t_1,\mathcal{T}}(a) = [C]$. To prove that $\gamma^{t_1,\mathcal{T}}$ is a well-defined map, we need to show that if $b=(u_1u_2) \cdot(u_3 u_4)\cdot\ldots\cdot(u_{2r-1}u_{2r})$ is equal to $a$ in $\mathbb{G}^{\cH, \mathcal{T}}$, then $\gamma^{t_1,\mathcal{T}}(a)=\gamma^{t_1,\mathcal{T}}(b)$. Equivalently, it suffices to show that if we apply one of the three relations (G1)--(G3) to $a$, then the image with $\gamma^{t_1,\mathcal{T}}(a)$ does not change. The details are described below:
\begin{itemize}
\item 
For any $xy\in E(\mathcal{T})$, the element $b = (v_1v_2)\cdot\ldots\cdot(v_{2i-1}v_{2i})\cdot (xy)\cdot(v_{2i+1}v_{2i+2})\cdot\ldots\cdot(v_{2k-1}v_{2k})$, which is equal to $a$ in $\mathbb{G}^{\cH, \mathcal{T}}$, has the same image under $\gamma^{t_1,\mathcal{T}}$, i.e., $\gamma^{t_1,\mathcal{T}}(a) = \gamma^{t_1,\mathcal{T}}(b)$. Indeed, let $P_x$ and $P_y$ be the shortest paths in $\mathcal{T}$ form $v_{2i}$ to $x$ and from $y$ to $v_{2i+1}$, respectively. Then $\gamma^{t_1,\mathcal{T}}(b)=[C^\prime]$ where
\begin{align*}
C^\prime & = P_0, v_1, v_2, P_1, v_3, v_4 \ldots, P_{i-1}, v_{2i-1}, v_{2i}, P_x, x,y, P_y, v_{2i+1},\\
& \hspace*{15mm} v_{2i+2} \ldots, P_{k-1}, v_{2k-1}, v_{2k}, P_{k}.
\end{align*}
Now, if we can prove that any two paths $Q_1$ and $Q_2$ in $\mathcal{T}$ from $v_{2i}$ to $v_{2i+1}$ are homotopic, then we are done with the proof of this part since the only difference between $C$ and $C^\prime$ is that $P_i$ is replaced with $P_x, x,y, P_y$ which are both paths from $v_{2i}$ to $v_{2i+1}$ in $\cT$.

\begin{claim}
For tuples $u, v \in V(\mathcal{T})$, let $Q_1, Q_2$ be paths in $\mathcal{T}$ from $u$ to $v$. Then $Q_1$ and $Q_2$ are homotopic.
\end{claim}
\begin{claimproof}
Let $Q$ be the shortest path from $u$ to $v$ in $\mathcal{T}$, then we prove that $Q_1$ is homotopic to $Q$. Suppose $Q \neq Q_1$. By a sequence of contractions, we convert $Q_1$ to $Q$. Since $\mathcal{T}$ is a tree, either $Q_1$ has two equal consecutive tuples, i.e., $Q_1=u,\ldots, z,z, \ldots, v$, or there is an edge of $\cT$ that in $Q_1$ is traversed forward and then immediately backward, i.e., $Q_1=u,\ldots, z,t,z, \ldots, v$. In the former case, we can remove one of the $z$'s by contraction. In the latter case, we can remove $t$ by contraction. As the shortest path is unique in $\cT$, repeating these transformations we transform $Q_1$ into $Q$. Hence, $Q_1$ and $Q$ are homotopic. Similarly, $Q_2$ and $Q$ are homotopic; thus $Q_1$ and $Q_2$ are homotopic which completes the proof of the claim.
\end{claimproof}

\item 
For any tuple $x$, the element $b = (v_1v_2)\cdot\ldots(v_{2i-1}v_{2i})\cdot (xx)\cdot (v_{2i+1}v_{2i+2})\cdot\ldots(v_{2k-1}v_{2k})$, which is equal to $a$ in $\mathbb{G}^{\cH, \mathcal{T}}$, has the same image with $\gamma^{t_1,\mathcal{T}}$, i.e., $\gamma^{t_1,\mathcal{T}}(a) = \gamma^{t_1,\mathcal{T}}(b)$. To see this it suffices to set $x=y$ in the previous case.

\item 
In the final case assume $v_{2i}=v_{2i+1}$ and $v_{2i-1},v_{2i},v_{2i+2}$ belong to the same face for some $i\in[k-2]$. Then we need to consider the element
$$
b = (v_1v_2)\cdot\ldots\cdot(v_{2i-3}v_{2i-2})\cdot(v_{2i-1}v_{2i+2})\cdot(v_{2i+3}v_{2i+4})\cdot\ldots\cdot(v_{2k-1}v_{2k}),
$$
which is equal to $a$ in $\mathbb{G}^{\textbf A, \mathcal{T}}$ (since $(v_{2i-1}v_{2i})\cdot (v_{2i+1}v_{2i+2}) = (v_{2i-1}v_{2i+2})$ by (G3)). Then the path $P_i$ is trivial, and so $v_{2i}$ can be contracted out of $C$. Therefore, $\gamma^{t_1,\mathcal{T}}(a) = \gamma^{t_1,\mathcal{T}}(b)$.
\end{itemize}
The fact that both $\rho^{t_1,\mathcal{T}} \circ \gamma^{t_1,\mathcal{T}}$ and $\gamma^{t_1,\mathcal{T}} \circ \rho^{t_1,\mathcal{T}}$ are identity maps can be easily checked. Hence, $\rho^{t_1,\mathcal{T}}$ is an isomorphism.
\end{proof}

\subsection{Edge-path group of a relational structure}\label{sec:edge-path-structure}

Another ingredient of the approach in \cite{Kro2020} is the claim that faces of simplicial complexes defined on a PCSP template (graphs in \cite{Kro2020}) are preserved under polymorphisms. We introduce a more general version of this connection. 

Let $(\bA,\bB)$ be a PCSP template and $\Phi$ a pp-formula with $r$ free variables in the vocabulary of $(\bA,\bB)$. Let $\rel^\bA,\rel^\bB$ be the $r$-ary relations defined by $\Phi$ in $\bA,\bB$, respectively. Let also $\cH^\bA,\cH^\bB$ be simplicial complexes on $\rel^\bA,\rel^\bB$, respectively. The pair $\cH^\bA,\cH^\bB$ is said to be \emph{polymorphism stable} if for any polymorphism $f(\vc xn)$ of $(\bA,\bB)$ and any faces $\vc\sg n$ of $\cH^\bA$ the set 
$
f(\sigma_1, \ldots, \sigma_n):= \{ f(t_1,\ldots,t_n) \; | \; t_i \in \sigma_i \}
$
is a face of $\cH^\bB$, where $f(t_1,\ldots,t_n)$ is computed coordinate-wise.

Next, we describe a construction that gives an example of how polymorphism-stable simplicial complexes can be obtained. In particular, it includes all the existing applications of the topological method in PCSP. In addition to $\Phi$ and $\rel$ let $\Psi$ be another pp-formula in the vocabulary of $(\bA,\bB)$, with $2r$ free variables, such that $\Psi$ defines relations $\relo^\bA,\relo^\bB$ in $\bA,\bB$, respectively, and such that $\relo^\bA,\relo^\bB$ are binary reflexive and symmetric relations on $\rel^\bA,\rel^\bB$, respectively. A binary reflexive and symmetric relation is often called a \emph{tolerance} and we will be using this term from now on. In particular, for $\rel$ and $\relo$ as above, we will say that  $\relo$ is a tolerance of $\rel$. The following is an example of a tolerance when $\bA, \bB$ are graphs.

\begin{example}\label{exa:forks}
Let $(\bA,\bB)$ be a PCSP template where $\bA,\bB$ are graphs. Let $\rel^\bA=E(\bA),\rel^\bB=E(\bB)$, i.e., the formula $\Phi$ is just $E(x,y)$. For $\Psi$ consider $\Psi(x_1,y_1,x_2,y_2)= (y_1=y_2)$. Then $\relo^\bA,\relo^\bB$ consist of pairs $((s_1,t_1),(s_2,t_2))$ of $\bA,\bB$, respectively, with $t_1=t_2$.
\end{example}

If $\relo$ is a tolerance on a set $A$, a \emph{class} of $\relo$ is a subset $B\sse A$ such that $(a,b)\in\relo$ for any $a,b\in B$. A \emph{maximal class} of $\relo$ is defined as a maximal set like this in terms of inclusion. Unlike equivalence classes, maximal classes of a tolerance may overlap. 

\begin{lemma}\label{lem:tol-classes}
Let $\rel$ be a relation pp-definable in a PCSP template $(\bA,\bB)$ and $\relo$ its tolerance. Let also $\cH^\bA,\cH^\bB$ denote simplicial complexes on $\rel^\bA,\rel^\bB$, respectively, whose faces are classes of the tolerances $\relo^\bA,\relo^\bB$ respectively. Then the pair $\cH^\bA,\cH^\bB$ is polymorphism-stable.
\end{lemma}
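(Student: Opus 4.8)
The plan is to unwind the definition of polymorphism-stability and reduce it to the preservation property of polymorphisms recorded in the Preliminaries, applied not to $\rel$ but to the tolerance $\relo$. Fix a polymorphism $f(\vc xn)$ of $(\bA,\bB)$ and faces $\vc\sg n$ of $\cH^\bA$; by the hypothesis on $\cH^\bA$, each $\sg_i$ is a (nonempty) class of $\relo^\bA$. The goal is to show that $f(\vc\sg n)$ is a class of $\relo^\bB$, so that it is a face of $\cH^\bB$.

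First I would record the easy points. Since $\Phi$ and $\Psi$ are pp-formulas, both $\rel$ and $\relo$ are pp-definable in $(\bA,\bB)$, the latter as a $2r$-ary relation on the domains; hence by \cite[Theorem~2.26]{Barto21:algebraic} the polymorphism $f$ is also a polymorphism of $((A;\rel^\bA),(B;\rel^\bB))$ and of $((A;\relo^\bA),(B;\relo^\bB))$. From the first, $f$ sends $n$-tuples of elements of $\rel^\bA$ to elements of $\rel^\bB$, so $f(\vc\sg n)\sse\rel^\bB$; and $f(\vc\sg n)$ is nonempty because faces of $\cH^\bA$ are nonempty.

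The main step is to verify pairwise $\relo^\bB$-relatedness. Take $u,v\in f(\vc\sg n)$, say $u=f(\vc sn)$ and $v=f(\vc tn)$ with $s_i,t_i\in\sg_i$ for all $i$. Because $\sg_i$ is a class of $\relo^\bA$, the $2r$-tuple obtained by concatenating $s_i$ and $t_i$ lies in $\relo^\bA$ for each $i\in[n]$. Applying $f$ coordinate-wise to these $n$ tuples $(s_1,t_1),\ldots,(s_n,t_n)$ and using that $f$ preserves $\relo$, the resulting $2r$-tuple lies in $\relo^\bB$. The key bookkeeping observation is that coordinate-wise application of $f$ commutes with concatenation of tuples: writing out the indices, the first $r$ coordinates of the resulting tuple are exactly $f(\vc sn)=u$ and the last $r$ are exactly $f(\vc tn)=v$, so the resulting $2r$-tuple is precisely the concatenation of $u$ and $v$. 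Hence $(u,v)\in\relo^\bB$. As $u,v$ were arbitrary, $f(\vc\sg n)$ is a nonempty subset of $\rel^\bB$ on which $\relo^\bB$ holds pairwise, i.e.\ a class of $\relo^\bB$, and therefore a face of $\cH^\bB$.

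There is no serious obstacle here; this is a definition-chasing argument. The only point requiring a little care is the identification in the last step---that applying $f$ coordinate-wise to the concatenated tuples $(s_i,t_i)$ yields exactly the concatenation of $f(\vc sn)$ and $f(\vc tn)$---together with remembering to record the side conditions (nonemptiness of faces, and the containment $f(\vc\sg n)\sse\rel^\bB$) that are part of what it means to be a class of $\relo^\bB$.
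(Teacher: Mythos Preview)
Your proof is correct and follows the same approach as the paper: reduce polymorphism-stability to the fact that $f$ preserves the pp-definable relation $\relo$, so that $(s_i,t_i)\in\relo^\bA$ for all $i$ yields $(f(\vc sn),f(\vc tn))\in\relo^\bB$. The paper's proof is a one-line compression of exactly this argument; your version just spells out the auxiliary checks (nonemptiness, containment in $\rel^\bB$, the concatenation bookkeeping) more carefully.
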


\begin{proof}
It suffices to show that for any $t_1,\ldots,t_n,t'_1,\ldots,t'_n\in\rel^\bA$ such that $(t_i,t'_i)\in\relo^\bA$, $i\in[n]$, it holds that $(f(t_1,\ldots,t_n),f(t'_1,\ldots,t'_n))\in\relo^\bB$. This however is straightforward as $\relo$ is pp-definable in $(\bA,\bB)$.
\end{proof}

The approach of \cite{Kro2020} uses a specific definition of faces that we are going to state now, again in a slightly more general setting, and show that it fits into our framework. 

Let $\bA=(A,\rel^\bA)$ be relational structure with a single $r$-ary relation. A \emph{face} of the relation $R^{\bA}$ is a set $\sg\sse R^\bA$ of the form $\sg\sse X_1\tms X_r\sse\rel^\bA$ for some $\vc Xr\sse A$.  We now introduce a tolerance whose classes are exactly the faces of $\cH^\bA$.

For $i\in[r]$, let $\relo_i$ denote a generalization of the tolerance introduced in Example~\ref{exa:forks}, that is, $(a,b)\in\relo_i$ if and only if $a_j=b_j$ for $j\in[r]-\{i\}$ and $a=(\vc ar),b=(\vc br)\in\rel$. This tolerance is defined by the pp-formula
$Q_i(x,y)=\bigwedge_{j\in[r]-\{i\}}(x_j=y_j)
$.
Next, let $\pi$ be a permutation of $[r]$ and the binary relation $\relo_\pi$ is given by $\relo_\pi=\relo_{\pi(1)}\circ\dots\circ\relo_{\pi(r)}$. This can be defined by the pp-formula
\[
\relo_\pi(x,y)=\exists z^1\zd z^{r-1}\ \ \relo_{\pi(1)}(x,z^1)\wedge\relo_{\pi(2)}(z^1,z^2)\wedge\dots\wedge \relo_{\pi(r)}(z^{r-1},y).
\]
Note that $\relo_\pi$ is \emph{not} necessarily a tolerance, as it may not be symmetric. The following lemma shows however that the intersection of $\relo_\pi$ for all permutations $\pi$ is a tolerance, and its classes are the faces of $\cH^\bA$. 

\begin{lemma}\label{lem:tolerance}
Let $\bA$ and $\relo_\pi$ be as above. Then the relation 
\[
\relo(x,y)=\bigwedge_{\pi \text{ is a permutation of } [r]} \relo_\pi
\]
is a tolerance, for any $\vc Xr\sse A$ such that $X_1\tm\dots\tm X_r\sse\rel$, the set $X_1\tm\dots\tm X_r$ is a class of $\relo$, and every maximal class of $\relo$ has this form.
\end{lemma}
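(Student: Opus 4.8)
The plan is to prove the three claims separately, in the order they are stated: first that $\relo$ is a tolerance (reflexive and symmetric), then that every product set $X_1\tm\dots\tm X_r\sse\rel$ is a class of $\relo$, and finally that every maximal class has the product form.

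For reflexivity, note each $\relo_i$ is reflexive (take the witnesses $z^j$ equal to $x$), hence each $\relo_\pi$ is reflexive, hence so is their intersection $\relo$. The slightly less obvious part is symmetry: individually $\relo_\pi$ need not be symmetric, but the intersection over all $\pi$ is. The key observation is that the relation $\relo_{\pi(r)}\circ\dots\circ\relo_{\pi(1)}$ obtained by composing in the reverse order equals $\relo_{\pi^{\mathrm{rev}}}$ where $\pi^{\mathrm{rev}}$ is the permutation listing the indices in reverse; and since each $\relo_i$ is symmetric, $(a,b)\in\relo_\pi$ iff $(b,a)\in\relo_{\pi^{\mathrm{rev}}}$. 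As $\pi$ ranges over all permutations so does $\pi^{\mathrm{rev}}$, so $(a,b)\in\relo$ iff $(b,a)\in\relo$. This handles the first claim.

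For the second claim, suppose $X_1\tm\dots\tm X_r\sse\rel$ and take $a=(\vc ar),b=(\vc br)$ with $a_i,b_i\in X_i$ for all $i$. Fix a permutation $\pi$; I must exhibit witnesses $z^1,\dots,z^{r-1}$ for $(a,b)\in\relo_\pi$. The natural choice is to change the coordinates one at a time in the order dictated by $\pi$: let $z^0=a$, and let $z^k$ agree with $z^{k-1}$ except in coordinate $\pi(k)$, where it takes the value $b_{\pi(k)}$; then $z^{r-1}=b$. Each intermediate tuple $z^k$ has all its coordinates in the corresponding $X_j$ (each coordinate is either some $a_j$ or some $b_j$), so $z^k\in X_1\tm\dots\tm X_r\sse\rel$, and consecutive tuples differ in exactly the coordinate $\pi(k)$, so $(z^{k-1},z^k)\in\relo_{\pi(k)}$. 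Hence $(a,b)\in\relo_\pi$ for every $\pi$, so $(a,b)\in\relo$, proving $X_1\tm\dots\tm X_r$ is a class of $\relo$.

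For the third claim, let $\sg$ be a maximal class of $\relo$, and set $X_i=\{a_i : (\vc ar)\in\sg\}$ to be the projection of $\sg$ onto coordinate $i$. Clearly $\sg\sse X_1\tm\dots\tm X_r$, and it suffices to show $X_1\tm\dots\tm X_r\sse\rel$ and that $X_1\tm\dots\tm X_r$ is a class of $\relo$; then by maximality $\sg=X_1\tm\dots\tm X_r$, which is of the required form, and the second claim gives that this set is indeed a class. The crux, and the step I expect to be the main obstacle, is showing that an arbitrary ``mixed'' tuple $c=(c_1,\dots,c_r)$ with $c_i\in X_i$ actually lies in $\rel$ — a priori we only know that for each $i$ there is \emph{some} tuple in $\sg$ whose $i$-th coordinate is $c_i$. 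The idea is to build $c$ from such tuples by the same one-coordinate-at-a-time argument: pick $a^{(i)}\in\sg$ with $a^{(i)}_i=c_i$, start from any element $a^{(0)}\in\sg$, and successively move to tuples that agree with $a^{(i)}$ on the first $i$ coordinates. For this to stay inside $\rel$ at each step I need to know that the intermediate tuples are related to elements of $\sg$ via the $\relo_i$'s, and here is where the definition of $\relo$ as the intersection over \emph{all} permutations (rather than just the identity) is essential: it guarantees that any two elements of $\sg$ can be connected by a chain of single-coordinate changes in \emph{any prescribed order}, and in particular through intermediate tuples that must themselves lie in $\rel$ (being witnesses $z^j$ in the pp-definition of $\relo_\pi$). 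Carefully chaining these, one shows each coordinate-splicing of elements of $\sg$ lands in $\rel$, first for pairs and then by induction for the full mixed tuple $c$; and that $c$ is $\relo$-related to every element of $\sg$, so that $\sg\cup\{c\}$ is still a class, forcing $c\in\sg$ by maximality. Once $X_1\tm\dots\tm X_r\sse\sg$ is established the reverse inclusion is trivial, completing the proof.
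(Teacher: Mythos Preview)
Your handling of reflexivity, symmetry, and the second claim is correct and coincides with the paper's proof: the reverse-permutation trick for symmetry and the one-coordinate-at-a-time witnesses for the second claim are exactly what the paper does.

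For the third claim your plan has a genuine gap, and it is the same gap present in the paper's own argument. You correctly note that for any $a,b\in\sigma$ and any permutation $\pi$, the (uniquely determined) witnesses for $(a,b)\in\relo_\pi$ place every two-way splice of $a$ and $b$ inside $\rel$. The trouble is the passage from two-way to $r$-way splices: after one single-coordinate change from $a\in\sigma$ you obtain a tuple in $\rel$ but not, in general, in $\sigma$, so you cannot invoke $\relo$ again to justify the next change. Your ``then by induction for the full mixed tuple'' does not bridge this, and in fact it cannot, because for $r\ge 3$ the assertion is false. Take $A=\{0,1,2\}$, $r=3$, and let $\rel$ be the set of triples in $A^3$ using at most two distinct values. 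Then $B=\{(0,0,0),(1,1,1),(2,2,2)\}$ is a $\relo$-class, since every witness between two constant tuples is two-valued and hence lies in $\rel$; and $B$ is maximal, since any non-constant $d\in\rel$ omits some value $s\in A$, and choosing $\pi(1)$ to be a coordinate of $d$ whose two companion coordinates still carry both of $d$'s values makes the first witness for $(d,(s,s,s))$ three-valued and thus outside $\rel$. Yet $B$ is plainly not a product. The paper's proof has exactly this defect: it establishes that the modified tuple $a^1$ lies in $\rel$ and then writes ``as required'', never verifying $a^1\in B$. For $r=2$ the argument does go through (a single coordinate change already reaches every element of $X_1\times X_2$ directly as a witness between two members of $B$), and that is the case used in the paper's graph applications.
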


\begin{proof}
Since each of $\relo_i$ is reflexive, so are $\relo_\pi$ and $\relo$. To check that $\relo$ is symmetric it suffices to check that $(b,a)\in\relo_\pi$ whenever $(a,b)\in\relo$ for any permutation $\pi$ of $[r]$. As $(a,b)\in\relo$, $(a,b)\in \relo_{\pi'}$, where $\pi'(i)=\pi(r-i+1)$, that is, 
\[
(a,b)\in\relo_{\pi'}=\relo_{\pi'(1)}\circ\dots\circ\relo_{\pi'(r)}=\relo_{\pi(r)}\circ\dots\circ\relo_{\pi(1)}.
\]
This implies $(b,a)\in\relo_\pi$.

Now, let $\vc Xr\sse A$ be such that $X_1\tm\dots\tm X_r\sse\rel$ and $a,b\in X_1\tm\dots\tm X_r$. The sequence 
\[
a=(\vc ar),(b_1,a_2\zd a_r)\zd(\vc b{r-1},a_r),(\vc br)=b
\]
witnesses that $(a,b)\in\relo_\pi$ for the identity permutation $\pi$. For other permutations the argument is essentially the same. 

Finally, let $B\sse\rel$ be a maximal class of $\relo$ and $X_i=\{a_i\mid(\vc ar)\in B\}$.
Clearly, $B\sse X_1\tm\dots\tm X_r$, we show that $X_1\tm\dots\tm X_r\sse B$. It suffices to prove that for any $a=(\vc ar)\in B$, any $i\in[r]$, and any $b_i\in X_i$, the tuple $(a_1\zd a_{i-1},b_i,a_{i+1}\zd a_r)$ belongs to $B$. Without loss of generality let $i=1$. As $b_1\in X_1$, there is $b=(\vc br)\in B$, in particular, $(a,b)\in\relo_\pi$ for the identity permutation $\pi$. In other words, $(a,b)\in\relo_1\circ\dots\circ\relo_r$ and there are $a=a^0,a^1\zd a^{r-1},a^r=b\in\rel$ such that $(a^{j-1},a^j)\in\relo_j$ for $j\in[r]$. The tuples $a^{j-1}$ and $a^j$ can only differ in $j$th coordinate. Therefore, for $a^1=(a^1_1\zd a^1_r)$ we have $a^1_1=b_1$ and $a^1_j=a_j$ for $j\in\{2\zd r\}$, as required.
\end{proof}

\subsection{Polymorphisms and edge-path groups}\label{sec:poly-homo}

In this section we study the connection between polymorphisms of a PCSP template $(\bA,\bB)$ and edge-path groups of relations pp-definable in $\bB$. We fix a relation $\rel$ pp-definable in $(\bA,\bB)$, and a polymorphism-stable pair $\cH^\bA,\cH^\bB$ of simplicial complexes on $\rel^\bA,\rel^\bB$ such that $\GB$ is connected. Note that in this case $\pi_1(\cH^\bB,t)$ does not depend on the choice of $t\in\rel^\bB$.  

An \textit{$f$-path} in $\GB$ is a sequence of the form 
$$
f(t^1_1,\ldots,t^1_n), f(t^2_1,\ldots,t^2_n), \ldots, f(t^m_1,\ldots,t^m_n)
$$
for tuples $t^i_j$, $i \in [m], j \in [n]$, such that $t^i_j t^{i+1}_j$ is an edge of $\GA$ for every $i \in [m-1], j \in [n]$. Each $f(t^i_1,\ldots,t^i_n)$ is called a vertex, and $t^i_j$ is called the $j$-th coordinate of the $i$-th vertex. 

In the sequel $f$-paths will be constructed as follows. Let $C= t_1, \ldots, t_k, t_1$ be a cycle in $\GA$. Then in the notation above every $t^i_j$ belongs to $\{t_1, \ldots, t_k\}$, and for each $i\in[m]$, if $t^i_j=t_s$ then $t^{i+1}_j\in\{t_{s-1},t_s,t_{s+1}\}$ (Note that $t_0 = t_k$ and $t_{-1}=t_{k-1}$). An $f$-path satisfying these conditions for a cycle $C$ will be called \emph{$C$-bound}. A $C$-bound $f$-path is said to be \emph{semi-normal} if it satisfies the extra condition: for every $i\in[m-1]$ the sequences $(t^i_1,\ldots,t^i_n)$ and $(t^{i+1}_1,\ldots,t^{i+1}_n)$ differ in the set $S_i\sse[n]$ of coordinates, and for each $j\in S_i$, $t^i_j=t_{s_j},t^{i+1}_j=t_{s_j+1}$ or for each $j\in S_i$, $t^i_j=t_{s_j},t^{i+1}_j=t_{s_j-1}$. An $f$-path is called a \textit{normal} if it is semi-normal and $|S_i|=1$ for every $i\in[m]$.  

The notation for normal and semi-normal $f$-paths can be simplified. Instead of writing the $i$-th vertex in the full format, we can only indicate the difference, that is, the set $S_i$, between the current vertex and the previous vertex, and the direction of the change by $S^+_i$ or $S^-_i$. If the $f$-path is normal and $S_i=\{s\}$, we write $s^+,s^-$ instead of $S^+_i,S^-_i$. For instance,  the normal $f$-path
$$
C_1= f(t_1, t_1, \ldots, t_1), f(t_2, t_1, \ldots, t_1), \ldots, f(t_k, t_1, \ldots, t_1), f(t_1, t_1, \ldots, t_1)
$$
can also be represented as
$$
C_1= f(t_1, t_1, \ldots, t_1), 1^+, 1^+, \ldots, 1^+,  f(t_1, t_1, \ldots, t_1).
$$
Note that we write the first and last vertices in the full format. Similarly, the semi-normal $f$-path
$$
C= f(t_1, t_1, \ldots, t_1), f(t_2, t_2, \ldots, t_2), \ldots, f(t_k, t_k, \ldots, t_k), f(t_1, t_1, \ldots, t_1)
$$
can also be represented as
$$
C= f(t_1, t_1, \ldots, t_1), [n]^+, [n]^+, \ldots, [n]^+,  f(t_1, t_1, \ldots, t_1).
$$

By Lemma~\ref{lem:tol-classes} any semi-normal $f$-path (cycle) in $\GB$ is a path (cycle) in $\GB$. The next lemma shows that certain rearrangements of semi-normal paths are homotopic.

\begin{lemma}\label{lem:cycles-body}
In the above notation\\[2mm]
(1) $C_i$, $i\in[n]$, is a cycle in $\GB$ that starts at $t$.\\[1mm]
(2) For any $i,j\in[n]$, $[C_i] \cdot [C_j] = [C_j] \cdot [C_i]$.\\[1mm]
(3) Let $\ov t=(t_{i_1}\zd t_{i_n})$.
For any $S_1,S_2\sse[n]$, $S_1\cap S_2=\emptyset$ the cycles 
\begin{align*}
& D=f(t_1\zd t_1)\zd f(\ov t),(S_1\cup S_2)^+\zd f(t_1\zd t_1),\\
& D'=f(t_1\zd t_1)\zd f(\ov t),S_1^+,S_2^+\zd f(t_1\zd t_1), \text{and}\\
& D''=f(t_1\zd t_1)\zd f(\ov t),S_2^+,S_1^+\zd f(t_1\zd t_1)
\end{align*}
are homotopic (here dots denote the common part of $D,D',D''$).
\end{lemma}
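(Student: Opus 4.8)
All three parts run on a single observation, call it the \emph{box principle}: if $\mu_1\zd\mu_n$ are faces of $\cH^\bA$ then, by polymorphism-stability, $F:=f(\mu_1\zd\mu_n)$ is a \emph{single} face of $\cH^\bB$, and it contains $f(a_1\zd a_n)$ for every choice $a_\ell\in\mu_\ell$. Since a simplicial complex is downward closed, every non-empty subset of $F$ is again a face, so any two or three of the vertices obtained this way lie in a common face; hence they are pairwise adjacent (or equal) in $\GB$ and the expansion/contraction moves (a),(b) may be applied to them freely. For part~(1): given consecutive positions $s,s+1$ on $C$, pick a face $\sigma_s$ of $\cH^\bA$ with $\{t_s,t_{s+1}\}\sse\sigma_s$ (it exists because $t_st_{s+1}$ is an edge of $\GA$), and apply the box principle with $\mu_i=\sigma_s$ and $\mu_\ell=\{t_1\}$ for $\ell\ne i$; then the two consecutive vertices $f(t_1\zd t_s\zd t_1)$ and $f(t_1\zd t_{s+1}\zd t_1)$ of $C_i$ lie in the common face $f(\mu_1\zd\mu_n)$ of $\cH^\bB$, so they are adjacent or equal. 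Thus $C_i$ is a closed walk in $\GB$ (a cycle after contracting any repeated consecutive vertices), and it evidently starts and ends at $f(t_1\zd t_1)=t$.

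For part~(3), put $\ov t^{00}=\ov t$ and, for $\ve_1,\ve_2\in\{0,1\}$, let $\ov t^{\ve_1\ve_2}$ be $\ov t$ with every coordinate in $S_1$ advanced $\ve_1$ steps along $C$ and every coordinate in $S_2$ advanced $\ve_2$ steps (well-defined because $S_1\cap S_2=\emptyset$). Choosing for each $\ell$ a face $\mu_\ell$ of $\cH^\bA$ that contains the one-step pair actually used in coordinate $\ell$ (and $\mu_\ell=\{t_{i_\ell}\}$ when $\ell\notin S_1\cup S_2$), the box principle puts $f(\ov t^{00}),f(\ov t^{10}),f(\ov t^{01}),f(\ov t^{11})$ all in one face $F=f(\mu_1\zd\mu_n)$ of $\cH^\bB$. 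Now $D$ contains the two-vertex stretch $f(\ov t^{00}),f(\ov t^{11})$, while $D'$ replaces it by $f(\ov t^{00}),f(\ov t^{10}),f(\ov t^{11})$ and $D''$ by $f(\ov t^{00}),f(\ov t^{01}),f(\ov t^{11})$. Since all three vertices in each of these triples lie in $F$, move~(b) contracts the middle one, carrying $D'$ and $D''$ to $D$; as the remaining (dotted) parts of the three sequences are identical and the moves are local, $D\sim D'\sim D''$.

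For part~(2) we may assume $i\ne j$, say $i=1$, $j=2$. Index the vertices of $C$ along the cycle as $t_{s_0},t_{s_1}\zd t_{s_k}$ with $s_0=s_k$, and set $g(a,b)=f(t_{s_a},t_{s_b},t_1\zd t_1)$ for $0\le a,b\le k$; note $g(0,b)=g(k,b)$ and $g(a,0)=g(a,k)$. Unwinding the definition of concatenation, $C_1\cdot C_2$ is exactly the path tracing the bottom edge and then the right edge of the grid $\{0\zd k\}^2$, while $C_2\cdot C_1$ traces the left edge and then the top edge. Interpolate by the paths $\gamma_b$, $0\le b\le k$, that go up the left edge to height $b$, then along row $b$, then up the right edge to the top corner; then $\gamma_0=C_1\cdot C_2$ and $\gamma_k=C_2\cdot C_1$. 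To pass from $\gamma_b$ to $\gamma_{b+1}$ one must show that ``row $b$ followed by one step up the right edge'' is homotopic to ``one step up the left edge followed by row $b{+}1$'' (the left-edge prefix up to $g(0,b)$ and the right-edge suffix from $g(k,b{+}1)$ are common to both), which is achieved by flipping the unit squares $(k{-}1,b),(k{-}2,b)\zd(0,b)$ in turn: each flip replaces a stretch $g(a,b),g(a{+}1,b),g(a{+}1,b{+}1)$ by $g(a,b),g(a,b{+}1),g(a{+}1,b{+}1)$, which is precisely an instance of part~(3) with $S_1=\{1\}$, $S_2=\{2\}$ (the four corners of a unit square are covered by the box principle). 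Chaining $\gamma_0\sim\gamma_1\sim\cdots\sim\gamma_k$ yields $[C_1]\cdot[C_2]=[C_2]\cdot[C_1]$.

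\textbf{Main obstacle.} Every elementary step reduces to the box principle, which is immediate from polymorphism-stability and downward closure, so the only point needing genuine care is the grid interpolation in part~(2): one must choose the interpolating family $\gamma_b$ and the order of the square flips so that each elementary move is licensed by part~(3), and check that the ``frozen'' portions of consecutive $\gamma_b$ really do coincide. This is the familiar discrete shadow of the classical argument that commuting loops commute in $\pi_1$, hence routine but not quite free.
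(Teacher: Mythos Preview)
Your proof is correct. For part~(3) your argument is identical to the paper's: both observe that the three consecutive vertices lie in a single face of $\cH^\bB$ (your ``box principle'' is exactly polymorphism-stability applied to a product of one-step faces) and remove the middle vertex by a single contraction; the paper's $P_1,P_2,P_3,v_j$ are your $D',D'',D,f(\ov t^{10})$ (resp.\ $f(\ov t^{01})$) under different labels.

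The paper's printed proof actually addresses only part~(3); parts~(1) and~(2) are stated but not argued. You supply both. Your part~(1) is the immediate consequence of the box principle that the paper presumably regards as obvious. Your part~(2) is a genuine addition: the grid-interpolation (staircase) argument, where $\gamma_b$ sweeps from $C_1\cdot C_2$ to $C_2\cdot C_1$ by flipping unit squares one at a time, each flip being an instance of part~(3) with $S_1=\{i\}$, $S_2=\{j\}$. This is the standard discrete shadow of the Eckmann--Hilton/commuting-loops argument, and it is carried out correctly here; the periodic identifications $g(k,\cdot)=g(0,\cdot)$ and $g(\cdot,k)=g(\cdot,0)$ make the ``left/right'' and ``top/bottom'' edges literally the same path, so the boundary matching in your description is sound. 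One small remark: when you invoke part~(3) inside a longer cycle you are really using the \emph{local} fact established in its proof (the three vertices lie in one face, so contraction applies), not the statement about the specific cycles $D,D',D''$; this is exactly what you say in your proof plan, so there is no gap.
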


\begin{proof}
As is easily seen $v_{j-1}, v_j,v_{j+1}$ belong to the same face so we can contract away $v_j$ from $P_1$ to obtain $P_3$. So $P_1$ is homotopic to $P_3$. In a similar way $P_2$ is homotopic to $P_3$.
\end{proof}

If $P= v_1, \ldots, v_{j-1}, S^+, v_{j+1}, \ldots, v_m$ is an $f$-path where $S= \{ a_1, \ldots, a_s \}$, then by Lemma~\ref{lem:cycles-body}(3) $P$ is homotopic to
$$
P^\prime= v_1, \ldots, v_{j-1}, a_1^+, \ldots, a_s^+, v_{j+1}, \ldots, v_m.
$$
This is also true for any permutation $b_1, \ldots, b_s$ of $a_1, \ldots, a_s$. So we have the following statement.

\begin{corollary} \label{col:f-path 2}
Let $P_1= v, a_1^+, \ldots, a_s^+, v$ and $P_2= v, b_1^+, \ldots, b_s^+, v$ be two normal $f$-paths where $b_1, \ldots, b_s$ is a permutation of $a_1, \ldots, a_s$. Then $P_1$ and $P_2$ are homotopic.
\end{corollary}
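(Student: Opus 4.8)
The statement to prove is Corollary~\ref{col:f-path 2}: $P_1 = v, a_1^+, \ldots, a_s^+, v$ and $P_2 = v, b_1^+, \ldots, b_s^+, v$ are homotopic when $b_1,\ldots,b_s$ is a permutation of $a_1,\ldots,a_s$.

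The plan is to reduce the general permutation case to adjacent transpositions, and handle an adjacent transposition using Lemma~\ref{lem:cycles-body}(3).

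First step: observe that any permutation of $\{1,\ldots,s\}$ is a composition of adjacent transpositions (this is the standard generation of the symmetric group by adjacent transpositions). Since homotopy $\sim$ is an equivalence relation (transitive in particular), it suffices to prove the statement when $b_1,\ldots,b_s$ differs from $a_1,\ldots,a_s$ by swapping two consecutive entries, say positions $k$ and $k+1$.

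Second step: handle the adjacent transposition. We have the $f$-path
\[
P_1 = v, a_1^+, \ldots, a_{k-1}^+, a_k^+, a_{k+1}^+, a_{k+2}^+, \ldots, a_s^+, v.
\]
Let $u$ be the vertex of $P_1$ reached after the prefix $a_1^+, \ldots, a_{k-1}^+$ (in the full-format notation of $f$-paths), so that $u = f(\ov t)$ for the appropriate tuple $\ov t$ obtained by advancing coordinates $a_1,\ldots,a_{k-1}$. Apply the paragraph preceding the corollary in two directions: collapsing $a_k^+, a_{k+1}^+$ (which are the singleton sets $\{a_k\}^+$ and $\{a_{k+1}\}^+$ with $\{a_k\}\cap\{a_{k+1\}}=\emptyset$ since $a_k\ne a_{k+1}$) into $(\{a_k\}\cup\{a_{k+1}\})^+ = \{a_k,a_{k+1}\}^+$, then re-splitting in the opposite order into $a_{k+1}^+, a_k^+$. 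Concretely, by Lemma~\ref{lem:cycles-body}(3) with $S_1=\{a_k\}$, $S_2=\{a_{k+1}\}$, the three $f$-paths
\begin{align*}
D &= v, a_1^+, \ldots, a_{k-1}^+, (\{a_k\}\cup\{a_{k+1}\})^+, a_{k+2}^+, \ldots, a_s^+, v,\\
D' &= v, a_1^+, \ldots, a_{k-1}^+, a_k^+, a_{k+1}^+, a_{k+2}^+, \ldots, a_s^+, v,\\
D'' &= v, a_1^+, \ldots, a_{k-1}^+, a_{k+1}^+, a_k^+, a_{k+2}^+, \ldots, a_s^+, v
\end{align*}
are pairwise homotopic; here $D' = P_1$ and $D''$ is exactly $P_1$ with positions $k,k+1$ transposed. (One small point: Lemma~\ref{lem:cycles-body}(3) as stated takes the split point at the very end just before the return to $f(t_1\zd t_1)$; since homotopy transformations are local to three consecutive vertices, the common prefix $a_1^+,\ldots,a_{k-1}^+$ and common suffix $a_{k+2}^+,\ldots,a_s^+$ play no role and can be suppressed — or one invokes the displayed remark just before the corollary, which already phrases this for an $f$-path with arbitrary surrounding context $v_1,\ldots,v_{j-1}$ and $v_{j+1},\ldots,v_m$.)

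Third step: compose. Writing the target permutation $b$ as $\tau_\ell\circ\cdots\circ\tau_1$ with each $\tau_i$ an adjacent transposition, we get a chain $P_1 \sim P^{(1)} \sim \cdots \sim P^{(\ell)} = P_2$ where each $\sim$ is an instance of the adjacent-transposition case applied to the current ordering. Transitivity of $\sim$ then gives $P_1 \sim P_2$.

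I expect no serious obstacle here; the corollary is essentially a bookkeeping consequence of Lemma~\ref{lem:cycles-body}(3). The only mild subtlety to get right is the first, already-resolved point above: making sure that the homotopy transformations in Lemma~\ref{lem:cycles-body}(3) can be applied with arbitrary fixed material on both sides of the swapped pair, which is immediate because expansions and contractions are local operations on three consecutive vertices of a path and leave the rest of the path untouched.

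\begin{proof}
Since homotopy of $f$-paths is an equivalence relation, and every permutation of $\{1,\ldots,s\}$ is a product of adjacent transpositions, it suffices to prove the claim when $b_1,\ldots,b_s$ is obtained from $a_1,\ldots,a_s$ by interchanging two consecutive entries $a_k,a_{k+1}$. Apply the displayed observation preceding the corollary (equivalently Lemma~\ref{lem:cycles-body}(3) with $S_1=\{a_k\}$, $S_2=\{a_{k+1}\}$, noting $S_1\cap S_2=\emptyset$ as $a_k\ne a_{k+1}$), with the common context being the prefix $v,a_1^+,\ldots,a_{k-1}^+$ and the suffix $a_{k+2}^+,\ldots,a_s^+,v$: the $f$-paths
\begin{align*}
D &= v, a_1^+, \ldots, a_{k-1}^+, (\{a_k\}\cup\{a_{k+1}\})^+, a_{k+2}^+, \ldots, a_s^+, v,\\
D' &= v, a_1^+, \ldots, a_{k-1}^+, a_k^+, a_{k+1}^+, a_{k+2}^+, \ldots, a_s^+, v,\\
D'' &= v, a_1^+, \ldots, a_{k-1}^+, a_{k+1}^+, a_k^+, a_{k+2}^+, \ldots, a_s^+, v
\end{align*}
are pairwise homotopic. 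Here $D'=P_1$ and $D''$ is $P_1$ with the entries in positions $k$ and $k+1$ interchanged, so $P_1$ is homotopic to that $f$-path. Composing such homotopies along a factorization of the given permutation into adjacent transpositions, and using transitivity, we obtain $P_1\sim P_2$.
\end{proof}
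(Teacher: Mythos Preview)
Your proof is correct and rests on the same tool the paper uses, Lemma~\ref{lem:cycles-body}(3); the only difference is organizational. The paper's argument (the paragraph immediately preceding the corollary) observes that any expansion $v,a_1^+,\ldots,a_s^+,v$ is homotopic to the single-step semi-normal path $v,\{a_1,\ldots,a_s\}^+,v$, so both $P_1$ and $P_2$ are homotopic to this common ``hub'' and hence to each other. You instead go directly from $P_1$ to $P_2$ by a chain of adjacent transpositions, each handled by Lemma~\ref{lem:cycles-body}(3) with singleton $S_1,S_2$. Both routes are immediate consequences of the lemma; the paper's is marginally shorter (one application per ordering rather than one per transposition), while yours avoids forming the set $S$ and so does not need to worry about whether the $a_i$ are distinct --- though on that point note that your parenthetical ``$a_k\ne a_{k+1}$'' is not guaranteed by the hypotheses, and you should add the (trivial) observation that when $a_k=a_{k+1}$ the transposition leaves the path literally unchanged.
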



We complete this section with a statement relating polymorphisms of $(\bA,\bB)$ with polymorphisms involving the homotopy group of $\cH^\bB$. 

\begin{theorem} \label{the:minion homomorphism from polymorphism}
The mapping $\xi_C: \Pol(\bA,\bB) \rightarrow \Pol(\mathbb Z, \pi_1(\GB))$ constructed above is a minion homomorphism.
\end{theorem}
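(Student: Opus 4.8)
The plan is to unpack what "$\xi_C$ constructed above" must mean and then verify the two axioms of a minion homomorphism. Given a polymorphism $f\in\Pol(\bA,\bB)^{(n)}$, the map $\xi_C(f)$ should be an $n$-ary polymorphism $\mathbb Z^n\to\pi_1(\GB)$, i.e.\ a group homomorphism from the free abelian group $\mathbb Z^n$ to $\pi_1(\cH^\bB)$. The natural recipe, using the fixed cycle $C=t_1,\ldots,t_k,t_1$ in $\GA$, is: send the $j$-th standard generator $e_j\in\mathbb Z^n$ to the homotopy class $[C_j]$ of the normal $f$-path that moves only the $j$-th coordinate once around $C$. Since $\mathbb Z^n$ is free abelian, such an assignment extends to a group homomorphism precisely when the images pairwise commute — and that is exactly Lemma~\ref{lem:cycles-body}(2). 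So the first step is to record that $\xi_C(f)$ is well defined as an element of $\Pol(\mathbb Z^n,\pi_1(\GB))$; this uses parts (1) and (2) of Lemma~\ref{lem:cycles-body} and is essentially immediate once the construction is spelled out.

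Next I would verify arity preservation, condition~1 of a minion homomorphism: $\xi_C(f)$ is declared to be $n$-ary when $f$ is $n$-ary, so there is nothing to prove beyond bookkeeping.

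The substantive step is condition~2: minor preservation. Let $f\in\Pol(\bA,\bB)^{(n)}$ and $\pi:[n]\to[m]$, so $f^\pi(x_1,\ldots,x_m)=f(x_{\pi(1)},\ldots,x_{\pi(n)})$. I must show $\xi_C(f)^\pi=\xi_C(f^\pi)$ as group homomorphisms $\mathbb Z^m\to\pi_1(\GB)$, and since both are determined by their values on the generators $e_\ell$ of $\mathbb Z^m$, it suffices to check $\xi_C(f)^\pi(e_\ell)=\xi_C(f^\pi)(e_\ell)$ for each $\ell\in[m]$. Unravelling the left side, $\xi_C(f)^\pi(e_\ell)=\xi_C(f)(e_{\pi^{-1}(\ell)})$ where $e_{\pi^{-1}(\ell)}=\sum_{j\in\pi^{-1}(\ell)}e_j$, so by the homomorphism property it equals $\prod_{j\in\pi^{-1}(\ell)}[C_j]$ (the order is immaterial by Lemma~\ref{lem:cycles-body}(2)). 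On the right side, $\xi_C(f^\pi)(e_\ell)=[C'_\ell]$ where $C'_\ell$ is the normal $f^\pi$-path advancing coordinate $\ell$ once around $C$; but advancing the $\ell$-th argument of $f^\pi$ corresponds, through the substitution $x_j=x_{\pi(j)}$, to advancing simultaneously all arguments $j\in\pi^{-1}(\ell)$ of $f$. Thus $C'_\ell$ is (literally) the semi-normal $f$-path $f(t_1,\ldots,t_1),(\pi^{-1}(\ell))^+,\ldots,(\pi^{-1}(\ell))^+,f(t_1,\ldots,t_1)$, and Lemma~\ref{lem:cycles-body}(3) together with Corollary~\ref{col:f-path 2} shows this cycle is homotopic to the concatenation $C_{j_1}\cdot C_{j_2}\cdots$ over $j\in\pi^{-1}(\ell)$, i.e.\ to $\prod_{j\in\pi^{-1}(\ell)}[C_j]$. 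Matching the two sides completes the proof.

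The main obstacle I anticipate is purely notational bookkeeping: making the identification between "the $f^\pi$-path advancing coordinate $\ell$" and "the $f$-path advancing the whole fibre $\pi^{-1}(\ell)$" airtight, since the two are the same walk in $\GB$ only after one carefully tracks which physical coordinate of $f$ each formal variable of $f^\pi$ feeds. Once that correspondence is set up, the homotopy equivalence is handed to us by Lemma~\ref{lem:cycles-body}(3) and Corollary~\ref{col:f-path 2}, and the commutativity needed to drop the ordering of the product is Lemma~\ref{lem:cycles-body}(2); so no genuinely new estimate or construction is required, only a clean organization of the existing lemmas.
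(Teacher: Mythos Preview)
Your proposal is correct and follows essentially the same route as the paper: well-definedness of $\xi_C(f)$ via Lemma~\ref{lem:cycles-body}(1,2), and minor preservation checked on generators by identifying the $f^\pi$-cycle in coordinate $\ell$ with the semi-normal $f$-path advancing the fibre $\pi^{-1}(\ell)$, then decomposing it into the concatenation $C_{j_1}\cdot C_{j_2}\cdots$ using Lemma~\ref{lem:cycles-body}(3) (applied repeatedly, which is also how the paper argues). The one point you leave implicit but the paper makes explicit is the fixed base point $t\in\rel^\bB$ together with a path $P$ from $t$ to $f(t_1,\ldots,t_1)$, so that each $C_j$ is really $P,\,j^+,\ldots,j^+,\,P^{-1}$; without this, the classes $[C_j]$ for different $f$ would sit in edge-path groups based at different vertices, and the cancellation of $P\cdot P^{-1}$ is what makes the concatenation step go through cleanly.
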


\begin{proof}
To prove that $\xi_C$ is a minion homomorphism, we need to show that for any ($n$-ary) $f\in\Pol(\bA,\bB)$: (a) $\xi_C(f)\in \Pol(\mathbb Z, \pi_1(\GB))$, and (b) $\xi_C(f^\pi) = \xi_C(f)^\pi$ for any map $\pi: [n] \rightarrow [m]$. Item (a) is straightforward from Lemma~\ref{lem:cycles-body}(1,2), as it implies, in particular, that the cycles $[C_1]\zd[C_n]$ generate an Abelian subgroup of $\pi_1(\GB)$, and therefore $\xi_C(f)$ is a homomorphism of $\zZ^n$ to $\pi_1(\GB)$. For item (b), since $\xi_C(f)$ is a polymorphism, the equality
$$
(\xi_C(f)^\pi)_i = [C_{j_1}] \cdot [C_{j_2}] \cdots [C_{j_r}]
$$
holds for any $i \in [m]$, where $S= \{ j_1, \ldots, j_r \}\sse[n]$ is the set of numbers whose images by $\pi$ are equal to $i$. Let
\begin{align*}
&C^\pi_i= P, ff(t_1,\ldots, t_1), i^+\zd i^+, f(t_1,\ldots, t_1), P^{-1},\\
&C^\prime= P, f(t_1, t_1, \ldots, t_1), S^+, \ldots, S^+, f(t_1, t_1, \ldots, t_1), P^{-1}
\end{align*}
be cycles in $\GB$. Observe that, in fact, $C_i^\pi$ and $C^\prime$ are the same cycle in $\GB$ since $f^\pi(x_1,\ldots,x_m) = f(x_{\pi(1)}, \ldots, x_{\pi(n)})$. By Lemma~\ref{lem:cycles-body}(3), $C^\prime$ is homotopic to
$$
C^\prime_1= P, f(t_1, t_1, \ldots, t_1), j_1^+, \ldots, j_r^+, \ldots, j_1^+, \ldots, j_r^+, f(t_1, t_1, \ldots, t_1), P^{-1},
$$
where each $S^+$ is replaced with the sequence $j_1^+, \ldots, j_r^+$ in $C^\prime$. By applying Lemma~\ref{lem:cycles-body}(3) multiple times, $C^\prime_1$ is homotopic to
\begin{align*}
C^\prime_2 &= P, f(t_1, t_1, \ldots, t_1), j_1^+, \ldots, j_1^+, \ldots, j_r^+, \ldots, j_r^+, f(t_1, t_1, \ldots, t_1), P^{-1} \\
&\sim C_{j_1} \cdot C_{j_2} \cdots C_{j_r},
\end{align*}
where the number of the $j_i^+$'s is $k$ for any $i \in [r]$. Since $P \cdot P^{-1}$ and $P^{-1} \cdot P$ are both null-homotopic, we have
$$
(\xi_C(f)^\pi)_i = [C_{j_1}] \cdot [C_{j_2}] \cdots [C_{j_r}] = [C_{j_1} \cdot C_{j_2} \cdots C_{j_r}] = [C^\prime_2].
$$
Since $C^\prime_2$, $C^\prime_1$, $C^\prime$, and $C_i^\pi$ are homotopic, we can conclude that $(\xi_C(f)^\pi)_i = [C^\pi_i]$. On the other hand, $\xi_C(f^\pi)_i = [C^\pi_i]$ by the definition of $\xi_C(f^\pi)_i$. Hence $(\xi_C(f)^\pi)_i = \xi_C(f^\pi)_i$ for any $i \in [m]$ which implies $\xi_C(f)^\pi = \xi_C(f^\pi)$ completing the proof.
\end{proof}

We say that a group $\bF$ is \emph{Abelian-bounded} if it satisfies the following two conditions:
\begin{itemize}
    \item[(A1)]
    Every Abelian subgroup of $\bF$ is a finitely generated free Abelian group (i.e.\ $\zZ^n$ for some $n$).\footnote{As is easily seen, this condition is equivalent to the following one: $\bF$ is torsion-free and every its Abelian subgroup is finitely generated. However, in order to obtain the results below it cannot be relaxed by dropping the requirement that Abelian subgroups are finitely generated.}
    \item[(A2)]
    Every nontrivial Abelian subgroup is contained in finitely many maximal Abelian subgroups.
\end{itemize}

\begin{lemma}\label{lem:free-abelian-bounded}
Any finitely generated free group is Abelian-bounded.
\end{lemma}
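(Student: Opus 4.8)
The plan is to check conditions (A1) and (A2) directly for a finitely generated free group $\bF$, using two classical structural facts about free groups (see, e.g., \cite{Rotman95:groups}): the Nielsen--Schreier theorem, that every subgroup of a free group is again free, and the fact that the centralizer $C_\bF(g)$ of any element $g\neq\mathbbm{1}$ of $\bF$ is infinite cyclic.

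For (A1): I would argue that if $H\le\bF$ is Abelian, then by Nielsen--Schreier $H$ is free; but a free group that is Abelian has rank at most $1$ (free groups of rank $\geq 2$ are non-Abelian), so $H$ is trivial ($\cong\zZ^0$) or infinite cyclic ($\cong\zZ^1$). Either way $H$ is a finitely generated free Abelian group, which is exactly (A1). (This step does not even use finite generation of $\bF$.)

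For (A2) I would establish the stronger statement that every nontrivial Abelian subgroup is contained in a \emph{unique} maximal Abelian subgroup. Let $H\le\bF$ be Abelian with $H\neq\{\mathbbm{1}\}$, and fix $g\in H$, $g\neq\mathbbm{1}$. Since $H$ is Abelian, every element of $H$ commutes with $g$, so $H\subseteq C_\bF(g)$, and $C_\bF(g)$, being infinite cyclic, is Abelian. Moreover, if $K$ is \emph{any} Abelian subgroup with $H\subseteq K$, then $g\in K$, hence likewise $K\subseteq C_\bF(g)$. Thus $C_\bF(g)$ is an Abelian subgroup containing $H$ that contains every Abelian subgroup containing $H$; it is therefore the unique maximal Abelian subgroup containing $H$, which gives (A2) (with ``finitely many'' sharpened to ``exactly one'').

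The proof is essentially just the assembly of these implications, so there is no real combinatorial obstacle: the substance is entirely in the two quoted facts about free groups. The only point needing care in the write-up is the centralizer statement. If one prefers a self-contained treatment to a citation, it follows from Nielsen--Schreier --- two nontrivial elements of a free group commute iff they are powers of a common element, since the (at most $2$-generated, hence free) subgroup they generate is Abelian, hence cyclic --- together with the uniqueness of primitive roots in a free group; the real ``obstacle'', such as it is, is deciding how much of this standard material to reprove versus cite.
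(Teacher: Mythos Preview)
Your proof is correct. For (A1) you and the paper argue identically via Nielsen--Schreier.

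For (A2) the routes genuinely differ. The paper works directly with reduced word lengths: given a nontrivial cyclic $\zA=\langle s\rangle$, it shows that any cyclic $\zB=\langle s_1\rangle$ properly containing $\zA$ must have $|s_1|<|s|$, by writing $s_1=ys_1'y^{-1}$ with $s_1'$ cyclically reduced and observing that then $s=y(s_1')^k y^{-1}$ is already reduced; finiteness follows because there are only finitely many elements of length below $|s|$. Your argument via centralizers is more conceptual and gives the sharper conclusion that the maximal Abelian overgroup is \emph{unique}, not just one of finitely many. The trade-off is that the paper's computation is fully self-contained once Nielsen--Schreier is granted, whereas you invoke (or sketch) the additional fact that centralizers in free groups are cyclic---though, as you note, that fact itself reduces to Nielsen--Schreier plus uniqueness of primitive roots, so the dependency graphs are ultimately the same.
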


\begin{proof}
Let $\bF$ be a finitely generated free group. By Nielsen-Schreier theorem \cite{Rotman95:groups} every subgroup of a free group is free. As the only nontrivial Abelian free group is 1-generated, (A1) follows. To prove (A2) let $\zA$ be a nontrivial Abelian subgroup of $\bF$ generated by an element $s\in\bF$. Let $|s|$ denote the length of the reduced word representing $s$ as a product of generators. If we prove that for every $s_1\in\bF$ such that $s_1$ generates a subgroup $\zB$ and $\zA\sse\zB$, it holds that $|s_1|<|s|$, we get that there are only finitely many Abelian subgroups $\zB$ with $\zA\sse\zB$, and (A2) follows.

Let $\zA,\zB$ and $s,s_1$ as before, we may assume that $s_1$ is a reduced word. Since $\zA\sse\zB$, $s$ belongs to $\zB$ and is a power of $s_1$ in $\bF$, say, $s=s_1^k$ for some $k$. As $s$ is an arbitrary generator of $\zA$, we may assume $k\ge0$. Let $s_1=y s'_1 y^{-1}$ be such that $|y|$ is as large as possible, and let $s'_1=b_1 \cdots b_\ell$, $\ell > 0$. Then by the choice of $y$ we have $b_\ell\ne b_1^{-1}$. As is easily seen, $s=s_1^k$ can be represented as $y (s'_1)^k y^{-1}$, and this representation is reduced. Thus, $|s|=k\ell+2|y|$, while $|s_1|=\ell+2|y|$. If $\zA\ne\zB$, then $k\ge2$ and we obtain $|s|>|s_1|$, as required. 
\end{proof}

We are in a position to prove the second main result of this section. For vectors $\ov a,\ov b\in\zZ^k$, $\ov a=(\vc ak),\ov b=(\vc bk)$, let $|\ov a|$ denote $|a_1|+\dots+|a_k|$ and let $\ov a\ov b$ denote $a_1b_1+\dots+a_kb_k$.

\begin{theorem}\label{free group bounded essential arity}
Let $\mathcal{M}$ be an abstract minion such that $\mathcal M^{(n)}$ is finite for all $n$, and assume that $\phi: \mathcal{M} \rightarrow \Pol(\mathbb Z^k, \bF)$ is a minion homomorphism where $\bF$ is an Abelian-bounded group. There exist numbers $P,N \in \mathbb{N}$ such that for any $n\in\mathbb N$ and any $f \in \mathcal{M}^{(n)}$ such that $\phi(f) \neq 0$, there exist (commuting) elements $\vc sr \in \bF$, $r\le P$, $\ov c_{ij} \in \mathbb Z^k$, $i\in[n],j\in[r]$, such that 
\[
\phi(f)(\ov x_1,\ldots, \ov x_n)=\prod_{j=1}^r s_j^{\ov c_{1j} \ov x_1+ \cdots + \ov c_{nj} \ov x_n}
\]
and $|\ov c_{11}|+\cdots+|\ov c_{nr}| \leq N$.
\end{theorem}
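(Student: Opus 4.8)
The plan is to split the proof into two parts: a uniform bound on the essential arity of the polymorphisms $\phi(f)$, which is the heart of the matter, followed by a routine reduction to a finite subminion, from which $P$ and $N$ are read off as maxima over a finite set.

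For the first part I would fix $f\in\mathcal M^{(n)}$ and write $g=\phi(f):(\mathbb Z^k)^n\to\bF$. Being a homomorphism from an Abelian group, $g$ has finitely generated Abelian image, hence by (A1) the image is $\cong\mathbb Z^m$ for some $m$; in particular it is torsion-free, and all the homomorphisms considered below factor through it. For $i\in[n]$ let $g_i$ be the homomorphism $\ov x\mapsto g(\ov 0,\ldots,\ov 0,\ov x,\ov 0,\ldots,\ov 0)$ (with $\ov x$ in position $i$); then $g(\ov x_1,\ldots,\ov x_n)=\prod_i g_i(\ov x_i)$ with commuting factors, and coordinate $i$ is essential for $g$ precisely when $g_i\neq\mathbbm 1$. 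The key observation is that for every $S\subseteq[n]$ the partial product $g_S:=\prod_{i\in S}g_i$ is recovered from the minor $f^{\pi_S}$, where $\pi_S:[n]\to[2]$ sends $S$ to $1$ and its complement to $2$: since $\phi$ preserves minors, $\phi(f^{\pi_S})=g^{\pi_S}$, and $g_S(\ov x)=g^{\pi_S}(\ov x,\ov 0)$. As $f^{\pi_S}\in\mathcal M^{(2)}$ and $\mathcal M^{(2)}$ is finite, the set $\{g_S:S\subseteq[n]\}$ has at most $M:=|\mathcal M^{(2)}|$ elements. From this the essential-arity bound follows in two steps: each $g_i=g_{\{i\}}$ lies in this set, so the $g_i$ take at most $M$ distinct values; and if $g_{i_1}=\cdots=g_{i_q}=\gamma\neq\mathbbm 1$, then $\mathbbm 1,\gamma,\gamma^2,\ldots,\gamma^q$ are $q+1$ distinct elements of $\{g_S\}$ (using torsion-freeness of the image), so each nonzero value is attained fewer than $M$ times. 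Hence $g$ has at most $M^2$ essential coordinates; put $N_1:=M^2$.

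For the reduction, given $f$ with $\phi(f)\neq\mathbbm 1$, let $i_1<\cdots<i_s$ be the essential coordinates of $\phi(f)$, so $1\le s\le N_1$. Let $\pi:[n]\to[s]$ send $i_\ell\mapsto\ell$ and be constant $1$ on the remaining indices, and set $h=f^\pi\in\mathcal M^{(s)}$. Since $f$ does not depend on its inessential coordinates, $f=h^\sigma$ for $\sigma:[s]\to[n]$, $\sigma(\ell)=i_\ell$, and therefore $\phi(f)(\ov x_1,\ldots,\ov x_n)=\phi(h)(\ov x_{i_1},\ldots,\ov x_{i_s})$. Now $h$ ranges over the finite set $\mathcal M':=\bigcup_{1\le s\le N_1}\mathcal M^{(s)}$ and $\phi(h)\neq\mathbbm 1$; for each such $h$ the image of $\phi(h)$ is free Abelian by (A1), so fixing a basis $s_1,\ldots,s_{r_h}$ of it I can write $\phi(h)(\ov y_1,\ldots,\ov y_s)=\prod_{j=1}^{r_h}s_j^{\ov d_{1j}\ov y_1+\cdots+\ov d_{sj}\ov y_s}$ with $\ov d_{\ell j}\in\mathbb Z^k$. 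Set $P:=\max\{r_h:h\in\mathcal M',\ \phi(h)\neq\mathbbm 1\}$ and $N:=\max\{\sum_{\ell,j}|\ov d_{\ell j}|:h\in\mathcal M',\ \phi(h)\neq\mathbbm 1\}$, both finite. Substituting $\ov y_\ell=\ov x_{i_\ell}$ into the expression for $\phi(h)$ and defining $\ov c_{ij}:=\ov d_{\ell j}$ when $i=i_\ell$ and $\ov c_{ij}:=\ov 0$ otherwise yields exactly the asserted formula for $\phi(f)$, with $r=r_h\le P$ and $|\ov c_{11}|+\cdots+|\ov c_{nr}|=\sum_{\ell,j}|\ov d_{\ell j}|\le N$.

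The only genuinely nontrivial step, and the one I expect to require the most care, is the uniform bound on $\{g_S:S\subseteq[n]\}$; everything else is bookkeeping. This is precisely where finiteness of $\mathcal M^{(2)}$ is used, and it is worth noting that the whole argument uses only property (A1) of Abelian-boundedness, not (A2). Two minor points to attend to: when $\phi(f)\neq\mathbbm 1$ one has $s\ge1$ (a homomorphism with no essential coordinate is trivial), so the map $\pi$ above is well defined; and if $\phi$ is trivial on all of $\mathcal M'$ then, by the same reduction, $\phi$ is trivial on all of $\mathcal M$ and the statement holds vacuously.
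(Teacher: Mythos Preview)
Your argument is correct, and it takes a genuinely different route from the paper's proof. The paper argues as follows: it uses condition (A2) to show that the images $\zA_f=\phi(f)((\zZ^k)^n)$, as $f$ ranges over all of $\cM$, are each contained in one of a \emph{fixed finite} family $\cB$ of maximal Abelian subgroups of $\bF$ (obtained from the unary minors in $\cM^{(1)}$). Having fixed canonical bases for the groups in $\cB$, the paper then reduces an arbitrary $f$ to bounded arity by a sign-colouring map $\pi:[n]\to[2^{rk}]$ that groups together coordinates whose coefficient vectors have the same sign pattern, so that the $\ell^1$-norm is preserved under the identification; the bound $N$ is then read off from the finitely many $f$ of arity at most $P=2^{\ell k}$.

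Your approach bypasses (A2) entirely: you obtain a uniform essential-arity bound directly from the finiteness of $\cM^{(2)}$ together with torsion-freeness (a consequence of (A1)), and only afterwards invoke (A1) to pick a basis for each of the finitely many images $\phi(h)((\zZ^k)^s)$, $h\in\cM'$. This is more elementary and, as you note, shows that the theorem already holds under (A1) alone. The paper's approach, on the other hand, yields the extra information that the generators $s_1,\ldots,s_r$ can be taken from a single finite pool independent of $f$.

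One small wording slip to fix: in the reduction step you write ``since $f$ does not depend on its inessential coordinates, $f=h^\sigma$''. This is not justified for the abstract minion element $f$; only $\phi(f)$ has a well-defined notion of essential coordinates. What you need (and what holds) is $\phi(f)=\phi(h)^\sigma$, which follows directly from $\phi(h)=\phi(f)^\pi$ and the fact that $\phi(f)$, being a genuine function with essential coordinates $i_1,\ldots,i_s$, satisfies $\phi(f)=(\phi(f)^\pi)^\sigma$.
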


\begin{proof}
First, observe that any polymorphism $h\in\Pol(\zZ^k,\bF)$ is a group homomorphism from $\zZ^{nk}$ to $\bF$, where $n$ is the arity of $h$. This means the range $\zA_h=h(\zZ^{nk})$ of $h$ is an Abelian subgroup of $\bF$. For $f\in\cM$ we will use $\zA_f=\zA_{\phi(f)}$. Therefore, for any $f \in \mathcal{M}^{(n)}$, there exist (commuting) elements $\vc sq \in \bF$ and $\ov c_{ij} \in \mathbb Z^k$, $i\in[n],j\in[t]$, such that $\phi(f)(\ov x_1,\ldots, \ov x_n)=\prod_{j=1}^q s_j^{\ov c_{1j} \ov x_1+ \cdots + \ov c_{nj} \ov x_n}$, and we need to prove a bound on $\sum_{i,j} |\ov c_{ij}|$ and $q$.

Next, we consider the set $\mathcal{M}^{(1)}$. Let $f\in\cM^{(1)}$; by what is shown above $\phi(f)(\ov x)=s_1^{\overline{c_1} \, \ov x}\cdots s_q^{\overline{c_q} \, \ov x}$, where $\vc sq$ generate an Abelian subgroup $\zA_f$ of $\bF$. Note that this representation is not unique and we want to pick one or a finite set of such representations. By condition (A2) there are finitely many maximal Abelian subgroups $\vc{\zB^f}{t_f}$ of $\bF$ such that $\zA_f\sse\zB^f_i$, $i\in[t_f]$. Let $\cB=\{\zB^f_i\mid f\in\cM^{(1)}, i\in[t_f]\}$ be the set of all such subgroups. Fix a set $B^f_i$ of generators for each subgroup $\zB^f_i\in\cB$. 

\smallskip

{\sc Claim.}
For any $f\in\cM$, $\zA_f$ is a subgroup of a member of $\cB$.

\smallskip

Let $g=f^\pi\in\cM^{(1)}$ be the unary minor of $f$. Then clearly $\zA_g$ is a subgroup of $\zA_f$. Hence, every maximal Abelian subgroup of $\bF$ containing $\zA_f$ also contains $\zA_g$, and therefore belongs to $\cB$, which completes the proof of the claim. \hfill$\Box$

\smallskip

Now we can identify $N$ and $P$. Let $\ell$ be the maximal number of generators of subgroups in $\cB$ and $P=2^{\ell k}$. For each $f\in\cM^{(n)}, n \leq P$ and $\zB\in\cB$ such that $\zA_f\sse\zB$, let $\phi(f)(\ov x_1,\ldots,\ov x_n)=\prod_{j=1}^q s_j^{\ov c_{1j} \ov x_1+ \cdots + \ov c_{nj} \ov x_n}$ be a representation of $\phi(f)$ for the canonical generating set $B=\{\vc sq\}$ of $\zB$. Then set 
\begin{align*}
N_{f,\zB} &=|\ov c_{11}|+\dots+|\ov c_{nq}|,\\ 
N_f &=\max\{N_{f,\zB}\mid \zB\in\cB, \zA_f\sse\zB\},\quad \text{and} \\
N &=\max\{N_f\mid f\in\cM^{(n)}, n \leq P\}.
\end{align*}

Let $f \in \mathcal{M}^{(n)}$, $\zB\in\cB$ such that $\zA_f\sse\zB$, and let $\phi(f)(\ov x_1,\ldots,\ov x_n)=\prod_{j=1}^r s_j^{\ov c_{1j} \ov x_1+ \cdots + \ov c_{nj} \ov x_n}$, $\ov c_{ij} \in \mathbb Z^k$, be the representation of $\phi(f)$ over the canonical basis of $\zB$. we show that this representation satisfies the conditions of the theorem. Clearly, $r \leq \ell < P$. If $n\le P$ then $|\ov c_{11}|+\dots+|\ov c_{nr}|\le N$ by the choice of $N$.
Now, we define a map $\pi: [n] \rightarrow [2^{rk}]$ in such a way $\pi(i)=\pi(i')$ if and only if for any $j \in [r], m \in [k]$, $c_{i,j}^m$ and $c_{i',j}^m$ have the same sign. More formally, let $\pi: [n] \rightarrow [2^{rk}]$ be a mapping such that $\pi(i)=(a_0, \ldots, a_{rk-1})$ where $(a_0, \ldots, a_{rk-1})$ is the representation of $\pi(i)$ base 2 and $a_j=1$ if $c_{i, \lfloor \frac{j}{k} \rfloor +1}^{(j \bmod k) + 1} \geq 0$ and $a_j=0$ otherwise. Then let $g \in \mathcal{M}^{(2^{rk})}$ be given by $g(\ov x_1\zd \ov x_{2^{rk}})=f(\ov x_{\pi(1)},\ldots,\ov x_{\pi(n)})$.
As $\phi$ preserves minors, 
$$
\phi(g)(\ov x_1\zd \ov x_{2^{rk}})=\phi(f)(\ov x_{\pi(1)},\ldots,\ov x_{\pi(n)}).
$$
The representation of $\phi(g)$ in the basis $\{\vc sr\}$ is
\[
\phi(g)=\prod_{j=1}^r s_j^{\ov{c'}_{1j} \ov x_1+ \cdots + \ov{c'}_{2^{rk}j} \ov x_{2^{rk}}},\qquad \text{where}\quad \ov{c'}_{ij}=\sum_{\pi(u)=i}\ov c_{uj}. 
\]
For each $i\in[2^{rk}], j\in[r], v \in [k]$ all the $\ov c_{uj}^v$ such that $u \in [n], \pi(u)=i$ have the same sign. Therefore for any $i \in [2^{rk}], j \in [r]$,
\[
|\ov{c'}_{ij}|=\left|\sum_{\pi(u)=i} \ov c_{uj}\right|=\sum_{\pi(u)=i}|\ov c_{uj}|
\]
implying
\[
\sum_{u\in[n],j\in[r]}|\ov c_{uj}|=\sum_{i\in[2^{rk}],j\in[r]}|\ov{c'}_{ij}|\le N.
\]
\end{proof}


\section{Complexity of PCSP and edge-path groups}\label{sec:hardness}

In this section we apply the results from previous sections to the complexity of the PCSP. 

Let again $(\bA,\bB)$ be a PCSP template, and fix a relation $\rel$ pp-definable in $(\bA,\bB)$ and a polymorphism-stable pair $\cH^\bA,\cH^\bB$ of simplicial complexes on $\rel^\bA,\rel^\bB$ such that $\GB$ is connected. If there are $t\in\rel^\bB$ and a cycle $C=t_1\zd t_m,t_1$ in $\GA$ such that for any homomorphism $g:\bA\to\bB$ the cycle $P,g(C),P^{-1}$, where $P$ is a path from $t$ to $g(t_1)$, is not null-homotopic, the template $(\bA,\bB)$ is said to be \emph{non-degenerate with respect to $\rel,C$}. (Note that as $\GB$ is connected, the vertex $t$ can be chosen arbitrarily.)

We start with an auxiliary lemma.

\setcounter{theorem}{32}
\begin{lemma} \label{non-constant minion homomorphism}
Let $(\textbf A, \textbf B)$ be a non-degenerate template with respect to $R, t, C$. Let
$$
\xi_C: \Pol(\textbf A, \textbf B) \rightarrow \Pol(\mathbb{Z}, \pi_1(\GB))
$$
be the minion homomorphism defined before Theorem~\ref{the:minion homomorphism from polymorphism}. Then for any $f \in \Pol(\textbf A, \textbf B)$ of arity $n$, $\xi_C(f)$ is not constant.
\end{lemma}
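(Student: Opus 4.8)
The plan is to reduce to the unary case via the diagonal. Fix $f\in\Pol(\bA,\bB)$ of arity $n$ and let $\pi\colon[n]\to[1]$ be the constant map. Set $g:=f^\pi$, so that $g(x)=f(x,\ldots,x)$. Since the diagonal embedding $\bA\to\bA^n$ is a homomorphism and $f$ is a polymorphism, $g$ is a homomorphism from $\bA$ to $\bB$; at the same time $g$ is, by definition, the unary minor of $f$ corresponding to $\pi$.

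Next I would evaluate $\xi_C(g)$ directly from the construction preceding Theorem~\ref{the:minion homomorphism from polymorphism}. As a unary polymorphism, $\xi_C(g)$ is a group homomorphism $\zZ\to\pi_1(\GB)$, and it is completely determined by the image of the generator, which is the homotopy class $[P\cdot C'\cdot P^{-1}]$, where $C'$ is the normal $g$-path obtained by wrapping the single coordinate around the cycle $C=t_1,\ldots,t_m,t_1$, and $P$ is a path in $\GB$ from the base vertex $t$ to $g(t_1)$. Because $g$ is applied coordinate-wise to single tuples, $C'$ is literally the cycle $g(t_1),g(t_2),\ldots,g(t_m),g(t_1)=g(C)$. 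Applying the non-degeneracy hypothesis of $(\bA,\bB)$ with respect to $R,t,C$ to this particular homomorphism $g$, the cycle $P\cdot g(C)\cdot P^{-1}$ is not null-homotopic, so $\xi_C(g)$ sends the generator of $\zZ$ to a nontrivial element of $\pi_1(\GB)$. Hence $\xi_C(g)$ is a nontrivial group homomorphism, and in particular it is not a constant function.

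Finally I would transfer this conclusion back to $f$. Since $\xi_C$ is a minion homomorphism by Theorem~\ref{the:minion homomorphism from polymorphism}, we have $\xi_C(f)^\pi=\xi_C(f^\pi)=\xi_C(g)$. Suppose, for contradiction, that $\xi_C(f)$ is constant. A constant group homomorphism into $\pi_1(\GB)$ must be the trivial homomorphism $0$, and every minor of the trivial homomorphism is again trivial, so $\xi_C(f)^\pi=0$ would be constant. This contradicts the fact that $\xi_C(g)=\xi_C(f)^\pi$ is not constant. Therefore $\xi_C(f)$ is not constant, for every $f\in\Pol(\bA,\bB)$.

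I expect the only subtle point to be purely bookkeeping: confirming that $\xi_C(g)$ sends the generator of $\zZ$ to exactly $[P\cdot g(C)\cdot P^{-1}]$ with the same base vertex $t$ and the same cycle $C$ featuring in the non-degeneracy hypothesis, and observing that the choice of the connecting path $P$ is immaterial, since changing $P$ merely conjugates the homotopy class and so preserves (non)triviality. Everything else is a direct application of the minion-homomorphism property.
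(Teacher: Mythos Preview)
Your proof is correct and follows essentially the same approach as the paper: reduce to the unary diagonal minor $g(x)=f(x,\ldots,x)$, observe that $\xi_C(g)(1)=[P\cdot g(C)\cdot P^{-1}]$ is nontrivial by non-degeneracy, and use the minion-homomorphism property to conclude that $\xi_C(f)$ cannot be constant. Your remark about the choice of $P$ only affecting the homotopy class by conjugation is a nice extra touch that the paper leaves implicit.
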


\begin{proof}
We only need to prove this lemma for unary functions in $\Pol(\textbf A, \textbf B)^{(1)}$. This is because there exists a unary function $g \in \Pol(\textbf A, \textbf B)^{(1)}$ such that $g(x)=f(x,\ldots,x)$ for any $x \in A$. Since $g$ is a minor of $f$ and $\xi_C$ is a minion homomorphism, we have $\xi_C(g)(x)=\xi_C(f)(x,\ldots,x)$ for any $x \in \mathbb Z$. If $\xi_C(g)(x) \neq \mathbbm{1}$ for some $x \in\mathbb Z$, then $\xi_C(f)(x,\ldots,x) \neq \mathbbm{1}$. 

Now, suppose $C = t_1, t_2, \ldots, t_k, t_1$ and let $f \in \Pol(\textbf A, \textbf B)^{(1)}$ be a unary function, note that it is a homomorphism from $\bA$ to $\bB$. By the assumption the cycle $C'= P, f(t_1), f(t_2), \ldots, f(t_k), f(t_1), P^{-1}$ where $P$ is the path from $t$ to $f(t_1)$ and $P^{-1}$ is the reverse of $P$, is not null-homotopic. By the definition of $\xi_{C}$ before Theorem~\ref{the:minion homomorphism from polymorphism}, $\xi_{C}(f)(1)=[C']$. As $C'$ is not null-homotopic, $\xi_{C}(f)(1) \neq \mathbbm{1}$, as required.

\end{proof}

\begin{lemma} \label{free group structure}
Let $\bF$ be a free group with finitely many generators and let $f: \mathbb Z^{nm} \rightarrow \bF$ be a function in $\Pol(\mathbb Z^m, \bF)^{(n)}$. Then there exist integers $c_{ij}\in \mathbb Z$, $i\in[n],j\in[m]$, and an element $s \in \bF$ such that 
\[
f(\ov x_1,\ldots,\ov x_n)=s^{\sum_{i\in[n],j\in[m]}c_{ij} x_{ij}},
\]
where $\ov x_i=(x_{i1}\zd x_{im})\in\zZ^m$.
\end{lemma}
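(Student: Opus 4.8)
The plan is to analyze the structure of group polymorphisms $\mathbb{Z}^m \to \bF$ for a finitely generated free group $\bF$, showing that the image of such a polymorphism is necessarily cyclic. First I would observe that, as already noted in the proof of Theorem~\ref{free group bounded essential arity}, any $h \in \Pol(\mathbb{Z}^m, \bF)^{(n)}$ is simply a group homomorphism from $\mathbb{Z}^{nm}$ to $\bF$, so its image $\zA_h = h(\mathbb{Z}^{nm})$ is an Abelian subgroup of $\bF$. By the Nielsen--Schreier theorem (invoked via Lemma~\ref{lem:free-abelian-bounded}), every subgroup of $\bF$ is free, and the only Abelian free groups are the trivial group and $\mathbb{Z}$; hence $\zA_h$ is either trivial or infinite cyclic. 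In the trivial case we may take $s = \mathbbm 1$ and all $c_{ij} = 0$. In the cyclic case, fix a generator $s$ of $\zA_h$.

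Next I would unwind what it means for $h = f$ to be a homomorphism $\mathbb{Z}^{nm} \to \langle s\rangle \cong \mathbb{Z}$. Writing $e_{ij}$ for the standard basis vector of $\mathbb{Z}^{nm}$ (the $j$-th coordinate of the $i$-th argument $\ov x_i$), set $c_{ij} \in \mathbb{Z}$ to be the unique integer with $f(e_{ij}) = s^{c_{ij}}$; this is well-defined precisely because $\langle s\rangle$ is infinite cyclic, so the exponent of $s$ is determined. Since $f$ is a homomorphism and $\mathbb{Z}^{nm}$ is generated by the $e_{ij}$, additivity gives
\[
f(\ov x_1,\ldots,\ov x_n) = f\Bigl(\sum_{i\in[n],j\in[m]} x_{ij}\, e_{ij}\Bigr) = \prod_{i,j} f(e_{ij})^{x_{ij}} = s^{\sum_{i\in[n],j\in[m]} c_{ij} x_{ij}},
\]
where the product collapses to a single power of $s$ because all factors lie in the Abelian (indeed cyclic) group $\langle s\rangle$ and therefore commute. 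This is exactly the claimed form.

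I do not expect any serious obstacle here; the statement is essentially a structural unpacking. The only point requiring a little care is the well-definedness of the exponents $c_{ij}$ and $\sum c_{ij} x_{ij}$, which relies on $\zA_f$ being torsion-free — guaranteed since $\bF$ (hence every subgroup) is free and thus torsion-free, so distinct powers of $s$ are distinct. One should also handle the degenerate case $\zA_f = \{\mathbbm 1\}$ separately, but this is immediate. Everything else is routine manipulation of group homomorphisms out of a free Abelian group.
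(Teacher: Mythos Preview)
Your proposal is correct and follows essentially the same approach as the paper: both argue that the image of $f$ is an Abelian subgroup of $\bF$, apply Nielsen--Schreier to conclude it is free and hence cyclic, and then read off the exponents from the images of the standard generators. The only cosmetic difference is that the paper first reduces to the case $m=1$ while you work directly with $\mathbb{Z}^{nm}$, and you spell out the trivial-image case and the well-definedness of exponents a bit more explicitly.
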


\begin{proof}
First, note that it suffices to prove the lemma for $m=1$. By the definition of group polymorphism, we have
\[
f(x_1,x_2, \ldots , x_n)=f(1,0,\ldots,0)^{x_1} \cdot f(0,1,\ldots,0)^{x_2} \cdots f(0,0,\ldots,1)^{x_n},
\]
as
\[
(x_1,x_2, \ldots , x_n)=x_1\cdot(1,0,\ldots,0)+\dots+x_n\cdot(0,0,\ldots,1).
\]
Since $\mathbb Z^n$ is Abelian and $f$ is a homomorphism, the image $f(\mathbb Z^n)$ in $\bF$ is an Abelian subgroup of $\bF$. By Nielsen-Schreier theorem \cite{Rotman95:groups} every subgroup of a free group is free. As the only Abelian free group is 1-generated, $f(\mathbb Z^n)$ is a 1-generated free group. The result follows.
\end{proof}

\begin{lemma} \label{non-constant minion homomorphism2}
Let $(\textbf A, \textbf B)$ be a non-degenerate template with respect to $R, t, C$, and let $\pi_1(\cH^\bB)$ be a free group. Let also
$$
\xi_C: \Pol(\textbf A, \textbf B) \rightarrow \Pol(\mathbb{Z}, \pi_1(\GB))
$$
be the minion homomorphism defined before Theorem~\ref{the:minion homomorphism from polymorphism}. Then for any $f \in \Pol(\textbf A, \textbf B)$ of arity $n$, there exist integers $c_{i}\in \mathbb Z$, $i\in[n]$, and an element $s \in \pi_1(\cH^\bB)$ such that 
\[
\xi_C(f)(x_1,\ldots,x_n)=s^{\sum_{i\in[n]}c_{i} x_{i}},
\]
where $x_i\in\zZ$, $|c_1|+\cdots+|c_n| \neq 0$ and $s \neq \mathbbm{1}$.
\end{lemma}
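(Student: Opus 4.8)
The plan is to derive the statement directly from the two immediately preceding lemmas: Lemma~\ref{free group structure} pins down the multiplicative shape of any polymorphism into a finitely generated free group, and Lemma~\ref{non-constant minion homomorphism} guarantees that $\xi_C(f)$ is not trivial. So the whole argument is a short combination of these, with one point of care about the hypotheses.

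First I would check that $\pi_1(\cH^\bB)=\pi_1(\GB)$ is a \emph{finitely generated} free group, which is needed to invoke Lemma~\ref{free group structure} as stated. Since $\cH^\bB$ is a finite complex, by Lemma~\ref{isomorphism lemma} the group $\pi_1(\cH^\bB)$ is isomorphic to the finitely presented group $\mathbb{G}^{\cH^\bB,\cT}$, hence is finitely generated; and a finitely generated group that is free necessarily has finite rank. Now $\xi_C(f)$ is an $n$-ary polymorphism in $\Pol(\mathbb Z,\pi_1(\GB))$ (arity is preserved by minion homomorphisms), so Lemma~\ref{free group structure} with $m=1$ gives integers $c_1,\dots,c_n\in\mathbb Z$ and an element $s\in\pi_1(\cH^\bB)$ with
\[
\xi_C(f)(x_1,\dots,x_n)=s^{\sum_{i\in[n]}c_ix_i}.
\]

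Next I would invoke Lemma~\ref{non-constant minion homomorphism}: since $(\bA,\bB)$ is non-degenerate with respect to $R,t,C$, the function $\xi_C(f)$ is not constant on $\mathbb Z^n$. As any constant group homomorphism must map everything to $\mathbbm{1}$, this says precisely that $\xi_C(f)$ is nontrivial, i.e.\ $\xi_C(f)(x_1,\dots,x_n)\neq\mathbbm{1}$ for some tuple. I would then rule out the degenerate forms of the representation: if $s=\mathbbm{1}$, or if $c_1=\dots=c_n=0$ (equivalently $|c_1|+\dots+|c_n|=0$), then $s^{\sum_i c_ix_i}=\mathbbm{1}$ identically, contradicting nontriviality. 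Hence the representation produced by Lemma~\ref{free group structure} already satisfies $s\neq\mathbbm{1}$ and $|c_1|+\dots+|c_n|\neq0$, which is the claim. There is no genuine obstacle here; the only steps deserving an explicit line are the finite-rank remark that legitimizes applying Lemma~\ref{free group structure}, and the elementary observation that a constant homomorphism of groups is trivial.
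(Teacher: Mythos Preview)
Your proposal is correct and follows essentially the same approach as the paper: apply Lemma~\ref{free group structure} (with $m=1$) to obtain the representation, then use Lemma~\ref{non-constant minion homomorphism} to exclude the degenerate cases $s=\mathbbm{1}$ or $c_1=\dots=c_n=0$. Your explicit check that $\pi_1(\cH^\bB)$ is finitely generated (via Lemma~\ref{isomorphism lemma} and finiteness of $\cH^\bB$) is a nice point that the paper leaves implicit.
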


\begin{proof}
By Lemma~\ref{free group structure}, integers $c_{i}\in \mathbb Z$, $i\in[n]$, and an element $s \in \pi_1(\cH^\bB)$ with the required properties exist. Suppose that $|c_1|+\cdots+|c_n| = 0$ or $|s| = 0$. Then, $\xi_C(f)(x_1,\ldots,x_n)=\mathbbm{1}$, which contradicts Lemma~\ref{non-constant minion homomorphism}.
\end{proof}

We are now ready to state and prove our main hardness theorem.

\begin{theorem}\label{Main Theorem}
Let $(\bA,\bB)$ be a non-degenerate template with respect to $\rel,C$. If $\pi_1(\GB)$ is an Abelian-bounded group, then $\PCSP(\bA,\bB)$ is NP-hard.
\end{theorem}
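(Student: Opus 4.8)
The plan is to combine the minion homomorphism $\xi_C$ with the uniform boundedness result of Theorem~\ref{free group bounded essential arity} and the non-degeneracy input of Lemma~\ref{non-constant minion homomorphism}, and then conclude by the hardness criterion of Corollary~\ref{col: Minion Homomorphism}. By Theorem~\ref{the:minion homomorphism from polymorphism} the mapping $\xi_C\colon \Pol(\bA,\bB)\to\Pol(\zZ,\pi_1(\GB))$ is a minion homomorphism. Since $\bA$ and $\bB$ are finite, $\Pol(\bA,\bB)^{(n)}$ is a finite set for every $n$ (there are only $|B|^{|A|^n}$ maps $A^n\to B$), so $\mathcal M=\Pol(\bA,\bB)$ satisfies the hypothesis of Theorem~\ref{free group bounded essential arity}; moreover $\pi_1(\GB)$ is Abelian-bounded by assumption, and we take $k=1$.

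Applying Theorem~\ref{free group bounded essential arity} to $\phi=\xi_C$ yields numbers $P,N\in\mathbb N$ such that for every $n$ and every $f\in\Pol(\bA,\bB)^{(n)}$ with $\xi_C(f)\ne 0$ there are commuting $s_1,\ldots,s_r\in\pi_1(\GB)$ with $r\le P$ and integers $c_{ij}\in\zZ$ ($i\in[n]$, $j\in[r]$) with $\sum_{i,j}|c_{ij}|\le N$ such that
\[
\xi_C(f)(x_1,\ldots,x_n)=\prod_{j=1}^r s_j^{\,c_{1j}x_1+\cdots+c_{nj}x_n}.
\]
Now I would bring in non-degeneracy: Lemma~\ref{non-constant minion homomorphism} says every $\xi_C(f)$ is non-constant, i.e.\ $\xi_C(f)\ne 0$; hence the displayed representation holds for \emph{all} $f\in\Pol(\bA,\bB)$, and in particular the image minion $\img_{\xi_C}$ of $\xi_C$ contains no constant function.

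It then remains to read off bounded essential arity. If the $i$-th coordinate is essential for $\xi_C(f)$, then necessarily $c_{ij}\ne 0$ for some $j\in[r]$; since $\sum_{i,j}|c_{ij}|\le N$, at most $N$ indices $i$ have this property, so $\xi_C(f)$ has at most $N$ essential coordinates, with $N$ independent of $f$ and of the arity $n$. Thus $\img_{\xi_C}$ is a minion of bounded essential arity containing no constant function, and Corollary~\ref{col: Minion Homomorphism} yields that $\PCSP(\bA,\bB)$ is NP-hard.

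The serious content has already been isolated in the earlier results, so the proof above is mostly assembly; the parts I expect to be delicate are the two inputs rather than the assembly itself: Theorem~\ref{free group bounded essential arity}, which supplies a bound on the coefficients uniform over all arities (this is exactly where Abelian-boundedness of $\pi_1(\GB)$ enters, via conditions (A1)--(A2)), and Lemma~\ref{non-constant minion homomorphism}, which rules out the trivial image and is precisely where the non-degeneracy of $(\bA,\bB)$ with respect to $\rel,C$ is used.
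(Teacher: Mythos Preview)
Your proof is correct and follows essentially the same route as the paper: use $\xi_C$ from Theorem~\ref{the:minion homomorphism from polymorphism}, invoke Theorem~\ref{free group bounded essential arity} for the uniform coefficient bound, use non-degeneracy to rule out constant images, and finish with Corollary~\ref{col: Minion Homomorphism}. Your write-up is in fact slightly cleaner than the paper's: you explicitly check that $\Pol(\bA,\bB)^{(n)}$ is finite (needed for Theorem~\ref{free group bounded essential arity}), you spell out how the coefficient bound $\sum|c_{ij}|\le N$ translates into at most $N$ essential coordinates, and you cite Lemma~\ref{non-constant minion homomorphism} rather than Lemma~\ref{non-constant minion homomorphism2} (the latter assumes $\pi_1(\GB)$ is free, which is stronger than the Abelian-bounded hypothesis of the theorem).
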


\begin{proof}
Let $\xi_C: \Pol(\textbf A, \textbf B) \rightarrow \Pol(\mathbb{Z}, \pi_1(\GB))$ be the minion homomorphism defined before Theorem~\ref{the:minion homomorphism from polymorphism} and let $f \in \Pol(\textbf A, \textbf B)$ be any polymorphism. By Lemma~\ref{non-constant minion homomorphism2}, $\xi_C(f)$ is not constant, and by Theorem~\ref{free group bounded essential arity}, it has bounded essential arity. So, $\PCSP(\textbf A, \textbf B)$ is NP-hard by Corollary~\ref{col: Minion Homomorphism}.
\end{proof}

We now propose properties of the template $(\bA,\bB)$ that guarantee that the conditions of Theorem~\ref{Main Theorem} hold. 

\begin{lemma} \label{lem: conditions on A to be a free group}
The group 
$\pi_1(\cH^\bB)$ is free if any two maximal faces of $R^\bB$ have at most one common vertex.
\end{lemma}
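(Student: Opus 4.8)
The plan is to use the isomorphism from Lemma~\ref{isomorphism lemma} and instead work with the finitely presented group $\zG^{\cH^\bB,\cT}$ for a suitably chosen maximal tree $\cT$ of $\GB$. So the first step is to fix $t_1\in\rel^\bB$ and a spanning tree $\cT$ of $\GB$, and to pick $\cT$ to be \emph{adapted to the maximal faces}: inside each maximal face $F$ of $\cH^\bB$ (a maximal class of the tolerance, in our setting, or a maximal set of the form $X_1\tms X_r$), the edges of $\cT$ lying in $F$ should form a spanning tree of the complete graph on $F$. Since any two maximal faces share at most one vertex, the subgraphs of $\GB$ induced on distinct maximal faces are edge-disjoint, so these per-face trees can be glued together (possibly adding some more edges) into a global spanning tree $\cT$ without creating cycles. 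By Lemma~\ref{isomorphism lemma}, $\pi_1(\cH^\bB)\cong\zG^{\cH^\bB,\cT}$, so it suffices to show that $\zG^{\cH^\bB,\cT}$ is free.

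The second step is to analyze the presentation of $\zG^{\cH^\bB,\cT}$. Its generators are the oriented edges of $\GB$, and by relation (G1) every edge of $\cT$ becomes trivial, so the group is generated by the (finitely many) non-tree edges, with the inverse of $(st)$ identified with $(ts)$ via the consequence of (G3) noted after its statement. The remaining relations (G3) are indexed by triples $\{t_1,t_2,t_3\}$ contained in a common face, hence contained in a common \emph{maximal} face $F$. The key observation is that within each maximal face $F$, all but at most one edge is a tree edge (since $\cT$ restricted to $F$ is a spanning tree of the clique on $F$, a spanning tree has $|F|-1$ edges while the clique has $\binom{|F|}{2}$), so every relation (G3) coming from $F$ involves at most one non-tree generator, and after killing the tree edges it reads either $\mathbbm 1=\mathbbm 1$ or $e=e$ or $e\cdot e^{-1}=\mathbbm 1$ — in all cases trivially satisfied. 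Therefore, once the tree edges are eliminated, the presentation has a generating set (the non-tree edges, one from each ``extra'' slot) and \emph{no nontrivial relations among them}, so $\zG^{\cH^\bB,\cT}$ is free.

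The third step is to make the ``no nontrivial relations'' claim rigorous via a Tietze-transformation argument: starting from the presentation with generators $E(\GB)$ and relations (G1)--(G3), successively remove each tree generator $e$ using (G1) (which sets $e=\mathbbm 1$, a valid Tietze elimination), and check that after all such eliminations every (G3)-relation has become a consequence of the free-group axioms on the surviving generators, using the edge-disjointness of distinct maximal faces to guarantee that no surviving generator appears in more than a trivial way in any relation. The main obstacle I anticipate is precisely the bookkeeping in this step: one must be careful that a single non-tree edge is not forced into two different (G3)-relations coming from two different maximal faces — but this is exactly what the hypothesis ``any two maximal faces share at most one vertex'' prevents, since two maximal faces sharing an edge would share two vertices. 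A secondary subtlety is confirming that every face (not just maximal ones) is contained in a maximal face, which holds because $\cH^\bB$ is a finite simplicial complex (downward-closed, finitely many faces). Once these points are settled, freeness of $\zG^{\cH^\bB,\cT}$, and hence of $\pi_1(\cH^\bB)$, follows.
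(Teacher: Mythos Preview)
Your approach has a genuine gap in the construction of the spanning tree $\cT$. You claim that because distinct maximal faces are edge-disjoint (sharing at most one vertex), per-face spanning trees ``can be glued together \ldots\ without creating cycles.'' Edge-disjointness does not prevent cycles when trees are glued at shared vertices. For a concrete counterexample, take three maximal faces $F_1=\{a,b\}$, $F_2=\{b,c\}$, $F_3=\{a,c\}$ (each a single edge); the per-face trees are the edges themselves, and their union is the triangle $abc$, not a tree. Here $\pi_1(\cH^\bB)\cong\zZ$ is free on one generator, but no spanning tree of $G(\cH^\bB)$ restricts to a spanning tree inside every maximal face. Note too that if such a $\cT$ \emph{did} exist, your argument would actually prove more than intended: repeatedly applying (G3) along the in-face tree path shows that \emph{every} edge of every maximal face is forced to $\mathbbm 1$, so $\zG^{\cH^\bB,\cT}$ would be trivial, not merely free. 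Relatedly, your ``key observation'' that each (G3) relation involves at most one non-tree generator is simply false (and your own parenthetical edge count contradicts it): a triangle inside a large face can have two or three non-tree edges.

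The paper proceeds differently. It takes an \emph{arbitrary} spanning tree $\cT$; inside each maximal face $\sigma_i$ it sets $T^{\sigma_i}=E(\cT)\cap\sigma_i$ and chooses additional edges $E^{\sigma_i}$ so that $E^{\sigma_i}\cup T^{\sigma_i}$ is a spanning tree of $\sigma_i$. It then builds an explicit isomorphism from $\zG^{\cH^\bB,\cT}$ to the free group on $\bigcup_i E^{\sigma_i}$, sending an edge $uv$ (lying in the unique $\sigma_i$ containing it) to the product of edges along the $u$--$v$ path in $E^{\sigma_i}\cup T^{\sigma_i}$, and checking that (G1)--(G3) are respected. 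The hypothesis that two maximal faces share at most one vertex is what makes the face $\sigma_i$ containing a given edge unique, so that this map is well-defined. Your Tietze-transformation idea can be made to work, but only after correctly identifying the surviving generating set as $\bigcup_i E^{\sigma_i}$ rather than ``all non-tree edges,'' and handling the nontrivial (G3) relations within each face.
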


\begin{proof}
We use the group $\mathbb{G}^{\GB,\mathcal{T}}$ for a spanning tree $\mathcal T$ of $\GGB$.
Let $\sigma_1, \ldots, \sigma_n$ be the maximal faces in $\GB$ and let $T^{\sigma_i}$ be the set of edges in $\sigma_i$ that also belong to $E(\mathcal{T})$. Let $E^{\sigma_i}$ be some set of edges in $\sigma_i$ such that $E^{\sigma_i} \cup T^{\sigma_i}$ forms a spanning tree in $\sigma_i$. We now prove that $\mathbb{G}^{\GB,\mathcal{T}}$ is a free group with the set of generators $E^{\sigma_1} \cup \cdots \cup E^{\sigma_n}$.
Indeed, let $\bF$ be a free group whose generators are $(E^{\sigma_1} \cup T^{\sigma_1}) \cup \cdots \cup (E^{\sigma_n} \cup T^{\sigma_n})$ such that all generators in $T^{\sigma_1} \cup \cdots \cup T^{\sigma_n}$ are the identity element. Moreover, if $uv \in (E^{\sigma_1} \cup T^{\sigma_1}) \cup \cdots \cup (E^{\sigma_n} \cup T^{\sigma_n})$ is a generator, then $(uv) = (vu)^{-1}$.
To prove the lemma, we need to find an isomorphism $\eta: \mathbb{G}^{\GB,\mathcal{T}} \rightarrow \bF$ (The proof idea is similar to the proof of Lemma~\ref{isomorphism lemma}).

For an edge $uv$ in $\sigma_i$, we define $\eta(uv)$ to be $(v_1 v_2) \cdot (v_2 v_3) \cdot \ldots \cdot (v_{k-1} v_k)$ where $P: u=v_1,v_2, \ldots, v_k=v$ is the unique shortest path form $u$ to $v$ in the spanning tree $E^{\sigma_i} \cup T^{\sigma_i}$. First, we need to prove that $\eta$ is a well-defined map. Note that since any two maximal faces of $\GB$ have at most one common tuple, $\sg_i$ is the only maximal face containing $uv$ and can be used to define $\eta(uv)$.  Next, it suffices to show that if we apply one of the three relations (G1)--(G3) to any element $a \in \mathbb{G}^{\GB,\mathcal{T}}$, then the image $\eta(a)$ does not change. The relations (G1) and (G2) are obvious since generators in $T^{\sigma_1} \cup \cdots \cup T^{\sigma_n}$ are the identity. Relation (G3) can be easily obtained from the following fact.
For an edge $uv \in \sigma_i$ and any two (not necessarily shortest) paths $P:u=a_1, \ldots, a_k=v$ and $Q:u=b_1, \ldots, b_r=v$ from $u$ to $v$ in $E^{\sigma_i} \cup T^{\sigma_i}$, the elements $(a_1a_2) \cdot (a_2a_3) \cdot \ldots \cdot (a_{k-1}a_k)$ and $(b_1b_2) \cdot (b_2b_3) \cdot \ldots \cdot (b_{r-1}b_r)$ are equal in $\bF$. So $\eta$ is a well-defined map. Moreover 
$\eta$ is a homomorphism from $\mathbb{G}^{\GB, \mathcal{T}}$ to $\bF$ since $\eta(a \cdot b) = \eta(a) \cdot \eta(b)$ for any $a,b \in \mathbb{G}^{\GB, \mathcal{T}}$.

We now find a homomorphism $\eta^{-1}: \bF \rightarrow \mathbb{G}^{\GB, \mathcal{T}}$ and then prove that both $\eta \circ \eta^{-1}$ and $\eta^{-1} \circ \eta$ are identity maps. For any generator $uv$ in $\bF$ we know $uv \in (E^{\sigma_1} \cup T^{\sigma_1}) \cup \cdots \cup (E^{\sigma_n} \cup T^{\sigma_n})$. We also know that $uv$ is a generator in $\mathbb{G}^{\GB, \mathcal{T}}$. So we define $\eta^{-1}(uv)=uv$. Similarly, this is a well-defined map and is also a homomorphism. The fact that both $\eta \circ \eta^{-1}$ and $\eta^{-1} \circ \eta$ are identity maps can be easily checked. Hence $\eta$ is an isomorphism, as required.
\end{proof}

We now introduce some additional conditions on PCSP templates. If for a template $(\bA,\bB)$ there is a pair of permutations $\Zmap^\bA,\Zmap^\bB$ on $R^\bA$, $R^\bB$ respectively, satisfying conditions (M1)--(M3) below, $(\bA,\bB)$ is said to be a \emph{$\Zmap$-template}.
\begin{itemize}
\item[(M1)]
$\Zmap^\bA,\Zmap^\bB$ preserve faces, that is, for any faces $\sg,\sg'$ of $\cH^\bA,\cH^\bB$, respectively, $\Zmap^\bA(\sg),\Zmap^\bB(\sg')$ are also faces of $\cH^\bA,\cH^\bB$, respectively.
\item[(M2)]
$\Zmap^\bA,\Zmap^\bB$ permute with polymorphisms, that is, for any $n$-ary $f\in \Pol(\bA,\bB)$ it holds that 
$
f(\Zmap^\bA(t_1)\zd\Zmap^\bA(t_n))=\Zmap^\bB(f(\vc tn))
$
for any $\vc tn\in R^\bA$.
\item[(M3)]
There is $\ell>1$, called the order of the pair $\mu^\bA,\mu^\bB$, such that for any faces $\sigma$, $\sigma^\prime$ of $\cH^\bA,\cH^\bB$, respectively, $(\Zmap^\bA)^{\ell}(\sg)=\sg,(\Zmap^\bB)^{\ell}(\sg')=\sg'$, and $(\Zmap^\bA)^i(\sg)\ne\sg,(\Zmap^\bB)^i(\sg')\ne\sg'$ for $i\in[\ell-1]$.

\end{itemize}

We may write $\Zmap$ instead of $\Zmap^{\bA}$ or $\Zmap^{\bB}$ when it is clear from the context to which one we are referring.
Observe that $\ell$ is the order of $\Zmap^\bA$ as a permutation because every vertex is a face and so (M3) applies. Also, $\Zmap$ cannot be the identity since $\ell >1$ by (M3). We can always assume that $\ell$ in (M3) is a prime number. Indeed, if not, there is a prime number $p$ dividing $\ell$. For any faces $\sigma \in \cH^\bA, \sigma' \in \cH^\bB$, we can now define $\Zmap_1^\bA(\sigma):= (\Zmap^\bA)^{\frac{\ell}{p}}(\sigma)$, $\Zmap_1^\bB(\sigma'):= (\Zmap^\bB)^{\frac{\ell}{p}}(\sigma')$ whose order is $p$. We sometimes need only one structure from the $\Zmap$-template $(\bA,\bB)$; in this case we call it a $\Zmap$-structure. The simplicial complex $\GB$ of the $\Zmap$-structure $\bB$ is said to be \textit{$\Zmap$-connected} if for any vertex $t$ in $R^\bB$ there is a path from $t$ to $\Zmap(t)$ in $\GGB$.


Let $\mathcal{H}^\bA$ be $\Zmap$-connected, and let us fix a vertex $t$ in $R^\bA$. For any $t_1\in R^\bA$ that has a path to $t$, we construct a cycle originating at $t$ as follows. Let $P= t, \ldots, t_1$ be a path from $t$ to $t_1$ and let $P^{-1}$ be the reverse of $P$. Next, choose a path $t_1, t_2 \ldots, t_k=\Zmap(t_1)$ from $t_1$ to $\Zmap(t_1)$ that exists, because $\mathcal{H}^\bA$ is $\Zmap$-connected. Then consider the cycle
$$
C=P, t_1, \ldots, t_{k-1}, \Zmap(t_1), \ldots, \Zmap(t_{k-1}), \ldots, \Zmap^{\ell-1}(t_1), \ldots, \Zmap^{\ell-1}(t_{k-1}), t_1, P^{-1}.
$$
A cycle of this form is called a \textit{$\Zmap$-cycle starting at $t$}. 

\begin{lemma} \label{lem: Zmap-cycles not null-homotopic}
Let $\bB$ be a $\Zmap$-structure as above, $\GGB$ connected, and let any two maximal faces of $\GB$ have at most one common vertex. Then any $\Zmap$-cycle in $\GGB$ is not null-homotopic\footnote{Also, see \cite{Wrochna20:homomorphism} for related work.}.
\end{lemma}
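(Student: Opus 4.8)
The plan is to work inside the finitely presented group $\mathbb{G}^{\GB,\mathcal{T}}$ rather than with cycles directly, using the isomorphism $\rho^{t,\mathcal{T}}:\pi_1(\GB)\to\mathbb{G}^{\GB,\mathcal{T}}$ from Lemma~\ref{isomorphism lemma}, and to exhibit an explicit group homomorphism $\psi$ from $\mathbb{G}^{\GB,\mathcal{T}}$ onto $\mathbb{Z}/\ell\mathbb{Z}$ (or onto $\mathbb Z$) that sends the image of any $\Zmap$-cycle to a nonzero element. First I would record that, by Lemma~\ref{lem: conditions on A to be a free group}, the hypothesis that any two maximal faces share at most one vertex makes $\pi_1(\GB)$ a free group with an explicit generating set, namely the edges $E^{\sigma_1}\cup\cdots\cup E^{\sigma_n}$ obtained by completing the tree $\mathcal T$ to a spanning tree inside each maximal face; this concrete description is what lets me build the counting homomorphism.

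The core construction is the following invariant. Fix a vertex $t$ and, for every vertex $s$ in the connected component of $\GGB$, let $P_s$ be the tree path from $t$ to $s$ in $\mathcal{T}$; define a ``level'' $\lambda(s)\in\mathbb Z/\ell\mathbb Z$ inductively along edges so that $\lambda(\Zmap(s))=\lambda(s)+1$ and $\lambda$ is constant on every maximal face. The key point is that this is well defined: the hypothesis (M3) gives that $\Zmap$ has order exactly $\ell$ on vertices, and the at-most-one-common-vertex condition means that the ``within-a-face'' relation and the ``$\Zmap$-shift'' relation do not interfere, so one can consistently assign levels. Concretely I would define a group homomorphism $\psi:\mathbb{G}^{\GB,\mathcal{T}}\to\mathbb Z/\ell\mathbb Z$ by sending a generator $(uv)$ to $\lambda(v)-\lambda(u)$; relations (G1),(G2) map to $0$ because tree edges and loops have equal endpoints' levels (tree edges lie inside the component and, after the level assignment, connect equal-level vertices within faces or are handled by the base case), and relation (G3) maps to $0$ because all three vertices of a face have the same level. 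Thus $\psi$ is a well-defined homomorphism. Now compute $\psi\bigl(\rho^{t,\mathcal T}([C])\bigr)$ for a $\Zmap$-cycle $C=P,t_1,\dots,\Zmap^{\ell-1}(t_{k-1}),t_1,P^{-1}$: the portions $P$ and $P^{-1}$ cancel, each segment from $\Zmap^i(t_1)$ to $\Zmap^{i}(t_{k-1})$ contributes $0$ since consecutive vertices on such a segment either lie in a common face or are related within the path chosen at level $i$ — more carefully, the telescoping sum of level-differences along the whole middle walk from $t_1$ back to $t_1$ equals $\sum_{i=0}^{\ell-1}\bigl(\lambda(\Zmap^{i+1}(t_1))-\lambda(\Zmap^i(t_1))\bigr)=\lambda(\Zmap^{\ell}(t_1))-\lambda(t_1)$ reckoned in $\mathbb Z$ before reducing, which is $\ell\cdot 1$, hence $\psi$ of it is $\ell\equiv 0$. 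That would be bad — so instead of $\mathbb Z/\ell\mathbb Z$ I would take the level homomorphism with values in $\mathbb Z$ on the infinite cyclic cover, or, more simply, observe that the natural target is $\mathbb Z$: set $\psi:\mathbb{G}^{\GB,\mathcal{T}}\to\mathbb Z$ using an integer-valued level function that exists on the tree and track the net winding, so that the $\Zmap$-cycle maps to $\ell\neq 0$, hence $[C]\neq\mathbbm 1$.

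The main obstacle — and the step I would spend the most care on — is showing the level function $\lambda$ is genuinely well defined, i.e.\ that extending ``constant on faces'' and ``$+1$ under $\Zmap$'' around a closed walk in $\GGB$ returns a consistent value. This is precisely where the two hypotheses are used: connectedness of $\GGB$ guarantees every vertex gets a level relative to $t$; the condition that two maximal faces meet in at most one vertex ensures there is no short combinatorial cycle forcing a contradiction (any cycle in $\GGB$ is built from face-edges, and within a single face all levels agree, so cycles decompose into $\Zmap$-translates of face-cycles whose net level-change is a multiple of $\ell$, never a nonzero residue against $0$); and (M3) pins the order of $\Zmap$ to exactly $\ell$ so that $\ell$ full $\Zmap$-steps, and no fewer, return to the start. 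Once $\lambda$ exists, everything else is the routine verification above that $\psi$ kills (G1)--(G3) and sends $\Zmap$-cycles to $\ell\neq0$, whence by Lemma~\ref{isomorphism lemma} no $\Zmap$-cycle is null-homotopic.
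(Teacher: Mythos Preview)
Your approach has a fatal gap: the level function $\lambda$ you propose cannot exist. You require that $\lambda$ be constant on every maximal face and simultaneously satisfy $\lambda(\Zmap(s))=\lambda(s)+1$. But two vertices of $\GGB$ are adjacent precisely when they lie in a common (hence in a common maximal) face, so ``constant on maximal faces'' forces $\lambda$ to be constant along every edge, and since $\GGB$ is connected this makes $\lambda$ globally constant---contradicting the shift condition. Concretely, for $\bB=\mathbf K_3$ the graph $\GGB$ is a $6$-cycle on the six directed edges, $\Zmap$ swaps opposite vertices, and no $\lambda$ with your two properties exists. The variant with target $\mathbb Z$ does not help: any assignment of the form $\psi(uv)=\lambda(v)-\lambda(u)$ is a coboundary and therefore evaluates to $0$ on \emph{every} closed walk, so it cannot distinguish the $\Zmap$-cycle from the trivial class. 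Your appeal to the at-most-one-common-vertex hypothesis to argue that ``within-a-face'' and ``$\Zmap$-shift'' do not interfere is exactly the point that fails: they interfere through connectivity.

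The paper's proof takes a quite different route. It does not attempt a global invariant; instead it works directly with the particular $\Zmap$-cycle $C$ and shows, by a mod-$\ell$ counting argument on ``$k$-edges'' defined via a partition of $V(C)$ into $\Zmap$-orbits, that some oriented edge $xy$ is traversed by $C$ a different number of times in each direction (Claim in the proof). One then chooses a spanning tree $\mathcal T$ avoiding $xy$; by Lemma~\ref{lem: conditions on A to be a free group} the group $\mathbb G^{\GB,\mathcal T}$ is free on the non-tree edges, so $xy$ is a free generator, and the image $\rho^{t,\mathcal T}([C'])$ has nonzero exponent sum in $xy$, hence is nontrivial. The at-most-one-common-vertex hypothesis is used here to get freeness, not to build a level function. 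If you want to salvage a homomorphism-to-$\mathbb Z$ idea, you would still need this edge-finding step (and then send the free generator $xy$ to $1$ and the others to $0$), which is precisely the paper's argument.
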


\begin{proof}
As was observed, without loss of generality, $\ell$, the order of $\Zmap$, can be assumed to be a prime number.
Fix $t\in\rel^\bB$ and let
\begin{align*}
C^\prime &=P, u_1, u_2, \ldots, u_{k-1}, \Zmap(u_1), \Zmap(u_2), \ldots, \Zmap(u_{k-1}), \Zmap^2(u_1), \ldots, \Zmap^{\ell-1}(u_1), \\
& \hspace*{10mm}\Zmap^{\ell-1}(u_2), \ldots, \Zmap^{\ell-1}(u_{k-1}), u_1, P^{-1}
\end{align*}
be a $\Zmap$-cycle starting at $t$ where $P$ is a path from $t$ to $u_1$ and $P^{-1}$ is the reverse of $P$.
Also, let
\begin{align*}
C &=u_1, u_2, \ldots, u_{k-1}, \Zmap(u_1), \Zmap(u_2), \ldots, \Zmap(u_{k-1}), \Zmap^2(u_1), \ldots, \Zmap^{\ell-1}(u_1),\\
& \hspace*{10mm}\Zmap^{\ell-1}(u_2), \ldots, \Zmap^{\ell-1}(u_{k-1}), u_1
\end{align*}
be the cycle obtained from $C^\prime$ by removing the paths $P$ and $P^{-1}$.
For an edge $uv \in E(C)$, let $N(uv)$ be the number of times $C$ traverses $uv$ from $u$ to $v$. First of all, we prove that there is an edge $uv \in E(C)$ with $N(uv) \neq N(vu)$.

\begin{claim} \label{claim: N(uv)}
There is an edge $uv \in E(C)$ with $N(uv) \neq N(vu)$.
\end{claim}
\begin{claimproof}
For any $i \in \{0, \ldots \ell -1 \}$, consider the following path in $C$ which is from $\Zmap^i(u_1)$ to $\Zmap^{i+1}(u_1)$:
$$S_i=\Zmap^i(u_1), \Zmap^i(u_2) \ldots, \Zmap^i(u_{k-1}), \Zmap^{i+1}(u_1).$$
Let $X_0 = \{v_0, \ldots, v_n\}$ be a set of a maximum number of vertices in $S_0$ such that $\Zmap^r(v_i) \neq v_j$ for any $0 \leq i,j \leq n$ and $r \in [\ell - 1]$.
Next, let $X_i = \{\Zmap^i(v_0), \ldots, \Zmap^i(v_n)\}$ for $i \in [\ell-1]$. Clearly, $X_i \cap X_j = \emptyset$ for any $i \neq j$ and $X_0 \cup \cdots \cup X_{\ell-1}= V(C)$, because $X_0$ is a maximum set. For each edge $uv \in E(C)$, there exist $i,j \in \{0,\ldots,\ell-1\}$ such that $u \in X_i, v \in X_j$. In such a case, $uv$ is said to be a \textit{$k$-edge} where $k= (j - i) \bmod \ell$ (then $vu$ is a $(\ell-k)$-edge). So if $uv$ is an $k$-edge and $u \in X_i$, then $v \in X_{(i + k) \bmod \ell}$. Let $m_i^j$ be the number of $i$-edges in $S_j$. Since $u_1$ and $\Zmap(u_1)$ do not belong to the same set $X_i$ for some $i \in \{0, \ldots, \ell-1 \}$, we must have 
$$
\sum_{i=0}^{\ell-1} i \times m_i^0 \not\equiv 0\pmod\ell.
$$
Moreover, since $C$ is a $\Zmap$-cycle, the number of $i$-edges in $S_0$ is equal to the number of $i$-edges in $S_i$ for any $i \in [\ell-1]$. Therefore, we have $m_i^0 = m_i^1 = \cdots = m_i^{\ell-1}$ for any $i \in \{ 0,\ldots,\ell-1\}$.
We now have two cases depending on whether $\ell > 2$ or $\ell=2$.
\begin{itemize}
\item 
$\ell > 2$: Recall that $\ell$ is a prime number which implies that it is odd. If $m_i^0=m_{\ell-i}^0$ for all $i \in [\ell-1]$, then we have
$$
\sum_{i=0}^{\ell-1} i \times m_i^0 \equiv \sum_{i=1}^{\ell-1} i \times m_i^0\equiv \sum_{i=1}^{\frac{\ell-1}{2}} i \times m_i^0 + (\ell-i) \times m_{\ell-i}^0 \equiv 0 \pmod \ell .
$$
This is a contradiction because we already know $\sum_{i=0}^{\ell-1} i \times m_i^0 \not\equiv 0\pmod\ell$. Therefore, there must exist $r \in [\ell-1]$ such that $m_r^0 > m_{\ell-r}^0$ which implies $m_r^j > m_{\ell-r}^j$ for any $j \in \{ 0,\ldots,\ell-1 \}$ as well. So the number of $r$-edges in $C$ (which is equal to $m_r^0 + \cdots + m_r^{\ell-1}$) is more than the number of $(\ell-r)$-edges in $C$ (which is equal to $m_{\ell-r}^0 + \cdots + m_{\ell-r}^{\ell-1}$). Hence there must exist an $r$-edge $uv$ with $N(uv) \neq N(vu)$, as required.

\item 
$\ell = 2$: For any $1$-edge $uv \in E(C)$, note that $\Zmap(u) \Zmap(v)$ is also a $1$-edge in $C$. 
For $i \in \{0,1 \}$, suppose $N_i(uv)$ be the number of times $S_i$ traverses $uv$ from $u$ to $v$. Let
\begin{eqnarray*}
M_{\{uv, \Zmap(u) \Zmap(v)\}}:= 
N_0(uv) - N_0(vu) + N_0(\Zmap(u) \Zmap(v)) - N_0(\Zmap(v) \Zmap(u)) =\\
= N_0(uv) - N_0(vu) + N_1(uv) - N_1(vu) = N(uv) - N(vu),
\end{eqnarray*}
where we get the first equality because $N_0(\Zmap(u) \Zmap(v)) = N_1(uv)$ and the second equality because $N_0(uv) + N_1(uv) = N(uv)$.
Since for any edge $uv$ with $u \neq v$, $uv$ and $\Zmap(u) \Zmap(v)$ are different edges, i.e., $\{u,v\} \cap \{ \Zmap(u), \Zmap(v) \} = \emptyset$, we can find a set of $1$-edges $E_1$ such that if $uv \in E_1$, then $\Zmap(u) \Zmap(v) \in E_1$, $vu \notin E_1$ (and $\Zmap(v) \Zmap(u) \notin E_1$ which can be deduced from the previous condition). Moreover, for any $1$-edge $uv$ in the path $S_0$, either $uv \in E_1$ or $vu \in E_1$. Now, it is not hard to see that
$$
\sum_{\{uv, \Zmap(u) \Zmap(v)\} \subseteq E_1} M_{\{uv, \Zmap(u) \Zmap(v)\}} \equiv m_1^0 \pmod 2.
$$
Since $u_1$ and $\Zmap(u_1)$ are not in the same set, we must have $m_1^0\not\equiv0 \pmod 2$. Therefore, there exists $uv \in E_1 \subseteq E(C)$ such that $(N(uv) - N(vu)) \not\equiv 0 \pmod 2$. Hence $N(uv) \neq N(vu)$, as desired.
\end{itemize}
\end{claimproof}

We continue with the proof of Lemma~\ref{lem: Zmap-cycles not null-homotopic}. Recall that $C^\prime$ is a $\Zmap$-cycle in $\GB$ starting at $t$ and $C$ is a cycle obtained from $C^\prime$ by removing the initial and final paths $P,P^{-1}$. Let $\sigma_1, \ldots, \sigma_n$ be the maximal faces in $\GB$. Note that for any edge in $\GB$, we must have $\{u,v\} \subseteq \sigma_i$ for some $i \in [n]$ which implies $E(\GB)= \{ uv \; | \; u,v \in \sigma_i, i \in [n] \}$.
Let $T^{\sigma_i}$ be some spanning tree of $\sigma_i$, $i\in[n]$, with $\Zmap(T^{\sigma_i})=T^{\sigma_j}$ where $\Zmap(\sigma_i)=\sigma_j$. Such spanning trees $T^{\sigma_1}, \ldots, T^{\sigma_n}$ exist since $\GB$ is $\Zmap$-structure and any two maximal faces have at most one common tuple.
For any spanning tree $\mathcal{T}$ and any edge $uv \in E(\GB)$, there is a path from $u$ to $v$ with edges in $E=E(T^{\sigma_1}) \cup \cdots \cup E(T^{\sigma_n})$ (since $u,v \in \sigma_i$ for some $i \in [n]$ and $T^{\sigma_i}$ is a spanning tree containing both $u$ and $v$).
So any edge is homotopic to a path whose edges belong to $E$; thus, any $\Zmap$-cycle is homotopic to a $\Zmap$-cycle whose edges belong to $E$. Therefore, we only need to prove the lemma for $\Zmap$-cycles whose edges belong to $E$.

By Claim~\ref{claim: N(uv)}, there is an edge $xy \in E(C)$ with $N(xy) \neq N(yx)$. Since $N(xy) \neq N(yx)$, $C$ should remain connected if we remove $xy$ from $C$. Let $\mathcal{T}$ be a spanning tree with $E(\mathcal{T}) \subseteq E \setminus \{ xy \}$. By Lemma~\ref{lem: conditions on A to be a free group}, $\mathbb{G}^{\GB, \mathcal{T}}$ is a free group with generators $uv$ for $\{u,v\} \in V(\GB)$, $u \neq v$, and $uv \in E \setminus E(\mathcal{T})$.
Recall that $\rho^{t,\mathcal{T}}: \pi_1(\GB) \rightarrow \mathbb{G}^{\GB, \mathcal{T}}$ is an isomorphism with
$$
\rho^{t,\mathcal{T}}: \pi_1(\GB)([t=x_1,\ldots,x_p, x_1=t]) = (x_1 x_2) \cdot (x_2 x_3) \cdot \ldots \cdot (x_p x_1)
$$
and $[C^\prime] \in \pi_1(\GB)$.
As $xy \in E \setminus E(\mathcal{T})$ and $N(xy) \neq N(yx)$, we obtain $\rho^{t,\mathcal{T}}([C^\prime]) \neq \mathbbm{1}$ which proves that $C^\prime$ is not null-homotopic, as required.
\end{proof}
\begin{corollary}
Let $\bB$ be a $\Zmap$-structure as above, $\GGB$ connected, and with faces of size at most $2$. Then any $\Zmap$-cycle in $\GGB$ is not null-homotopic.
\end{corollary}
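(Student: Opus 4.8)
The plan is to derive this as an immediate special case of Lemma~\ref{lem: Zmap-cycles not null-homotopic}. The only hypothesis of that lemma that is not already among the assumptions here is that any two maximal faces of $\cH^\bB$ have at most one common vertex, so it suffices to verify that this condition is automatic once every face of $\cH^\bB$ has size at most $2$. First I would note that under this size bound every maximal face of $\cH^\bB$ is either a single vertex or an edge $\{u,v\}$. Next, suppose $\sigma$ and $\sigma'$ were two distinct maximal faces with $|\sigma\cap\sigma'|\ge 2$; since $|\sigma|,|\sigma'|\le 2$, this would force $\sigma=\sigma\cap\sigma'=\sigma'$, contradicting distinctness. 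Hence any two maximal faces share at most one vertex, the hypotheses of Lemma~\ref{lem: Zmap-cycles not null-homotopic} are met for $\bB$, and the conclusion that no $\Zmap$-cycle in $\GGB$ is null-homotopic follows at once.

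I do not expect any real obstacle; the corollary is essentially a restatement. The one point worth a sentence is the possibility of a singleton maximal face: such a face corresponds to an isolated vertex of $\GB$, which cannot occur here because $\GGB$ is connected and, $\bB$ being a $\Zmap$-structure with order $\ell>1$, $R^\bB$ has at least two vertices; so in fact all maximal faces have size exactly $2$. This observation is not needed for the argument above — the inequality $|\sigma\cap\sigma'|\le 1$ holds for any two distinct sets of size at most $2$ — but it makes the picture transparent: in this case $\cH^\bB$ is essentially a graph, and the corollary recovers the ``graph'' setting of the topological method of \cite{Kro2020}.
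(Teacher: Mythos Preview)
Your proposal is correct and matches the paper's approach: the corollary is stated in the paper without proof, immediately after Lemma~\ref{lem: Zmap-cycles not null-homotopic}, precisely because the observation you spell out---that two distinct sets of size at most $2$ can share at most one element---makes it an instant special case. Your additional remark about singleton maximal faces is accurate but, as you note, unnecessary for the argument.
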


\begin{corollary} \label{col: nondegenerate template}
Let $(\bA,\bB)$ be a $\Zmap$-template as above and $\mathcal{H}^\bA$ $\Zmap$-connected. Let $C$ be a $\Zmap$-cycle in $R^\bA$. Suppose $\GGB$ is connected and any two maximal faces in $\GB$ have at most one vertex in common. Then $(\bA,\bB)$ is non-degenerate with respect to $R,C$.
\end{corollary}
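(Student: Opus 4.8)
The plan is to deduce this corollary from Lemma~\ref{lem: Zmap-cycles not null-homotopic}, via the observation that the homomorphic image of a $\Zmap$-cycle is again a $\Zmap$-cycle, once the image of the connecting path is absorbed into a new connecting path.

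First I would fix notation for the given $\Zmap$-cycle, writing it in $\GGA$ as $C=Q,\,M,\,Q^{-1}$, where $Q$ is a path in $\GGA$ from the base vertex $v_0$ to the ``spin'' vertex $u_1$, the walk $u_1,u_2\zd u_k=\Zmap(u_1)$ is a chosen path in $\GGA$ from $u_1$ to $\Zmap(u_1)$, and
$$
M = u_1\zd u_{k-1},\,\Zmap(u_1)\zd\Zmap(u_{k-1})\zd\Zmap^{\ell-1}(u_1)\zd\Zmap^{\ell-1}(u_{k-1}),\,u_1
$$
is the closed sub-walk obtained from $C$ by deleting $Q$ and $Q^{-1}$. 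Since $\GGB$ is connected and null-homotopy does not depend on the base point, it suffices to fix one vertex $t\in\rel^\bB$ and prove, for every homomorphism $g:\bA\to\bB$, that the cycle $P,g(C),P^{-1}$ is not null-homotopic, where $P$ is a path in $\GGB$ from $t$ to $g(v_0)$.

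Next, let $g:\bA\to\bB$ be an arbitrary homomorphism, i.e.\ a unary polymorphism of $(\bA,\bB)$. Since $\rel$ is pp-definable, $g$ maps $\rel^\bA$ into $\rel^\bB$ coordinatewise; by polymorphism-stability of the pair $\GA,\GB$ it maps every face of $\GA$ to a face of $\GB$, so any two vertices of $\GGA$ lying in a common face go to two vertices of $\GGB$ lying in a common face, and hence $g$ carries walks in $\GGA$ to walks in $\GGB$. Applying (M2) to the unary polymorphism $g$, and using that single vertices are faces, gives $g(\Zmap^i(u))=\Zmap^i(g(u))$ for all $u\in\rel^\bA$ and $i\ge 0$. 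Therefore applying $g$ vertexwise to $C$ yields $g(C)=g(Q),\,g(M),\,g(Q)^{-1}$ with
$$
g(M) = g(u_1)\zd g(u_{k-1}),\,\Zmap(g(u_1))\zd\Zmap^{\ell-1}(g(u_{k-1})),\,g(u_1),
$$
so that, with $P':=P,g(Q)$ a walk in $\GGB$ from $t$ to $g(u_1)$, the cycle $P,g(C),P^{-1}$ equals $P',\,g(M),\,(P')^{-1}$. This is precisely a $\Zmap$-cycle in $\GGB$ starting at $t$: its connecting path is $P'$, and its chosen walk from $g(u_1)$ to $\Zmap(g(u_1))$ is $g(u_1),g(u_2)\zd g(u_{k-1}),\Zmap(g(u_1))$ (here $\Zmap(g(u_1))\ne g(u_1)$ because by (M3) $\Zmap^\bB$ fixes no vertex). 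Since $\bB$ is a $\Zmap$-structure, $\GGB$ is connected, and any two maximal faces of $\GB$ share at most one vertex, Lemma~\ref{lem: Zmap-cycles not null-homotopic} applies and shows $P,g(C),P^{-1}$ is not null-homotopic. As $g$ was arbitrary, $(\bA,\bB)$ is non-degenerate with respect to $\rel,C$.

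I expect no genuinely hard step here: all the combinatorial work is packaged in Lemma~\ref{lem: Zmap-cycles not null-homotopic}, and the only facts needed are that $g$ commutes with $\Zmap$ (immediate from (M2)) and that $g$ maps walks to walks (immediate from polymorphism-stability). The one mild subtlety is that $g$ may identify some of the $u_i$'s or satisfy $g(u_i)=g(u_{i+1})$, so that $g(C)$ is a closed walk possibly containing repeated consecutive vertices rather than a ``nice'' cycle; this is harmless, since both the definition of a $\Zmap$-cycle and the conclusion of Lemma~\ref{lem: Zmap-cycles not null-homotopic} are phrased for walks in $\GGB$ and are insensitive to contracting repeated consecutive vertices.
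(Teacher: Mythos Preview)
Your proposal is correct and follows essentially the same route as the paper: both argue that for any homomorphism $g$ the cycle $P,g(C),P^{-1}$ is itself a $\Zmap$-cycle in $\GGB$ (using (M2) to commute $g$ with $\Zmap$ and absorbing $g(Q)$ into the connecting path), and then invoke Lemma~\ref{lem: Zmap-cycles not null-homotopic}. Your write-up is in fact a bit more careful than the paper's, noting explicitly why $g$ maps walks to walks and flagging the harmless possibility of repeated consecutive vertices.
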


\begin{proof}
Fix $t'\in\rel^\bA, t\in\rel^\bB$ and let
\begin{align*}
C'_1 &=P, u_1, u_2, \ldots, u_{k-1}, \Zmap(u_1), \Zmap(u_2), \ldots, \Zmap(u_{k-1}), \Zmap^2(u_1), \ldots, \Zmap^{\ell-1}(u_1),\\
&\hspace*{10mm} \Zmap^{\ell-1}(u_2), \ldots, \Zmap^{\ell-1}(u_{k-1}), u_1, P^{-1}
\end{align*}
be a $\Zmap$-cycle in $\GA$ starting at $t'$ where $P$ is a path from $t'$ to $u_1$ and $P^{-1}$ is the reverse of $P$. Let $C_1$ be the cycle obtained from $C'_1$ by removing $P$ and $P^{-1}$, i.e., 
\begin{align*}
C_1 &=u_1, u_2, \ldots, u_{k-1}, \Zmap(u_1), \Zmap(u_2), \ldots, \Zmap(u_{k-1}), \Zmap^2(u_1), \ldots, \Zmap^{\ell-1}(u_1),\\
&\hspace*{10mm} \Zmap^{\ell-1}(u_2), \ldots, \Zmap^{\ell-1}(u_{k-1}), u_1.
\end{align*}
For any path $L= v_1, \ldots, v_n$ in $\GA$ and any homomorphism $g: \bA \rightarrow \bB$, let $g(L)=g(v_1),\ldots,g(v_n)$ be a path in $\GB$. For any homomorphism $g:\bA \rightarrow \bB$, consider the cycle $C$ in $\GB$ given by
$$
C= Q, g(C'_1), Q^{-1}= Q, g(P), g(C_1), g(P^{-1}), Q^{-1}
$$
where $Q$ is a path from the tuple $t$ to $g(t^\prime)$ and $Q^{-1}$ is the reverse of $Q$. The cycle $C$ is a $\Zmap$-cycle in $\GB$ since $g(\Zmap(u))= \Zmap(g(u))$ for any tuple $u$ in $R^\bA$ by (M2). So $C$ is not-null-homotopic by Lemma~\ref{lem: Zmap-cycles not null-homotopic} and the result follows.
\end{proof}

Combining Theorem~\ref{Main Theorem}, Corollary~\ref{col: nondegenerate template}, and and Lemma~\ref{lem: conditions on A to be a free group} we obtain the following theorem.

\begin{theorem} \label{Main Theorem 2}
Let $(\bA,\bB)$ be a $\Zmap$-template, $\GA$ $\Zmap$-connected, and $\GGB$ connected. If any two maximal faces in $\GB$ have at most one vertex in common, then $\PCSP(\bA,\bB)$ is NP-hard.
\end{theorem}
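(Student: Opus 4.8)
The proof is an assembly of the three preceding results, so the plan is mostly bookkeeping. First I would use the hypothesis that $\GA$ is $\Zmap$-connected to produce an explicit cycle to work with: fix an arbitrary vertex $t'\in\rel^\bA$, choose a path from $t'$ to some $u_1$, then a path $u_1,u_2,\dots,u_{k-1},\Zmap(u_1)$ (which exists by $\Zmap$-connectedness), and iterate $\Zmap$ through its $\ell$ powers exactly as in the construction preceding Lemma~\ref{lem: Zmap-cycles not null-homotopic}. This yields a $\Zmap$-cycle $C$ in $\GA$ starting at $t'$.

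Next I would verify the hypotheses of Corollary~\ref{col: nondegenerate template} for this $C$: $(\bA,\bB)$ is a $\Zmap$-template and $\GA$ is $\Zmap$-connected by assumption, $C$ is a $\Zmap$-cycle in $\rel^\bA$ by construction, and $\GGB$ is connected with any two maximal faces of $\GB$ sharing at most one vertex by assumption. The corollary then gives that $(\bA,\bB)$ is non-degenerate with respect to $\rel$ and $C$ (the base vertex $t\in\rel^\bB$ may be chosen arbitrarily since $\GGB$ is connected, matching the standing assumptions of Section~\ref{sec:hardness}).

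Then I would check that $\pi_1(\GB)$ is Abelian-bounded. The condition that any two maximal faces of $\GB$ have at most one common vertex is exactly the hypothesis of Lemma~\ref{lem: conditions on A to be a free group}, so $\pi_1(\GB)$ is a free group. Since $\bB$ is finite, $\rel^\bB$ is finite and hence $\GGB$ is a finite graph; via the isomorphism $\pi_1(\GB)\cong\mathbb{G}^{\GB,\mathcal{T}}$ of Lemma~\ref{isomorphism lemma}, whose generating set is a subset of the finitely many edges of $\GGB$, the group $\pi_1(\GB)$ is finitely generated. By Lemma~\ref{lem:free-abelian-bounded}, a finitely generated free group is Abelian-bounded.

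Finally, having established that $(\bA,\bB)$ is non-degenerate with respect to $\rel,C$ and that $\pi_1(\GB)$ is Abelian-bounded, Theorem~\ref{Main Theorem} applies directly and yields that $\PCSP(\bA,\bB)$ is NP-hard. I do not expect any genuine obstacle in this argument: all the substantive content — the minion homomorphism $\xi_C$ into $\Pol(\mathbb{Z},\pi_1(\GB))$, its non-triviality via non-degeneracy, the bounded-essential-arity bound of Theorem~\ref{free group bounded essential arity}, and the freeness of $\pi_1(\GB)$ — is supplied by the cited results. The only points needing care are confirming that the constructed $C$ genuinely satisfies the definition of a $\Zmap$-cycle and that the ambient data fixed in Section~\ref{sec:hardness} (a pp-definable $\rel$ together with a polymorphism-stable pair $\cH^\bA,\cH^\bB$) are in force, both of which hold under the theorem's hypotheses.
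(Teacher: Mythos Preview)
Your proposal is correct and follows essentially the same route as the paper, which simply states that the theorem is obtained by combining Theorem~\ref{Main Theorem}, Corollary~\ref{col: nondegenerate template}, and Lemma~\ref{lem: conditions on A to be a free group}. Your write-up supplies a bit more detail (explicitly constructing the $\Zmap$-cycle, noting finite generation via the finiteness of $\GGB$, and invoking Lemma~\ref{lem:free-abelian-bounded}), but the logical structure is identical.
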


\section{Applications}\label{sec:applications}

\subsection{New proofs of the existing results}

\textbf{The absence of minion homomorphisms.}
As the first application, we use our techniques to prove the nonexistence of a minion homomorphism.
Some of the known NP-hard proofs of PCSP problems, such as $\PCSP(\textbf{C}_{2k+1},\textbf{K}_3)$~\cite{Kro2020}, $\PCSP(\textbf{C}_{2k+1},\textbf{K}_4)$~\cite{Avvakumov25}, and $\PCSP(\textbf{LO}_3,\textbf{LO}_4)$~\cite{Filakovsky23:hardness} are based on finding a minion homomorphism from the polymorphisms of the PCSP template to $\Pol(\mathbb{Z},\textbf G)$ for some group $\textbf G$. We prove that this approach does not work for AGCP.

The proof of the following proposition is based on the fact that if two polymorphisms are different in exactly one point, then they have the same image with respect to minion homomorphism $\xi$. A similar property of topological maps associated with polymorphisms was also mentioned in~\cite{Filakovsky23:hardness}.

\begin{proposition}\label{no minion homomorphism for the AGCP}
For $k \geq 3, t \geq k^3$, any non-trivial group $\textbf G$, any minion homomorphism $\xi: \Pol(\textbf{K}_k,\textbf{K}_t) \rightarrow \Pol(\mathbb{Z},\textbf G)$, and any $f \in \Pol(\textbf{K}_k,\textbf{K}_t)$, the function $\xi(f)$ is constant.
\end{proposition}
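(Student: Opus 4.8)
The plan is to show that any polymorphism $f \in \Pol(\textbf{K}_k,\textbf{K}_t)$ of arity $n$ is ``rigid enough'' that changing one coordinate of the input moves the output to an adjacent vertex, and then to argue that a connected-by-single-point-changes domain together with such local constancy forces $\xi(f)$ to be constant. First I would recall the key observation (already used implicitly elsewhere): if $g, g' \in \Pol(\textbf{K}_k,\textbf{K}_t)^{(n)}$ agree on all but one coordinate input tuple, then $\xi(g) = \xi(g')$. The reason is that the $\mathbb Z$-valued argument of the associated group polymorphism is reconstructed from \emph{cycles}, and replacing a single vertex of a cycle by a homotopic detour (which is exactly what a one-point change produces when the relevant triples lie in a common face) does not change the homotopy class; so $\xi$ cannot distinguish $g$ from $g'$.

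Next I would exploit the numerical gap $t \geq k^3$ (in fact $t > \binom{k}{\lfloor k/2\rfloor}$ suffices, but $k^3$ is safely larger) to show that $\Pol(\textbf{K}_k,\textbf{K}_t)$ has a very limited structure: by the Brakensiek--Guruswami / Bul\'in--Krokhin--Opr\v{s}al analysis of clique polymorphisms, every $f \in \Pol(\textbf{K}_k,\textbf{K}_t)^{(n)}$ admits a small set of ``essential'' coordinates, OR more directly, one shows that one can walk between \emph{any} two input tuples $\bar a, \bar b \in [k]^n$ by a sequence of single-coordinate changes in such a way that, along the walk, the output only ever moves to an adjacent vertex of $K_t$ --- i.e.\ each one-coordinate step is ``invisible'' to $\xi$. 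The mechanism here is that, because $t$ is so large compared to $k$, the image $f(\sigma_1 \times \cdots \times \sigma_n)$ of a product of faces (cliques of $K_k$, i.e.\ subsets of $[k]$) lands in a clique of $K_t$, and single-coordinate moves between $\bar a$ and $\bar b$ can be routed through products of faces so that consecutive outputs are always in a common clique. Concatenating these steps along a path from $\bar a$ to $\bar b$ yields $\xi(f)(\bar a) = \xi(f)(\bar b)$ for the unary reduction, hence $\xi(f)$ is constant.

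More concretely, the cleanest route: reduce to the unary case by the standard diagonal trick ($g(x) = f(x,\dots,x)$ is a minor, so if $\xi(g)$ is constant so is $\xi(f)$ up to the minion-homomorphism identities), then show that the unary polymorphism $g : K_k \to K_t$, viewed as a graph homomorphism, induces the \emph{trivial} map $\mathbb Z \to \textbf G$. For this, pick any closed walk in the relevant graph $G(\cH^{\textbf{K}_t})$; since $t \geq k^3$ forces the image of $K_k$ under $g$ to sit inside a structure whose relevant edge-path group (on the pp-definable relation and its face structure) is generated by triangles that $g$ traces out trivially, the homotopy class of $g(C)$ is null. The one genuinely technical point --- and the step I expect to be the main obstacle --- is verifying that the chosen pp-definable relation $R$ and its simplicial complex $\cH^{\textbf{K}_t}$ have \emph{all} their maximal faces so large (size $\geq$ something like $t - k + 1$) that any image of a product of $K_k$-faces fits inside a single face, which is what makes every single-coordinate move homotopically trivial; making the counting $t \geq k^3$ actually deliver this, uniformly over all arities $n$ and all $f$, is where the real work lies. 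Once that combinatorial fact is in hand, the rest is the formal ``local constancy plus connectivity implies global constancy'' argument, which is routine.
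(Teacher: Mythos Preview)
Your proposal has two genuine gaps.

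\textbf{First, the diagonal reduction is in the wrong direction.} You write: ``reduce to the unary case by the standard diagonal trick ($g(x)=f(x,\dots,x)$ is a minor, so if $\xi(g)$ is constant so is $\xi(f)$).'' This implication is false. Since $\xi$ preserves minors, $\xi(g)(x)=\xi(f)(x,\dots,x)=\prod_{i=1}^n \xi(f)_i^{\,x}$, where $\xi(f)_i=\xi(f)(0,\dots,1,\dots,0)$. If, say, $\xi(f)(x_1,x_2)=s^{x_1-x_2}$ for some $s\neq \mathbbm 1$, then $\xi(f)$ is non-constant while $\xi(g)\equiv\mathbbm 1$. So proving the unary case does not finish the argument. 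The paper instead works at arity $3$: it shows that all ternary polymorphisms of a fixed ``diagonal type'' have the same image under $\xi$, and then, for an arbitrary $f$ with $\xi(f)$ non-constant, produces two \emph{different} ternary minors of $f$ with different images under $\xi$ but the same diagonal type, a contradiction.

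\textbf{Second, you are using structure that $\xi$ need not have.} The proposition concerns an \emph{arbitrary} minion homomorphism $\xi:\Pol(\mathbf K_k,\mathbf K_t)\to\Pol(\mathbb Z,\mathbf G)$ for an arbitrary group $\mathbf G$; it is not the specific $\xi_C$ built from cycles and simplicial complexes earlier in the paper. So appeals to ``homotopy class of $g(C)$'', ``edge-path group of $\cH^{\mathbf K_t}$'', or ``images of products of faces landing in a single face'' are unavailable: there is no simplicial complex, no cycle $C$, and no pp-definable relation $R$ in play. The paper's proof of the key lemma ``if $f,g$ differ at a single non-constant input then $\xi(f)=\xi(g)$'' is purely minion-theoretic: if $(a_1,\dots,a_n)$ is the differing point, pick coordinates $p,q$ with $a_p\neq a_q$ and identify them via a map $\pi:[n]\to[n-1]$; then $f^\pi=g^\pi$ (the bad point disappears), hence $\xi(f)^\pi=\xi(g)^\pi$, and one unwinds this in $\Pol(\mathbb Z,\mathbf G)$ to force $\xi(f)_i=\xi(g)_i$ for every $i$.

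Finally, your reading of the bound $t\ge k^3$ is off. It has nothing to do with face sizes or $\binom{k}{\lfloor k/2\rfloor}$. It is used only to guarantee that the space of \emph{ternary} polymorphisms of a fixed diagonal type is connected under single-point changes: there are $k^3$ input triples, so with $t\ge k^3$ colours one can first move any such polymorphism to one where all non-constant triples receive distinct colours, and from there to any other, changing one value at a time while remaining a polymorphism.
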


\begin{proof}
By contradiction, let $\xi$ be such a minion homomorphism. For any $f \in \Pol(\textbf{K}_k,\textbf{K}_t)$, let $\sel(f)$ be the essential coordinates of $\xi(f)$. It is obvious that $|\sel(f)| \geq 1$ since $\xi(f)$ is not a constant function. We say two $n$-ary polymorphisms $f,g \in \Pol(\textbf{K}_k,\textbf{K}_t)$ are of the same type if $f(x, \ldots, x) = g(x, \ldots, x)$ for any $x \in \textbf{K}_k$.

\setcounter{theorem}{35}
\begin{claim} \label{changing one point does not change the image}
Let $f,g \in \Pol(\textbf{K}_k,\textbf{K}_t)$ be two polymorphisms of the same type and of arities $n \geq 3$. Suppose $f$ and $g$ are different in exactly one point, i.e., there exists a unique sequence of elements $a_1, \ldots, a_n \in \textbf{K}_k$ such that $f(a_1,\ldots,a_n) \neq g(a_1, \ldots, a_n)$. Then $\xi(f) = \xi(g)$.
\end{claim}
\begin{claimproof}
First of all, note that there exists $p,q \in [n]$ such that $a_p \neq a_q$. For any $i \in [n]$, we set $\xi(f)_i:=\xi(f)(0,\ldots,0,1,0,\ldots,0)$ and $\xi(g)_i:=\xi(g)(0,\ldots,0,1,0,\ldots,0)$ where there is a $1$ in the $i$-th coordinate and $0$'s in other coordinates. To obtain a  contradiction, suppose $\xi(f) \neq \xi(g)$, then there is a set $S$ with $|S| \geq 1$ such that $\xi(f)_i \neq \xi(g)_i$ for any $i \in S$ and $\xi(f)_i = \xi(g)_i$ for $i \in [n] \setminus S$. We have three cases based on the number of elements in $S$.
\begin{itemize}
\item 
$|S|=1$: let $S=\{i\}$. There exists $j \in [n]$ such that $a_i \neq a_j$. Let $\pi: [n] \rightarrow [n-1]$ be a surjective map with $\pi(i)=\pi(j)=1$. Then $\xi(f^\pi)_1 = \xi(f)_i \cdot \xi(f)_j$ and $\xi(g^\pi)_1 = \xi(g)_i \cdot \xi(g)_j$. Since $\xi(f)_i \neq \xi(g)_i$ and $\xi(f)_j=\xi(g)_j$, we get $\xi(f^\pi)_1 \neq \xi(g^\pi)_1$ which implies $\xi(f^\pi) \neq \xi(g^\pi)$. This is a contradiction since $f^\pi=g^\pi$.

\item 
$|S|=2$: let $S=\{i,j\}$. There exist $l \in [n] \setminus S$ such that either $a_i \neq a_l$ or $a_j \neq a_l$. With no loss of generality assume that $a_i \neq a_l$. Let $\pi: [n] \rightarrow [n-1]$ be a surjective map with $\pi(i)=\pi(l)=1$. Then $\xi(f^\pi)_1 = \xi(f)_i \cdot \xi(f)_l$ and $\xi(g^\pi)_1 = \xi(g)_i \cdot \xi(g)_l$. Since $\xi(f)_i \neq \xi(g)_i$ and $\xi(f)_l=\xi(g)_l$, we get $\xi(f^\pi)_1 \neq \xi(g^\pi)_1$ which implies $\xi(f^\pi) \neq \xi(g^\pi)$. This is a contradiction since $f^\pi=g^\pi$.

\item 
$|S| \geq 3$: let $\{i,j,l\} \subseteq S$. There exists $r \in [n]$ such that $a_i \neq a_r$. If $j\ne l$, let $\pi: [n] \rightarrow [n-1]$ be a surjective map with $\pi(i)=\pi(r)=1$ and $\pi(l)=2$. Then $\xi(f^\pi)_2 = \xi(f)_l$ and $\xi(g^\pi)_2 = \xi(g)_l$. Since $\xi(f)_l \neq \xi(g)_l$, we get $\xi(f^\pi)_2 \neq \xi(g^\pi)_2$ which implies $\xi(f^\pi) \neq \xi(g^\pi)$. This is a contradiction since $f^\pi=g^\pi$. If $r\ne j$ the argument is similar.
\end{itemize}
\end{claimproof}

\begin{claim} \label{one connected component}
Any two ternary polymorphisms $f,g \in \Pol(\textbf{K}_k,\textbf{K}_t)$ of the same type have the same image under $\xi$, i.e., $\xi(f)=\xi(g)$.
\end{claim}
\begin{claimproof}
Let us start with some definitions. Any sequence $a_1,a_2,a_3 \in \textbf{K}_k$ is said to be a point; it is called a constant point if $a_1=a_2=a_3$.
The value of $f(a_1,a_2,a_3)$ is called a color of a point $a_1,a_2,a_3$ which can be from $1$ to $t$.
At each step, we may change the color of $f(a_1,a_2,a_3)$ where $a_1,a_2,a_3$ is a non-constant point such that the resulted coloring of points remains a polymorphism. Since $t \geq k^3$, there is a sequence of actions that changes the colors of one non-constant point at a time such that in the end, any two non-constant points have different colors. With some work, it is not hard to show that there is a sequence of actions that changes the color of one point at a time such that the end polymorphism would be $g$. By Claim~\ref{changing one point does not change the image}, after changing the color of one point, the image of the resulted polymorphism under $\xi$ does not change. Therefore, the images of all the polymorphisms involved are the same, in particular $f$ and $g$, which completes the proof of the claim.
\end{claimproof}

We continue with the proof of Proposition~\ref{no minion homomorphism for the AGCP}. Let $f \in \Pol(\textbf{K}_k,\textbf{K}_t)$ be a polymorphism of arity $n$. Based on the size of $\sel(f)$ we have two cases.
\begin{itemize}
\item 
$|\sel(f)|=1$: let $\sel(f)=\{i\}$ for $i \in [n]$. Let $\pi,\pi':[n] \rightarrow [3]$ be maps with $\pi(i)=1$ and $\pi'(i)=2$. The only essential coordinates of $\xi(f)^\pi$ and $\xi(f)^{\pi'}$ are $1$ and $2$, respectively. Therefore, $\xi(f^\pi) \neq \xi(f^{\pi'})$. Both $f^\pi$ and $f^{\pi'}$ are ternary polymorphisms of the same type. So by Claim~\ref{one connected component}, $\xi(f^\pi) = \xi(f^{\pi'})$, a contradiction.

\item 
$|\sel(f)|\geq 2$: let $\{i,j\} \subseteq \sel(f)$. Let $\pi,\pi':[n] \rightarrow [3]$ be maps with $\pi(i)=\pi(j)=1$, $\pi(l) \neq 1$ for $l \neq i,j$, and $\pi'(i)=1$, $\pi'(l) \neq 1$ for $l \neq i$. Obviously, $\xi(f^\pi)_1=\xi(f)_i \cdot \xi(f)_j$ and $\xi(f^{\pi'})_1=\xi(f)_i$. So $\xi(f^\pi)_1 \neq \xi(f^{\pi'})_1$.
Therefore, $\xi(f^\pi) \neq \xi(f^{\pi'})$. Both $f^\pi$ and $f^{\pi'}$ are ternary polymorphisms of the same type. So by Claim~\ref{one connected component}, $\xi(f^\pi) = \xi(f^{\pi'})$, a contradiction.
\end{itemize}
\end{proof}

\textbf{Constructing $\Zmap$-templates.}
We start with an example of how a $\Zmap$-template can be constructed. A relation $R$ is said to be \emph{cyclic} if for any $(\vc ar)\in R$ the tuple $(a_r,\vc a{r-1})$ also belongs to $R$.
Let $(\bA,\bB)$ be a PCSP template and a relation $R$ such that $R^\bA,R^\bB$ are cyclic is pp-definable in $(\bA,\bB)$. Let a pair $\cH^\bA,\cH^\bB$ of simplicial complexes on $\rel^\bA,\rel^\bB$ be defined as follows: The faces of $\cH^\bA$ ($\cH^\bB$) are the sets of tuples of the form $A_1\tm\dots\tm A_r\sse\rel^\bA$ (respectively, $B_1\tm\dots\tm B_r\sse\rel^\bB$), which is the standard way of defining simplicial complexes in the literature and also in Lemma~\ref{lem:tolerance}. 
For tuples $t=(\vc ar) \in R^\bA, t'=(\vc{a'}r) \in R^\bB$, we define the action of $\Zmap^\bA$ and $\Zmap^\bB$ as follows:
$
\Zmap^\bA(t):=(a_r,a_1\zd a_{r-1})\quad\text{and}\quad 
\Zmap^\bB(t^\prime):=(a'_r,a'_1\zd a'_{r-1}).
$
A tuple $(a_1,\ldots, a_r)$ is called \emph{periodic} if for some $k \in [r-1]$, we have $a_i = a_{(i + k) \bmod r}$ for any $i \in [r]$. (Note that $a_0 = a_r$). For example, any constant tuple is periodic, and if $r$ is a prime number, then only constant tuples are periodic. Observe that if $R^\bA$ has no periodic tuples, then for any face $\sigma$ and $i \in [r-1]$, we have $\Zmap^i(\sigma) \neq \sigma$. (If $\Zmap^i(\sigma) = \sigma$ for some $1 < i < r$, then by Lemma~\ref{lem:tolerance} there exists a tuple $t \in \sigma$ such that $\Zmap^i(t) = t$). Moreover, $\Zmap^r(\sigma) = \sigma$ for any face $\sigma$. 
The same property holds for $R^\bB$, and $\Zmap$ satisfies condition (M3). Conditions (M1) and (M2) are straightforward.
$\Zmap$-template of this type have been intensively studied in the literature including that on the PCSP (see, e.g., \cite{Kro2020}). The majority of the results fit in this framework for $r=2$ and \cite{Filakovsky23:hardness} considers such a construction for $r=3$.

\textbf{The hardness of $\PCSP(\textbf C_{2k+1}, \textbf K_3)$.}
With the observation above (and assuming that $\rel$ is the edge relation of the corresponding graph) the NP-hardness of $\PCSP(\textbf C_{2k+1}, \textbf K_3)$,  \cite{Kro2019,Kro2020} follows from Theorem~\ref{Main Theorem 2}. Indeed, let $\cH^{\mathbf{C}_{2k+1}}$ and $\cH^{\mathbf{K}_3}$ be defined as after Lemma~\ref{lem:tol-classes}. Then $\cH^{\mathbf{C}_{2k+1}}$ and $\cH^{\mathbf{K}_3}$ are both connected, any face in $\cH^{\textbf K_3}$ is of size at most $2$ and there is no periodic tuple in both $R^{\textbf K_3}$ and $R^{\textbf C_{2k+1}}$.

\begin{corollary} {\rm \cite{Kro2019,Kro2020}} \label{PCSP(C_{2k+1}, K_3)}
For $k \geq 1$, $\PCSP(\textbf C_{2k+1}, \textbf K_3)$ is NP-hard.
\end{corollary}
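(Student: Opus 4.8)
The plan is to deduce the corollary from Theorem~\ref{Main Theorem 2}, applied to $(\bA,\bB)=(\mathbf C_{2k+1},\mathbf K_3)$. One first fixes the data that theorem needs: let $R$ be the edge relation, so $R^{\bA}=E(\mathbf C_{2k+1})$ and $R^{\bB}=E(\mathbf K_3)$ (it is a predicate of the structures, hence pp-definable, and for a binary relation ``cyclic'' just means ``symmetric''); let $\GA,\GB$ be the simplicial complexes whose faces are the subsets of boxes $X_1\times X_2\subseteq R$, as constructed after Lemma~\ref{lem:tol-classes} (by Lemma~\ref{lem:tolerance} these faces are exactly the classes of a pp-definable tolerance, so the pair is polymorphism-stable by Lemma~\ref{lem:tol-classes}); and let $\Zmap^{\bA},\Zmap^{\bB}$ be the coordinate swap $(a_1,a_2)\mapsto(a_2,a_1)$. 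Since odd cycles are $3$-colorable, $\mathbf C_{2k+1}\to\mathbf K_3$, so $(\bA,\bB)$ is a PCSP template.

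It then remains to verify the hypotheses of Theorem~\ref{Main Theorem 2}. That $(\bA,\bB)$ is a $\Zmap$-template is straightforward: (M1) and (M2) hold because the swap maps a box $X_1\times X_2$ to $X_2\times X_1$ and polymorphisms act coordinatewise, and (M3) holds with order $\ell=2$ because a binary tuple is periodic only if it is constant and $\mathbf C_{2k+1},\mathbf K_3$ are loopless, so no face $\sigma$ satisfies $\Zmap(\sigma)=\sigma$ while $\Zmap^2(\sigma)=\sigma$ always. For the condition on maximal faces, note that since $\mathbf K_3$ contains no $K_{2,2}$ every face of $\GB$ of size $>1$ is contained in an ``oriented star'' $\{a\}\times(V\setminus\{a\})$ or $(V\setminus\{a\})\times\{a\}$, and these stars are precisely the maximal faces; two distinct out-stars (or two in-stars) are disjoint since their singleton coordinate sets differ, while an out-star at $a$ and an in-star at $a'$ meet in the single oriented edge $(a,a')$ when $a\ne a'$ and are disjoint when $a=a'$. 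Hence any two distinct maximal faces of $\GB$ share at most one vertex of $\GGB$.

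The remaining requirements are that $\GGB$ be connected and that $\GA$ be $\Zmap$-connected; for the latter it suffices that $\GGA$ be connected. For $\mathbf K_3$ this is a finite check over its six oriented edges, using that the face $\{a\}\times(V\setminus\{a\})$ joins the two oriented edges out of $a$. For $\mathbf C_{2k+1}$, the size-$2$ faces of $\GA$ are the stars $\{u\}\times N(u)$ and $N(u)\times\{u\}$, so in $\GGA$ the oriented edge $(i,i+1)$ is adjacent to $(i+2,i+1)$ (via $N(i+1)\times\{i+1\}$) and $(i+2,i+1)$ to $(i+2,i+3)$ (via $\{i+2\}\times N(i+2)$); iterating the resulting relation $(i,i+1)\sim(i+2,i+3)$ and using $\gcd(2,2k+1)=1$ connects all oriented edges of one orientation type, while the step $(i,i+1)\sim(i+2,i+1)$ bridges the two types, so $\GGA$ is connected. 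With all hypotheses checked, Theorem~\ref{Main Theorem 2} yields that $\PCSP(\mathbf C_{2k+1},\mathbf K_3)$ is NP-hard; for $k=1$ one has $\mathbf C_3=\mathbf K_3$, so the statement is just the NP-hardness of $3$-coloring, and the argument covers it as well.

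The only step that goes beyond a formal unwinding of the definitions is the connectivity of $\GGA$: an oriented edge $(u,v)$ and its reverse $(v,u)$ lie in no common face of $\GA$, so $\Zmap$-connectedness is not automatic and one really has to exhibit a path, which is exactly where the oddness of $2k+1$, i.e.\ $\gcd(2,2k+1)=1$, is used. The $\Zmap$-template axioms, polymorphism-stability, and the face structure of $\mathbf K_3$ are all routine.
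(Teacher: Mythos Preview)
Your proposal is correct and follows essentially the same route as the paper: both apply Theorem~\ref{Main Theorem 2} to $(\mathbf C_{2k+1},\mathbf K_3)$ with $R$ the edge relation, the box simplicial complexes of Lemma~\ref{lem:tolerance}, and $\mu$ the coordinate swap. The paper dispatches the hypotheses tersely (``both connected'', ``any face in $\cH^{\mathbf K_3}$ has size at most $2$'', ``no periodic tuple''), whereas you spell out the star structure of the maximal faces of $\cH^{\mathbf K_3}$, their pairwise intersections, and the connectivity of $G(\cH^{\mathbf C_{2k+1}})$ via the $+2$ shift and $\gcd(2,2k+1)=1$; your added remark that this is exactly where oddness is used is a correct and worthwhile observation.
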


If a graph $\textbf{G}$ does not have $\textbf K_{2,2}$ as a subgraph, then any maximal face in $\cH^{\textbf{G}}$ is of the form $\textbf K_{1,r}$ for some $r \geq 1$. So any two maximal faces of $\cH^\textbf G$ have at most one tuple in common. Therefore we get the following corollary which is a generalization of Corollary~\ref{PCSP(C_{2k+1}, K_3)}.

\begin{corollary} {\rm \cite{Kro2020}}
Let $\textbf G$ be a non-bipartite graph such that there is a homomorphism from $\textbf C_{2k+1}$ to $\textbf G$ that does not have $\textbf K_{2,2}$ as a subgraph. Then $\PCSP(\textbf C_{2k+1}, \textbf G)$ is NP-hard for $k \geq 1$.
\end{corollary}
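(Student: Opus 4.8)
The plan is to realize the corollary as an instance of Theorem~\ref{Main Theorem 2}. First I would set $\bA=\textbf{C}_{2k+1}$, $\bB=\textbf{G}$, and take $\rel$ to be the binary edge relation — it is pp-definable in $(\bA,\bB)$ by the atomic formula $E(x,y)$, and it is cyclic (for $r=2$ this just means symmetric). For the complexes I would use $\cH^\bA,\cH^\bB$ whose faces are the sets $A_1\tm A_2\sse\rel^\bA$ and $B_1\tm B_2\sse\rel^\bB$ (the complete bipartite subgraphs), which by Lemma~\ref{lem:tolerance} are precisely the classes of a pp-definable tolerance, hence by Lemma~\ref{lem:tol-classes} a polymorphism-stable pair. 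As $(\Zmap^\bA,\Zmap^\bB)$ I would take the coordinate swap $(a,b)\mapsto(b,a)$, i.e.\ the $r=2$ case of the cyclic-rotation construction of $\Zmap$-templates given above. Then (M1) and (M2) are immediate: the swap sends the face $A_1\tm A_2$ to the face $A_2\tm A_1$, and it commutes with every polymorphism since polymorphisms act coordinatewise. For (M3) one takes $\ell=2$: clearly $\Zmap^2=\mathrm{id}$, and $\Zmap(\sg)=\sg$ would force $A_1=A_2$ and hence $A_1\tm A_1\sse\rel^\bA$, i.e.\ a loop — impossible since $\textbf{C}_{2k+1}$ and $\textbf{G}$ are loopless (equivalently, neither relation has a $2$-periodic tuple). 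Hence $(\bA,\bB)$ is a $\Zmap$-template.

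Next I would check the connectivity hypotheses; we may assume $\textbf{G}$ is connected, the general case reducing to this by working with the (finitely many) components into which $\textbf{C}_{2k+1}$ maps, each of which is connected, non-bipartite and $\textbf{K}_{2,2}$-free. Connectedness of $G(\cH^\bB)$ is then routine: two arcs sharing a head, or sharing a tail, lie in a common ``star'' face, so traversing the edges of $\textbf{G}$ links any two arcs up to reversal, while an odd closed walk through a given edge (which exists because $\textbf{G}$ is connected and non-bipartite) links an arc to its reverse. The crux is showing that $G(\cH^\bA)$ is $\Zmap$-connected, i.e.\ that every arc $(a,b)$ of $\textbf{C}_{2k+1}$ is joined by a path to its reverse $(b,a)$. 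Labelling the vertices of $\textbf{C}_{2k+1}$ by $\zZ_{2k+1}$ with edges $\{i,i+1\}$, consider the arcs $w_{2j}=(2j,2j+1)$ and $w_{2j+1}=(2j+2,2j+1)$ (indices mod $2k+1$): $w_{2j}$ and $w_{2j+1}$ share the head $2j+1$, and $w_{2j+1}$ and $w_{2j+2}$ share the tail $2j+2$, so consecutive terms lie in a common star face and are adjacent in $G(\cH^\bA)$; since $2$ is invertible modulo $2k+1$ we get $w_{2k+1}=(1,0)$, so $w_0=(0,1),\dots,w_{2k+1}=(1,0)$ is a path from $(0,1)$ to its reverse. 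Applying a rotation of the cycle (and, for ``backward'' arcs, reversing the path) gives such a path for every arc.

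Finally, since $\textbf{G}$ has no $\textbf{K}_{2,2}$ subgraph, every maximal face of $\cH^\bB$ is a star $\textbf{K}_{1,r}$ (as observed just before this corollary) and two distinct stars meet in at most one arc, so any two maximal faces of $G(\cH^\bB)$ share at most one vertex. Thus $(\bA,\bB)$ is a $\Zmap$-template with $G(\cH^\bA)$ $\Zmap$-connected, $G(\cH^\bB)$ connected, and all maximal faces of $G(\cH^\bB)$ pairwise sharing at most one vertex, so Theorem~\ref{Main Theorem 2} yields that $\PCSP(\textbf{C}_{2k+1},\textbf{G})$ is NP-hard; taking $\textbf{G}=\textbf{K}_3$ recovers Corollary~\ref{PCSP(C_{2k+1}, K_3)}. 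I expect the main obstacle to be the $\Zmap$-connectedness of $G(\cH^\bA)$, and that is exactly where the odd length of $\textbf{C}_{2k+1}$ is used: for an even cycle the arc graph splits into a ``clockwise'' and a ``counterclockwise'' component, so $(a,b)$ and $(b,a)$ would lie in different components and $\Zmap$-connectedness would fail.
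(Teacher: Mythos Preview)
Your proof is correct and follows exactly the paper's approach: apply Theorem~\ref{Main Theorem 2} with $\rel$ the edge relation, the ``product'' faces of Lemma~\ref{lem:tolerance}, and $\Zmap$ the coordinate swap. In fact you supply considerably more detail than the paper, which only records the key observation that $\textbf{K}_{2,2}$-freeness forces every maximal face to be a star $\textbf{K}_{1,r}$, so that two maximal faces share at most one arc; your explicit verification of (M1)--(M3), of the $\Zmap$-connectedness of $\cH^{\textbf{C}_{2k+1}}$ via the arc walk $w_0,\dots,w_{2k+1}$, and of the connectedness of $G(\cH^{\textbf G})$ using an odd closed walk are all sound and are precisely what the paper leaves implicit.

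One small caveat: your reduction to connected $\textbf G$ is not quite right as stated. The inclusion $G'\hookrightarrow G$ of a component gives a reduction of $\PCSP(\textbf C_{2k+1},\textbf G)$ \emph{to} $\PCSP(\textbf C_{2k+1},G')$, which is the wrong direction for propagating NP-hardness. The fix is easy and in the spirit of your remark: since $\textbf C_{2k+1}^{\,n}$ is connected (tensor product of connected non-bipartite graphs), every polymorphism lands in a single component of $\textbf G$, so $\Pol(\textbf C_{2k+1},\textbf G)$ decomposes as a disjoint union of the minions $\Pol(\textbf C_{2k+1},G_j)$ over the non-bipartite components $G_j$; the minion homomorphisms from Theorem~\ref{Main Theorem 2} for each $G_j$ then glue to one on the whole minion with the same bounded-essential-arity and no-constant properties. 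The paper does not address this point either, and the intended reading is presumably that $\textbf G$ is connected.
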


\smallskip

\textbf{The hardness of $\PCSP(\textbf H_2, \textbf E)$.}
Another application of Theorem~\ref{Main Theorem 2} with ternary relations is defined as follows.
Let $\textbf H_i=\{[i], R^{\textbf H_i}\}$ be the relational structure where $R^{\textbf H_i}$ is a ternary relation that contains all tuples except for the constant tuples. The NP-hardness of $\PCSP(\textbf H_2, \textbf H_k)$ was proved in \cite{Din2005} for any $k \geq 2$.
Let $\textbf E= (\{1,2,3\}; R^{\textbf E})$ be a relational structure with one ternary relation where
$$
R^{\textbf E} = R^{\textbf H_2} \cup \{(a,b,c)\mid |\{a,b,c\}|=3\}.
$$
The six maximal faces of $\cH^\textbf E$ are obtained from $\{(1,2,1),(1,2,2),(1,2,3)\}$ by permutations of the coordinate positions.
Then any two maximal faces in $\cH^{\textbf E}$ have at most $1$ common tuple. The structures $\textbf H_2$ and $\textbf E$ are both connected and $\Zmap$-connected. There is no periodic tuple in both $R^{\textbf H_2}$ and $R^{\textbf E}$. 

\begin{corollary} {\rm \cite{Din2005}}
$\PCSP(\textbf H_2, \textbf E)$ is NP-hard.
\end{corollary}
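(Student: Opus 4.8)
The plan is to apply Theorem~\ref{Main Theorem 2} to the template $(\textbf H_2,\textbf E)$, taking as $\rel$ the single ternary relation of the template (trivially pp-definable by the atomic formula $R(x_1,x_2,x_3)$) and as $\cH^{\textbf H_2},\cH^{\textbf E}$ the standard simplicial complexes whose faces are the subsets of $R^{\textbf H_2}$, resp.\ $R^{\textbf E}$, of the form $A_1\tm A_2\tm A_3$, exactly as in the cyclic‑relation construction of $\Zmap$-templates described above. First I would record that $(\textbf H_2,\textbf E)$ is a genuine PCSP template: $R^{\textbf H_2}\sse R^{\textbf E}$ by the definition of $R^{\textbf E}$, so the inclusion $\{1,2\}\hookrightarrow\{1,2,3\}$ is a homomorphism $\textbf H_2\to\textbf E$. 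Next, since both $R^{\textbf H_2}$ and $R^{\textbf E}$ are cyclic (the set of triples with three distinct entries is shift‑invariant, and so is the set of non‑constant triples over $\{1,2\}$), that construction equips the template with the coordinate‑shift permutations $\Zmap^{\textbf H_2},\Zmap^{\textbf E}$, and conditions (M1), (M2) hold automatically. For (M3) I would use that the arity $r=3$ is prime, so the only periodic tuples are constant ones, and neither relation contains a constant tuple; hence $\Zmap^i(\sigma)\ne\sigma$ for every face $\sigma$ and $i\in\{1,2\}$, while $\Zmap^3(\sigma)=\sigma$, so $(\textbf H_2,\textbf E)$ is a $\Zmap$-template of order $3$.

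Then I would check the connectivity hypotheses. For $G(\cH^{\textbf H_2})$ I would enumerate the six tuples of $R^{\textbf H_2}$ and its six (size‑two) maximal faces, observe that the corresponding edges form a single $6$-cycle on these tuples, and conclude that $G(\cH^{\textbf H_2})$ is connected; connectedness immediately yields $\Zmap$-connectedness, since there is then a path from $t$ to $\Zmap^{\textbf H_2}(t)$ for every vertex $t$. For $G(\cH^{\textbf E})$ I would argue connectedness similarly, by exhibiting for each of the twelve tuples of $R^{\textbf E}$ a maximal face it shares with a neighbouring tuple and checking that one can walk between any two tuples.

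The remaining and, I expect, most delicate step is the overlap condition on maximal faces of $\cH^{\textbf E}$. I would first classify them: since the element $3$ can appear in a tuple of $R^{\textbf E}$ only when that tuple has three distinct entries, any face on which some coordinate takes the value $3$ must have its other two coordinates fixed, and no face may contain a constant tuple; a short case analysis then shows that every maximal face has the form $A_1\tm A_2\tm A_3\sse R^{\textbf E}$ with exactly one $A_i$ equal to $\{1,2,3\}$ and the other two equal to the singletons $\{1\}$ and $\{2\}$ in some order. These are precisely the six faces obtained from $\{(1,2,1),(1,2,2),(1,2,3)\}$ by permuting coordinate positions. Finally, for two distinct such faces a common tuple must match both of their fixed‑coordinate patterns; distinguishing the cases according to whether their free coordinates coincide shows that the two patterns are either incompatible (empty intersection) or pin down a single tuple. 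Hence any two maximal faces of $\cH^{\textbf E}$ share at most one tuple, and Theorem~\ref{Main Theorem 2} gives that $\PCSP(\textbf H_2,\textbf E)$ is NP-hard. The bookkeeping for the $\Zmap$-template axioms and for the connectivity of the two graphs is routine; the face classification and the pairwise‑intersection check are where the actual content lies.
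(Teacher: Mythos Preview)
Your proposal is correct and follows exactly the paper's approach: the paper's proof consists of the paragraph preceding the corollary, which records precisely the same verifications (cyclic relations, no periodic tuples, connectedness of both complexes, the six maximal faces of $\cH^{\textbf E}$ obtained by coordinate permutations from $\{(1,2,1),(1,2,2),(1,2,3)\}$, and the pairwise intersection bound) and then invokes Theorem~\ref{Main Theorem 2}. Your write-up simply supplies the details the paper leaves implicit.
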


\subsection{New hardness results through Theorem~\ref{Main Theorem 2}}
In this subsection, we prove several new hardness results that can be obtained from Theorem~\ref{Main Theorem 2}. Note that for the following results, we use the same mapping $\mu$ as was defined after Proposition~\ref{no minion homomorphism for the AGCP}.

\smallskip

\textbf{The hardness of $\PCSP(\bA, \bA_3)$.}
Let $\bA_n=(\{0,1,\ldots,n-1\};R^{\textbf A_n})$ be a relational structure for $n \geq 2$ where $R^{\bA_n}$ is a ternary relation containing all tuples $(a,b,c)$ such that the set $\{a,b,c\}$ has a unique maximum element, see also \cite{Nak2023}. Also, let $\bA=(\{0,1,2\};R^\bA)$ be a relational structure where $R^\bA$ is a ternary relation given by 
\[
R^\bA=R^{\bA_3} \setminus \{(1,1,2),(1,2,1),(2,1,1),(1,0,2),(0,2,1),(2,1,0)\}.
\]
The maximal faces of $\cH^{\bA_3}$ can be found in Fig.~\ref{fig:complexes}(a) (the edges and shaded rhombuses).

As is easily seen, any two faces in $\cH^{\bA_3}$ have at most $1$ common tuple. It is also true for $\bA$ since the set of tuples of $R^\bA$ is a subset of that of $R^{\bA_3}$. The structures $\bA$ and $\bA_3$ are both connected. There is no periodic tuple in both $R^\bA$ and $R^{\bA_3}$. By Theorem~\ref{Main Theorem 2} we obtain the following result.

\begin{corollary}
$\PCSP(\bA, \bA_3)$ is NP-hard.
\end{corollary}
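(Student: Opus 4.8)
The plan is to check the hypotheses of Theorem~\ref{Main Theorem 2} for the pair $(\bA,\bA_3)$, using the cyclic-shift construction of $\Zmap$-templates described after Proposition~\ref{no minion homomorphism for the AGCP}. First I would set up the template. Since $R^\bA\subseteq R^{\bA_3}$, the identity map is a homomorphism $\bA\to\bA_3$, so $(\bA,\bA_3)$ is a valid PCSP template. Both relations are cyclic: $R^{\bA_3}$ is cyclic because the property "exactly one coordinate attains the strict maximum" is invariant under cyclic shifts of the coordinate positions, and $R^\bA$ is obtained from $R^{\bA_3}$ by deleting the two cyclic orbits $\{(1,1,2),(1,2,1),(2,1,1)\}$ and $\{(1,0,2),(0,2,1),(2,1,0)\}$, hence is again cyclic. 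Thus the construction applies with $\Zmap^\bA(a,b,c)=(c,a,b)$ and $\Zmap^\bA(a,b,c)=(c,a,b)$ on $R^{\bA_3}$, and with the simplicial complexes whose faces are the boxes $X_1\tm X_2\tm X_3$ contained in the respective relation (Lemma~\ref{lem:tolerance}). Conditions (M1) and (M2) are immediate, and (M3) with order $\ell=3$ (a prime) holds because neither $R^\bA$ nor $R^{\bA_3}$ contains a periodic tuple: a periodic ternary tuple is constant, and a constant tuple has no coordinate attaining a strict maximum.

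Next I would verify the connectivity conditions. For $\cH^\bA$ being $\Zmap$-connected it suffices that $G(\cH^\bA)$ be connected, and this is a direct finite check: listing the nine tuples of $R^\bA$ (the three cyclic shifts each of $(0,0,1)$, $(0,0,2)$, and $(0,1,2)$) and their adjacencies, one finds they form a single cycle. For $G(\cH^{\bA_3})$ being connected, note that every face of $\cH^\bA$ is a face of $\cH^{\bA_3}$, so the nine tuples of $R^\bA$ already lie in one component of $G(\cH^{\bA_3})$; and each of the remaining six tuples of $R^{\bA_3}$ differs in a single coordinate from one of those nine, with both endpoints in $R^{\bA_3}$, hence is adjacent to it (for instance $(1,1,2)\sim(0,1,2)$, $(2,1,1)\sim(2,0,1)$, $(1,0,2)\sim(0,0,2)$, $(2,1,0)\sim(2,0,0)$, $(0,2,1)\sim(0,2,0)$, and $(1,2,1)\sim(0,2,1)$).

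The main step, and the main obstacle, is showing that any two maximal faces of $\cH^{\bA_3}$ have at most one common tuple. Here I would first establish that the maximal faces are exactly those depicted in Figure~\ref{fig:complexes}(a): the three ``rhombuses'' obtained from $\{2\}\tm\{0,1\}\tm\{0,1\}$ by cyclically permuting the coordinate positions (each of size $4$), together with the size-$2$ faces not contained in any of these rhombuses. Completeness of this list follows from a short case analysis on box shapes $|X_1|\cdot|X_2|\cdot|X_3|$ using $|A|=3$ and the ``unique strict maximum'' description of $R^{\bA_3}$: no box with some $|X_i|=3$ lies in $R^{\bA_3}$ (ruling out sizes $3$, $6$ and above), and the only size-$4$ boxes in $R^{\bA_3}$ are the three rhombuses. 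Given this, the three rhombuses are pairwise disjoint, because their defining patterns force disjoint coordinate sets in the first, second, or third position; two distinct size-$2$ faces meet in at most one tuple since a size-$2$ face is determined by its two tuples; and a \emph{maximal} size-$2$ face meets each rhombus in at most one tuple, since a size-$2$ face with both endpoints inside a rhombus would be contained in that rhombus and hence not maximal. With all hypotheses verified, Theorem~\ref{Main Theorem 2} yields that $\PCSP(\bA,\bA_3)$ is NP-hard. The delicate point to get right is the completeness of the enumeration of maximal faces of $\cH^{\bA_3}$; the connectivity checks, though finite, are routine.
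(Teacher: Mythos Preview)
Your proposal is correct and follows exactly the paper's approach: verify the hypotheses of Theorem~\ref{Main Theorem 2} for the cyclic-shift $\Zmap$-template on $(\bA,\bA_3)$, namely cyclicity and absence of periodic tuples (giving (M1)--(M3) with $\ell=3$), $\Zmap$-connectedness of $\cH^\bA$, connectedness of $G(\cH^{\bA_3})$, and the ``at most one common vertex'' condition on maximal faces of $\cH^{\bA_3}$. The paper states these facts without detail (referring to Fig.~\ref{fig:complexes}(a) for the maximal faces), while you supply the verifications; apart from a harmless typo (your second $\Zmap^\bA$ should be $\Zmap^{\bA_3}$), everything checks out.
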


Note that $\PCSP(\bA_2, \bA_3)$ was conjectured to be NP-hard in~\cite{barto2020symmetric}; the conjecture was confirmed in \cite{Kro2025}.

\begin{remark}
Let us define a relational structure $\bA'=(\{0,1,2\}; R^{\bA'})$ where $R^{\bA'}$ is the maximum subset of $R^\bA$ with only symmetric tuples, i.e.,
$$
R^{\bA'} = R^\bA \setminus \{(0,1,2),(2,0,1),(1,2,0) \} = \{ (0,0,1),(0,1,0),(1,0,0),(0,0,2),(0,2,0),(2,0,0) \}.
$$
Clearly, $\bA'$ and $\bA_2$ are homomorphically equivalent. So $\PCSP(\bA_2, \bA_3)$ and $\PCSP(\bA', \bA_3)$ are polynomial time reducible to one another. The same technique that worked for $\PCSP(\bA, \bA_3)$ does not work for $\PCSP(\bA_2, \bA_3)$ nor $\PCSP(\bA', \bA_3)$ since $\bA_2$ and $\bA'$ are neither connected nor $\Zmap$-connected.
\end{remark}

\textbf{The hardness of $\PCSP(\textbf B_n, \textbf D_n)$.}
Before introducing another application, let us define a cyclic tuple. A tuple $t=(t_1, \ldots, t_n)$ on $\{0,1\}$ is said to be a \textit{cyclic tuple} if there exist $i,j$ with $1 \leq i \leq j \leq n$ and $j-i < n-1$ such that one of the following conditions hold:
\begin{itemize}
\item 
Either $t_k = 1$ for any $i \leq k \leq j$ and $t_k = 0$ otherwise, or
\item 
$t_k = 0$ for any $i \leq k \leq j$ and $t_k = 1$ otherwise.
\end{itemize}
In the first case, $t$ is said to be the $(i,j)$-cyclic tuple, and in the second case the $(j,i)$-cyclic tuple. If the number of $1$'s in $t$ is equal to $r$, then $t$ is called an $r$-cyclic tuple.

Now, let $\textbf B_n=(\{0,1\}; R^{\textbf B_n})$ and $\textbf D_n=(\{0,1\}; R^{\textbf D_n})$ be relational structures with one relations of arity $n \geq 3$ where $R^{\textbf B_n}$ only contains all $1$-cyclic and $2$-cyclic tuples, and $R^{\textbf D_n}$ contains all cyclic tuples. (Note that we do not have any $0$-cyclic or $n$-cyclic tuples according to the definition.) Fig.~\ref{fig:complexes}(b) shows the simplicial complex of the structure $\mathbf D_4$.

\begin{corollary}
$\PCSP(\textbf B_n, \textbf D_n)$ is NP-hard.
\end{corollary}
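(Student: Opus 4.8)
The plan is to deduce this from Theorem~\ref{Main Theorem 2}, applied to $(\textbf{B}_n,\textbf{D}_n)$ with $R$ the single defining relation of each structure and with $\mathcal{H}^{\textbf{B}_n},\mathcal{H}^{\textbf{D}_n}$ the simplicial complexes whose faces are the boxes $X_1\times\dots\times X_n$ contained in $R^{\textbf{B}_n},R^{\textbf{D}_n}$, as in the construction following Lemma~\ref{lem:tolerance}. First one checks that this really is a PCSP template: every $1$- or $2$-cyclic tuple is cyclic, so $R^{\textbf{B}_n}\subseteq R^{\textbf{D}_n}$ and the identity map witnesses $\textbf{B}_n\to\textbf{D}_n$. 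A cyclic shift of an $r$-cyclic tuple is again $r$-cyclic, hence $R^{\textbf{B}_n}$ and $R^{\textbf{D}_n}$ are cyclic relations in the sense of the ``Constructing $\mu$-templates'' paragraph, and I would take $\mu^{\textbf{B}_n},\mu^{\textbf{D}_n}$ to be the cyclic shift; then (M1) and (M2) hold automatically by that construction.

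Next I would verify (M3) by showing that neither $R^{\textbf{B}_n}$ nor $R^{\textbf{D}_n}$ contains a periodic tuple: if a tuple is fixed by the shift by $k$ for some $1\le k\le n-1$, its set of $1$-coordinates is fixed by the shift by $d=\gcd(k,n)<n$, hence is a union of at least two cosets of the order-$(n/d)$ cyclic subgroup of $\mathbb Z/n$, and such a set is never a nonempty proper cyclic interval. By the observation in that paragraph this gives (M3) (with $\ell=n$, which may be replaced by a prime divisor of $n$), so $(\textbf{B}_n,\textbf{D}_n)$ is a $\mu$-template. For the connectivity hypotheses I would describe the two graphs explicitly. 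In $\mathcal{H}^{\textbf{B}_n}$ the vertices are exactly the $n$ weight-$1$ tuples $e_i$ and the $n$ weight-$2$ tuples $e_i+e_{i+1}$; each $e_i$ shares with $e_i+e_{i+1}$ the box having $1$ in coordinate $i$, an arbitrary value in coordinate $i+1$, and $0$ elsewhere, and likewise with $e_{i-1}+e_i$, while any other box joining two vertices would also contain a weight-$3$ tuple, which is not in $R^{\textbf{B}_n}$; so $G(\mathcal{H}^{\textbf{B}_n})$ is a $2n$-cycle, in particular connected, hence $\mu$-connected. For $G(\mathcal{H}^{\textbf{D}_n})$ the same boxes are faces, and in addition any $r$-cyclic tuple with $r\ge 2$ shares a box with the $(r-1)$-cyclic tuple obtained by deleting an endpoint of its $1$-interval; iterating, every vertex is joined to a weight-$1$ tuple, so $G(\mathcal{H}^{\textbf{D}_n})$ is connected.

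The remaining, and I expect hardest, step is to show that any two maximal faces of $\mathcal{H}^{\textbf{D}_n}$ have at most one common vertex. The structural half goes as follows: in a box $X_1\times\dots\times X_n\subseteq R^{\textbf{D}_n}$ the set $C$ of coordinates fixed to $1$ is nonempty (else the all-$0$ tuple lies in the box), and a coordinate outside $C$ can be free only if adjoining it to $C$ still yields a cyclic interval, i.e.\ only if it is one of the two coordinates adjacent to the interval $C$; thus a maximal face is governed by a cyclic interval $C$ (of length at most $n-3$) together with its two neighbouring coordinates. One then has to check that two distinct such faces cannot share two cyclic intervals, which is a finite case analysis on how the governing intervals and their end-coordinates can overlap; this is genuinely the delicate point of the argument, and the $\textbf{D}_4$ complex in Fig.~\ref{fig:complexes}(b) is the small instance to keep in mind while carrying it out (a restriction to a suitable pp-definable subrelation may be needed to make the condition go through cleanly). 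Once this is verified, all hypotheses of Theorem~\ref{Main Theorem 2} are met and NP-hardness of $\PCSP(\textbf{B}_n,\textbf{D}_n)$ follows.
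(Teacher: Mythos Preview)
Your overall plan mirrors the paper's proof exactly: both verify the hypotheses of Theorem~\ref{Main Theorem 2} for the cyclic-shift $\mu$-template, and both identify the key step as showing that any two maximal faces of $\cH^{\textbf D_n}$ share at most one vertex. Your preliminary checks (absence of periodic tuples, connectivity of $G(\cH^{\textbf B_n})$ and $G(\cH^{\textbf D_n})$) are argued more carefully than in the paper, which simply asserts them.

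The gap is in the final step, and it is real. Your structural description of the maximal faces is correct --- each is the box with a cyclic interval $C$ fixed to $1$, its two neighbours free, and the rest fixed to $0$ --- but the intersection claim you plan to establish by a ``finite case analysis'' actually \emph{fails} for $n\ge 5$. For $n=5$ the boxes
\[
\sigma=\{1\}\times\{1\}\times\{0,1\}\times\{0\}\times\{0,1\}\quad(C=\{1,2\}),\qquad
\sigma'=\{1\}\times\{0,1\}\times\{0\}\times\{0\}\times\{0,1\}\quad(C=\{1\})
\]
are both maximal faces (one checks directly that no fixed coordinate can be freed without producing a non-cyclic tuple), yet $\sigma\cap\sigma'=\{11000,\,11001\}$ has two elements. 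In the paper's own notation these are $\sigma_{1,2}$ and $\sigma_{1,1}$, which visibly share the $(1,2)$-cyclic and the $(5,2)$-cyclic tuples. The paper glosses over this with ``It is not hard to see'', so the same gap is present there; the claim does hold for $n=3,4$, which is presumably why the $\textbf D_4$ picture looked reassuring. Your instinct that ``a restriction to a suitable pp-definable subrelation may be needed'' is therefore not a hedge but a necessity: without supplying such a restriction, or a direct argument that $\pi_1(\cH^{\textbf D_n})$ is free (or Abelian-bounded) despite the overlap, neither your sketch nor the paper's proof is complete for $n\ge 5$.
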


\begin{proof}
The structures $\textbf B_n$ and $\textbf D_n$ are both connected (and therefore $\Zmap$-connected), and there is no periodic tuple in both $R^{\textbf B_n}$ and $R^{\textbf D_n}$. To prove the NP-hardness of $\PCSP(\textbf B_n, \textbf D_n)$ for $n \geq 3$, 
it remains to show that any two maximal faces in $\cH^{\textbf D_n}$ have at most $1$ tuple in common.

First, we define an operator $\oplus_n$ on $[n]$ where $a \oplus_n b:= a + b \bmod n$ where $0$ and $n$ are identified. Similarly, we define an operator $\ominus$ on $[n]$ where $a \ominus_n b:= a - b \bmod n$ where $0$ and $n$ are identified. For any $i,j$ with $1 \leq i \leq j \leq n$ and $j - i < n-2$, let $\sigma_{i,j}$ be the face that contains the $(i,j)$-cyclic, $(i,j \oplus_n 1)$-cyclic, $(i \ominus_n 1, j)$-cyclic, and $(i \ominus_n 1, j \oplus_n 1)$-cyclic tuples. Suppose $\sigma_{j,i}$ be the face where all $0$'s and $1$'s in all tuples of $\sigma_{i,j}$ are swapped. It is not hard to see that $\sigma_{i,j}$ and $\sigma_{j,i}$ for $1 \leq i \leq j \leq n$ and $j-i < n-2$ are all maximal faces of $\cH^{\textbf D_n}$ and any two maximal faces in $\cH^{\textbf{D}_n}$ have at most $1$ common tuple. Hence
$\PCSP(\textbf B_n, \textbf D_n)$ is NP-hard for any $n \geq 3$.
\end{proof}


\begin{figure}[t]
\centerline{\includegraphics[height=5cm, keepaspectratio]{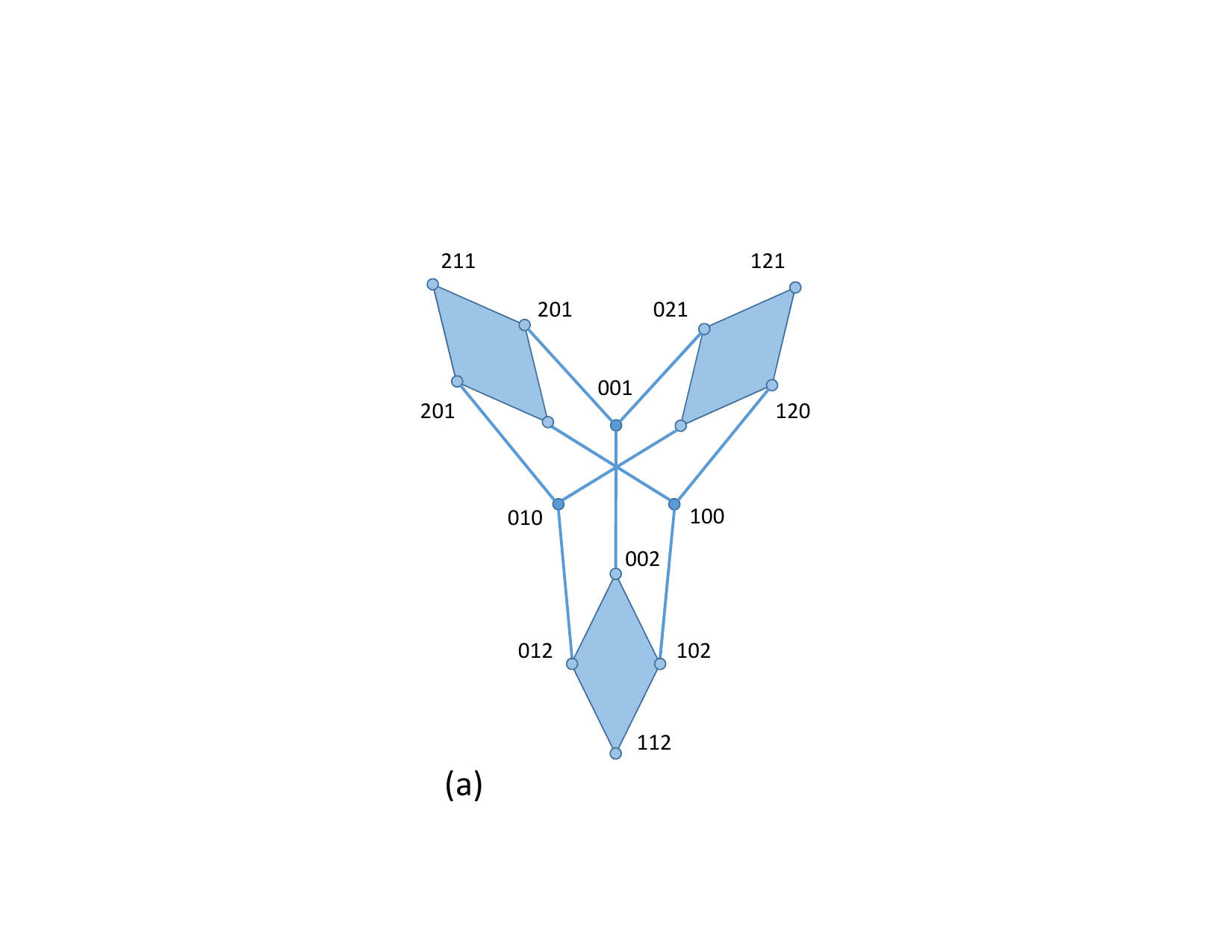}\hspace{-2cm}\includegraphics[height=5cm, keepaspectratio]{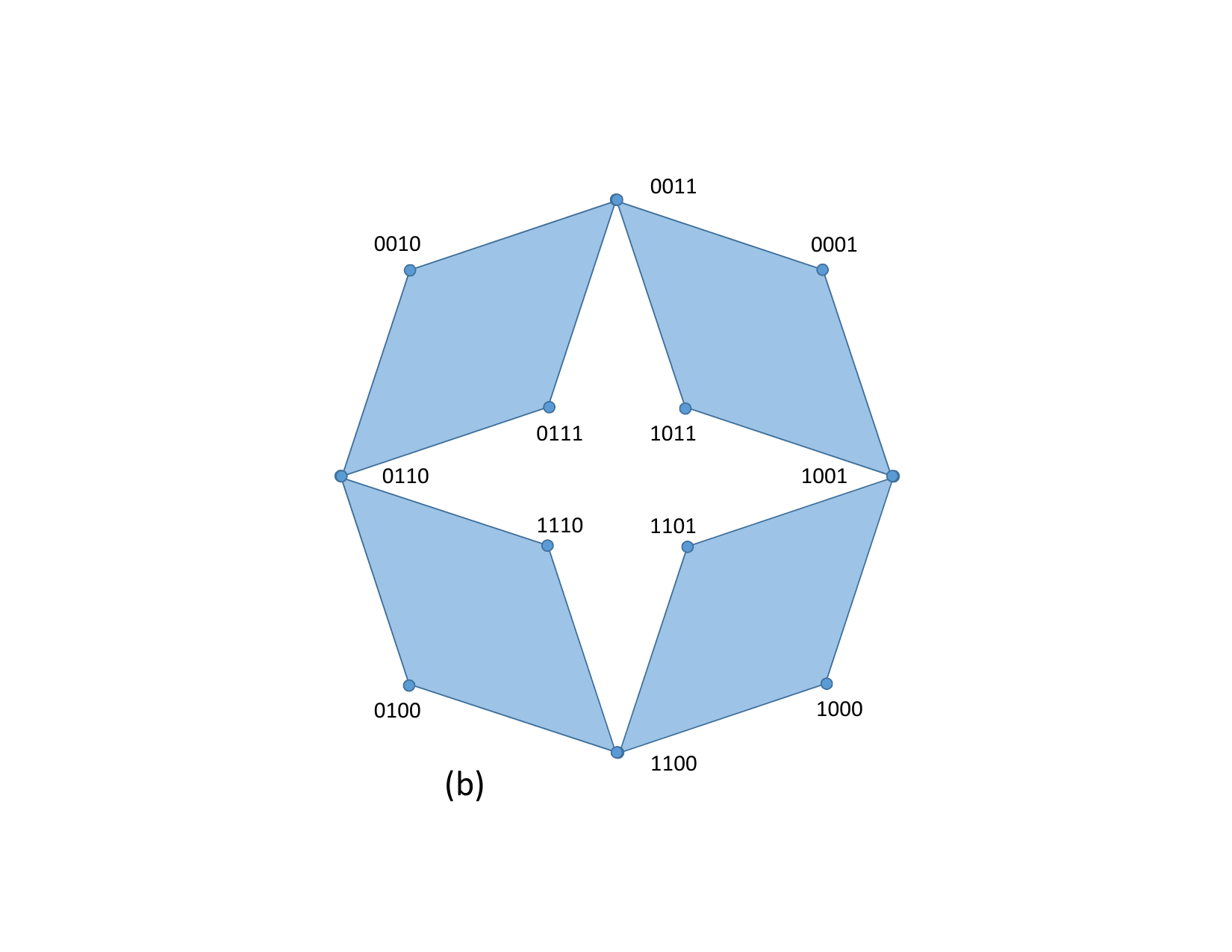}}
\vspace*{-1cm}
\caption{}\label{fig:complexes}
\end{figure}

\smallskip


\bibliographystyle{plainurl}
\bibliography{refs.bib}

\end{document}